\DeclareFontShape{T1}{lmr}{b}{sc}{<->ssub*cmr/bx/sc}{}
\DeclareFontShape{T1}{lmr}{bx}{sc}{<->ssub*cmr/bx/sc}{}
\newtheorem{theorem}{Theorem}[section]
\newtheorem{lemma}[theorem]{Lemma}
\newtheorem{fact}[theorem]{Fact}
\crefname{fact}{Fact}{Facts}
\theoremstyle{definition}
\newtheorem{definition}[theorem]{Definition}
\newtheorem{remark}[theorem]{Remark}
\DeclareMathOperator{\tr}{tr}
\DeclareMathOperator{\poly}{poly}
\DeclareMathOperator{\sgn}{sgn}
\DeclareMathOperator{\diag}{diag}
\newcommand{\lambdamin}{\lambda_{\min}}
\newcommand{\ketbra}[2]{\ket{#1}\!\bra{#2}}
\DeclareMathOperator{\SWAP}{SWAP}
\newcommand{\eps}{\varepsilon}
\newcommand{\YES}{\text{YES}}
\newcommand{\NO}{\text{NO}}
\newcommand{\nth}{^\text{th}}
\newcommand{\reals}{\mathbb R}
\newcommand{\class}[1]{\ensuremath{\mathsf{#1}}\xspace}
\mathchardef\mhyphen="2D 
\newcommand{\PTIME}{\class{P}}
\newcommand{\NP}{\class{NP}}
\newcommand{\QMA}{\class{QMA}}
\newcommand{\StoqMA}{\class{StoqMA}}
\newcommand{\prob}[1]{\textup{\textsc{#1}}\xspace}
\newcommand{\kLH}[1][k]{\ensuremath{#1}\prob{-LH}}
\newcommand{\sLH}[1][\calS]{\ensuremath{#1}\prob{-LH}}
\newcommand{\prodLH}{\prob{prodLH}}
\newcommand{\sprodLH}[1][\calS]{\ensuremath{#1\mhyphen}\prob{prodLH}}
\newcommand{\threeColoring}{\prob{3-Coloring}}
\newcommand{\qmc}{\prob{QMC}}
\newcommand{\qmcLong}{\prob{Quantum Max-Cut}}
\newcommand{\qmcProd}{\prob{prodQMC}}
\newcommand{\maxcut}{\prob{Max-Cut}}
\newcommand{\maxcutop}{\operatorname{MC}}
\newcommand{\kmaxcut}[1][k]{\ensuremath{\prob{MC}_{#1}}}
\newcommand{\kmaxcutLong}[1][k]{\ensuremath{\prob{Max-Cut}_{#1}}}
\newcommand{\kmaxcutop}[1][k]{\operatorname{MC}_{#1}}
\newcommand{\Wlinearmaxcut}[1][W]{\ensuremath{\prob{MC}^{\scriptscriptstyle\prob{L}}_{#1} }}
\newcommand{\WlinearmaxcutLong}[1][W]{\ensuremath{#1\!\prob{-linear-Max-Cut}}}
\newcommand{\Wlinearmaxcutop}[1][W]{\ensuremath{\operatorname{MC^{\scriptscriptstyle L}_{#1}} }}
\newcommand{\calC}{\mathcal{C}}
\newcommand{\calF}{\mathcal{F}}
\newcommand{\calS}{\mathcal{S}}
\DeclarePairedDelimiter\absd{\lvert}{\rvert}
\DeclarePairedDelimiter\normd{\lVert}{\rVert}
\DeclarePairedDelimiter\parend{\lparen}{\rparen}
\DeclarePairedDelimiter\cbracd{\lbrace}{\rbrace}
\DeclarePairedDelimiter\brac{\lbrack}{\rbrack}
\newcommand{\abs}[1]{\absd*{#1}}
\newcommand{\norm}[1]{\normd*{#1}}
\newcommand{\paren}[1]{\parend*{#1}}
\newcommand{\cbrac}[1]{\cbracd*{#1}}
\newcommand{\bO}[1]{\operatorname*{O}\paren{#1}}
\newcommand{\bOm}[1]{\operatorname*{\Omega}\paren{#1}}
\newcommand{\bT}[1]{\operatorname*{\Theta}\paren{#1}}
\newcommand{\eg}{e.g.\xspace}
\title{Complexity Classification of Product State Problems for Local Hamiltonians}
\author{
    John Kallaugher\\\small{\textsl{Sandia National Laboratories}}\\\small{\texttt{\href{mailto:jmkall@sandia.gov}{jmkall@sandia.gov}}}\and
    Ojas Parekh\\\small{\textsl{Sandia National Laboratories}}\\\small{\texttt{\href{mailto:odparek@sandia.gov}{odparek@sandia.gov}}}\and
    Kevin Thompson\\\small{\textsl{Sandia National Laboratories}}\\\small{\texttt{\href{mailto:kevthom@sandia.gov}{kevthom@sandia.gov}}}\and
    Yipu Wang\\\small{\textsl{Sandia National Laboratories}}\\\small{\texttt{\href{mailto:yipu.wang@gmail.com}{yipu.wang@gmail.com}}}\and
    Justin Yirka\\\small{\textsl{Sandia National Laboratories}}\\\small{\textsl{The University of Texas at Austin}}\\\small{\texttt{\href{mailto:yirka@utexas.edu}{yirka@utexas.edu}}}
}
\begin{document}
\date{}

\maketitle

\begin{abstract}
\noindent
Product states, unentangled tensor products of single qubits, are a ubiquitous
ansatz in quantum computation, including for state-of-the-art Hamiltonian
approximation algorithms.  A natural question is whether we should expect to
efficiently solve product state problems on any interesting families of
Hamiltonians.

We completely classify the complexity of finding minimum-energy product states
for Hamiltonians defined by any fixed set of allowed 2-qubit interactions.
Our results follow a line of work classifying the complexity of solving
Hamiltonian problems and classical constraint satisfaction problems
based on the allowed constraints. We prove
that estimating the minimum energy of a product state is in $\PTIME$ if and
only if all allowed interactions are 1-local, and $\NP$-complete otherwise.
Equivalently, any family of non-trivial two-body interactions generates
Hamiltonians with $\NP$-complete product-state problems.  Our hardness
constructions only require coupling strengths of constant magnitude.

A crucial component of our proofs is a collection of hardness results for a new
variant of the \prob{Vector Max-Cut} problem, which should be of independent
interest.  Our definition involves sums of distances rather than squared
distances and allows linear stretches.

We similarly give a proof that the original \prob{Vector Max-Cut} problem is
$\NP$-complete in 3 dimensions.  This implies that optimizing over product
states for \prob{Quantum Max-Cut} (the quantum Heisenberg model) is
$\NP$-complete, even when every term is guaranteed to have positive unit
weight.
\end{abstract}

\section{Introduction}\label{sec:intro}
Product states, unentangled tensor products of single-qubit states, have served
as an effective focus for better understanding quantum phenomena.
Because general quantum states cannot be described efficiently, approximation algorithms
must be restricted to output some subset of states, an ansatz.
Mean-field
approaches are common as first steps in statistical mechanics, and recent
approximation algorithms for extremal energy
states of local Hamiltonians have relied on proving that product
states provide good approximations in particular regimes \cite{gharibian2012approximation,harrowbrandao2016product,bravyi2019approximation,
	gharibian_et_al:LIPIcs:2019:11246,parekh_et_al:LIPIcs.ESA.2021.74,parekh2022optimal}.
In fact, because in some natural regimes the ground states are rigorously
well-approximated by product states \cite{harrowbrandao2016product},
optimal approximation algorithms for local Hamiltonians on arbitrary interaction graphs
must be capable of finding good product states.
Understanding product state optimization is essential for understanding the
complexity of Hamiltonian approximation generally.

Product states are a natural intermediate between classical and quantum states,
allowing for superposition but not entanglement.  Unlike general quantum
states, they have succinct classical descriptions: a single-qubit pure state
can be specified by two complex numbers, and an $n$-qubit pure product state
can be specified by $2n$ complex numbers.  One could consider ``more quantum''
intermediates, in the form of reduced states of two or more
qubits.  However, verifying the consistency of a set of quantum marginals is a
$\QMA$-complete problem, even for 2-qubit reduced states
\cite{liu2006consistency,broadbent2022qma}.  Therefore, product states are
uniquely useful when optimizing directly over state vectors.

We study the following question: for a family of Hamiltonians defined by
a given set of allowed interactions, what is the complexity of computing the
extremal energy over product states?
Additionally, how does the complexity of optimizing over product states
relate to that of optimizing over general states?
For example, for a $\QMA$-hard local Hamiltonian, must finding the optimal product state in turn be
$\NP$-hard?\footnote{The product state problem is always in $\NP$ since product
states have succinct classical descriptions with which we can compute the expected
energy contribution from each local Hamiltonian term in time polynomial in the size of the term.}

This question follows a long line of work classifying the
complexity of constraint satisfaction problems (CSPs) based on the sets of
allowed constraints, clauses, or interactions between variables.
In particular, the dichotomy theorem of Schaefer \cite{schaefer1978complexity}
showed that for any set of allowed Boolean constraints, the family of CSPs is either
efficiently decidable or is $\NP$-complete.
In the context of quantum problems, Cubitt and Montanaro \cite{CM16_hamiltonians}
introduced a similar classification of ground state energy problems for 2-local Hamiltonians,
showing that for any fixed set of allowed 2-qubit interactions,
the \kLH[2] problem is either in $\PTIME$ or $\NP$-, $\StoqMA$-, or $\QMA$-complete
(the $\StoqMA$ case relies on the concurrent work of
Bravyi and Hastings \cite{bravyi2017complexity}).
We briefly survey some of this line of classical and quantum work
in \nameref{ssec:relatedWork} below.

While the complexity of finding the extremal states (e.g.\ the ground state) of
2-local Hamiltonians is well understood, the complexity of finding optimal
product state solutions has been only sparsely studied~\cite{hwang2023unique}.
The only $\NP$-hardness results for such problems are based on mapping a
classical problem to a diagonal 0-1 valued Hamiltonian \cite{wocjan20032}.

An additional motivation for our study is the hope of developing new methods for
identifying families of local Hamiltonians
for which problems involving \emph{general} ground states are \emph{not} hard.
While a complete complexity classification for the general $\kLH[2]$ problem is
known, more refined attempts at classification
which take into account restrictions on the sign of the weights or geometry
of the system are currently incomplete~\cite{PM17_hamiltonians}.
Developing algorithms for product states is a ``mostly classical'' problem that
is easier to analyze, and progress involving product states may inform our
expectations regarding general states.

In this work, we completely classify the complexity of finding optimal product
states for families of 2-local Hamiltonians.  In fact, we find the complexity
of the product state problem is fully determined by the complexity of the
general local Hamiltonian problem: if the general problem is $\NP$-hard, the
product state problem is $\NP$-complete, and otherwise it is in $\PTIME$.  To
arrive at our results, we study a variant of the \prob{Vector Max-Cut} which
should be of independent interest especially to the optimization community.  As
a corollary to our classification theorem, we give the first published proof
that estimating optimal product state energies in the \qmcLong{} model is
$\NP$-complete,
and we show hardness holds even for unweighted Hamiltonians.\footnote{
	A more complex unpublished proof based on large graph cycles was known
	earlier by Wright~\cite{johnWright}.}

\subsection{Our Contributions}
Formal definitions are given in \cref{sec:prelim}.  A $k$-local Hamiltonian is
a sum of Hamiltonian terms each of which only acts non-trivially on at most $k$
qubits, analogous to $k$-variable Boolean clauses.  $\kLH$ denotes the problem
of estimating the ground state energy (the minimum eigenvalue across all
states) of a $k$-local Hamiltonian to inverse-polynomial additive precision.
Given a set of local terms $\calS$, \sLH{} is $\kLH$ restricted to Hamiltonians
such that every term belongs to $\calS$.  Finally, $\prodLH$ and $\sprodLH{}$
are the restrictions of these problems to product states, i.e.\ to minimize
$\braket{\phi|H|\phi}$ where $H$ is the Hamiltonian and $\ket{\phi}$ ranges
over tensor products of single-qubit states.\footnote{
	An earlier version of this work considered \emph{exact} versions of
	product state and graph problems.
	We have now improved our hardness results to hold up to an
	inverse-polynomial additive gap.
	}

\paragraph{2-local $\sprodLH$}
The classification of the general ground state energy problem by Cubitt and
Montanaro \cite{CM16_hamiltonians} completely classifies \sLH{} for any fixed
set $\calS$ of 2-qubit terms, showing it is either in $\PTIME$ or it is one of
$\NP$-, $\StoqMA$-, or $\QMA$-complete.

 In the same vein, we give a complete classification of product state complexity for families
 of 2-local Hamiltonians as a function of the set of
 allowed 2-qubit interactions.
 For any given set $\calS$ of 2-qubit terms, we prove the problem
 \sprodLH{} is either
 in $\PTIME$ or is $\NP$-complete.
 To the best of our knowledge, ours is the first systematic
 inquiry into the complexity of product state problems.

\begin{restatable}{theorem}{prodStatesHard}
\label{thm:prodStatesHard}
    For any fixed set of 2-qubit Hamiltonian terms $\calS$, if every matrix in
    $\calS$ is 1-local then $\sprodLH$ is in $\PTIME$, and otherwise $\sprodLH$
    is $\NP$-complete.
\end{restatable}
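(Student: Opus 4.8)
The plan is to treat the two halves of the dichotomy separately, with the $\PTIME$ case and membership in $\NP$ being routine and essentially all of the work going into the $\NP$-hardness. For the $\PTIME$ direction, suppose every term of $\calS$ is $1$-local, i.e.\ a sum of operators each supported on a single one of its two qubits. Then any Hamiltonian $H$ assembled from $\calS$-terms is, as an operator, a sum $H=\sum_q H_q$ of single-qubit operators, where $H_q$ gathers every contribution acting on qubit $q$. For a product state $\ket\phi=\bigotimes_q\ket{\phi_q}$ we have $\braket{\phi|H|\phi}=\sum_q\braket{\phi_q|H_q|\phi_q}$, a sum of independent terms, each minimized by taking $\ket{\phi_q}$ to be a minimum-eigenvalue eigenvector of the $2\times2$ matrix $H_q$; so the optimal product-state energy is $\sum_q\lambda_{\min}(H_q)$, computable to inverse-polynomial precision in polynomial time. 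Membership of $\prodLH$ (hence $\sprodLH$) in $\NP$ is the argument sketched in the footnote above: a product state is named by $2n$ amplitudes, the energy is a Lipschitz function of them with polynomially bounded Lipschitz constant on a compact domain, so rounding an optimal state to $\poly(n)$ bits gives a succinct witness certifying the energy up to inverse-polynomial error.

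For $\NP$-hardness, suppose some $H^\star\in\calS$ is not $1$-local; write $H^\star=\sum_{a,b\in\{I,X,Y,Z\}}c_{ab}\,\sigma_a\otimes\sigma_b$ with all $c_{ab}\in\reals$, so the interaction block $M=(c_{ab})_{a,b\in\{X,Y,Z\}}$ is nonzero. On a product state with Bloch vectors $r_u,r_v\in S^2$, a weight-$w$ copy of $H^\star$ on the ordered pair $(u,v)$ contributes $w\,(c_{II}+\langle f,r_u\rangle+\langle g,r_v\rangle+r_u^{\mathsf T}Mr_v)$ for fixed $f,g\in\reals^3$, so the product-state energy of an instance is a sum over edges of such expressions. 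Two moves shape which such objectives we can realize: (i) conjugating every term of $\calS$ by a fixed $U\otimes U$ leaves the complexity of $\sprodLH$ unchanged (product states biject, every term is rewritten) and acts on $M$ by a congruence $M\mapsto RMR^{\mathsf T}$ with a common $R\in SO(3)$; and (ii) placing $H^\star$ on both $(u,v)$ and $(v,u)$ with equal weights replaces the effective interaction by its symmetric part $M+M^{\mathsf T}$, while attaching copies of $H^\star$ to fresh ancilla qubits contributes, once the ancillas are optimized out, terms $-\lVert\cdot\rVert$ that are affine in the surviving Bloch vectors inside a Euclidean norm --- exactly the source of the \emph{sums of distances} in our \prob{Vector Max-Cut} variant. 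Diagonalizing $M+M^{\mathsf T}$ by the spectral theorem, I would push $H^\star$ into a canonical form whose per-edge objective is a fixed diagonal bilinear form $r_u^{\mathsf T}Dr_v$ in the unit vectors (the ``linear stretch''), plus linear and distance terms; the residual case where $M$ is essentially antisymmetric, so $M+M^{\mathsf T}=0$ and symmetrization is useless, is handled directly via the cross-product form $r_u^{\mathsf T}Mr_v=\vec d\cdot(r_u\times r_v)$ and reduces to the same variant after a gadget.

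It then remains to prove this \prob{Vector Max-Cut} variant $\NP$-hard, the central technical ingredient. I would reduce from classical \maxcut on a suitable family of instances (graph coloring is an alternative source), the key device being a \emph{pinning gadget}: auxiliary edges --- along the secondary coordinate directions, or attached to a small fixed set of ``reference'' vertices --- whose distance terms penalize any vertex vector that strays from the $\pm$ direction of strongest coupling, so that with gadget weights chosen as suitable $\bO{1}$ constants every near-optimal assignment is essentially $\{\pm\hat e\}$-valued; on such assignments the objective collapses to a weighted classical \maxcut instance, and a standard completeness/soundness gap completes the reduction. All weights used --- in the gadgets and in the copies of $H^\star$ --- have constant magnitude, matching the paper's constant-coupling-strength claim. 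Chaining everything together, classical \maxcut reduces to our \prob{Vector Max-Cut} variant, which reduces to $\sprodLH$ whenever $\calS$ has a non-$1$-local term, while $\sprodLH\in\NP$ always, so $\sprodLH$ is $\NP$-complete.

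I expect the main obstacle to be twofold. First, the normal-form reduction must work for an \emph{arbitrary} non-$1$-local $H^\star$: arbitrary eigenvalues of $M+M^{\mathsf T}$ (equal, distinct, or zero), arbitrary signs, arbitrary $1$-local parts $f,g$, and the antisymmetric case --- all while using only $\bO{1}$-weight copies of $H^\star$ and ancillas and producing exactly an instance of the one fixed variant. Second, and more delicate, is the $\NP$-hardness of that variant: designing pinning gadgets that robustly collapse the continuous sphere onto a line despite the objective using distances rather than squared distances and despite the per-edge linear stretch. The sign of the coupling is a particular nuisance, since classical \maxcut is most naturally encoded by antiferromagnetic interactions, so accommodating a ferromagnetic (or mixed) $H^\star$ likely requires an auxiliary near-bipartite structure or a dedicated reference vertex.
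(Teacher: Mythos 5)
Your overall architecture matches the paper's: the $\PTIME$ case by optimizing each qubit independently; $\NP$ membership via succinct descriptions; symmetrization via $H^{ab}+H^{ba}$ for generic $\calS$; local conjugation acting as $M\mapsto RMR^{\top}$ to diagonalize the correlation matrix; four-qubit ancilla gadgets that cancel the $1$-local parts and, after minimizing over the ancillas, contribute $-\norm{W(r^i-r^j)}$ per edge — exactly the source of sums of \emph{distances} rather than squared distances; and a residual antisymmetric case handled via the cross-product form. All of this is the paper's route, including the reduction of $\sprodLH$ to the stretched linear Vector Max-Cut problem with only $\pm1$ weights.

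The genuine gap is in how you propose to prove the Vector Max-Cut variant $\NP$-hard. Your plan — pinning gadgets that "penalize any vertex vector that strays from the $\pm$ direction of strongest coupling" — presupposes that the stretch matrix $W=\diag(\alpha,\beta,\gamma)$ has a \emph{unique} largest entry, and that is precisely what you cannot assume. The reduction from a general $\calS$ can land on a degenerate $W$: the Heisenberg term $XX+YY+ZZ$ gives $W=\diag(1,1,1)$, and every antisymmetric term gives $W=\diag(1,1,0)$. With a degenerate maximum there is no privileged axis to pin toward; in the isotropic case $W=I$ the objective is fully rotationally symmetric, so no combination of extra vertices or edges can break the tie in the way your sketch requires. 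The paper handles these cases with a qualitatively different reduction, from \threeColoring{} rather than \maxcut{}: each edge is replaced by a clique gadget (a $K_3$ when the top two entries tie, a $K_4$ sharing a global sink vertex when all three are equal), and the key step is a rigidity lemma saying that the \emph{only} assignments achieving the per-clique maximum are a fixed triple of planar vectors at pairwise angle $2\pi/3$, respectively the vertices of a regular inscribed tetrahedron. You do parenthetically mention "graph coloring is an alternative source," but it is not an alternative — it is the required tool for the degenerate $W$'s, which include some of the most natural Hamiltonians (Heisenberg, any antisymmetric term). One further mismatch worth flagging: the $\maxcut$ reduction in the unique-maximum case uses star gadgets with $n^{9}$ ancilla vertices — a polynomial-size combinatorial structure on an unweighted graph — not $\bO{1}$-weight edges; the constant-coupling-strength claim applies only to the Hamiltonian side of the chain, where \Wlinearmaxcut{} is embedded into $\sprodLH$.
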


Additionally, our hardness constructions only require coupling strengths (weights)
of at most constant magnitude.
This is preferable in practice and contrasts with
most known $\QMA$-hardness constructions.

The sets for which \cite{CM16_hamiltonians} shows \sLH{} is in $\PTIME$ are the same for which we show \sprodLH{}
is in $\PTIME$, i.e.\ those containing only 1-local terms.
This immediately implies that a family of 2-local Hamiltonians has
efficiently-computable minimum product state energy if and only if it has
efficiently-computable ground state energy.

\begin{restatable}{corollary}{fLHeasyProdLHeasy}
  \label{corr:fLHeasyProdLHeasy}
  For any fixed set of 2-qubit Hamiltonian terms $\calS$, the problem $\sLH$ is
  in $\PTIME$ if and only if $\sprodLH$ is in $\PTIME$.
\end{restatable}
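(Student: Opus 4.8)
The plan is to derive this corollary directly from \cref{thm:prodStatesHard} together with the Cubitt--Montanaro classification \cite{CM16_hamiltonians}, by observing that both dichotomies are governed by exactly the same structural condition on $\calS$, namely whether every term of $\calS$ is $1$-local.

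First I would recall the precise statement of the Cubitt--Montanaro classification: for any fixed set $\calS$ of $2$-qubit Hamiltonian terms, $\sLH$ is in $\PTIME$ if every matrix in $\calS$ is $1$-local, and is $\NP$-hard (indeed $\NP$-, $\StoqMA$-, or $\QMA$-complete) otherwise. In particular, the \emph{only} case in which $\sLH \in \PTIME$ is when $\calS$ consists entirely of $1$-local terms. Next I would invoke \cref{thm:prodStatesHard}, whose easy/hard boundary is the identical condition: $\sprodLH \in \PTIME$ precisely when every matrix in $\calS$ is $1$-local, and $\sprodLH$ is $\NP$-complete otherwise. The corollary then follows by chaining these two biconditionals through the common condition ``every matrix in $\calS$ is $1$-local'': $\sLH \in \PTIME \iff \text{(every term of $\calS$ is $1$-local)} \iff \sprodLH \in \PTIME$.

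The only point requiring a sentence of care is that we are comparing the $\PTIME$ side of each dichotomy, and one must make sure the two classifications really do share the same $1$-local threshold; this is exactly what is asserted in the paragraph preceding the corollary (``The sets such that \cite{CM16_hamiltonians} shows \sLH{} is in $\PTIME$ are the same for which we show \sprodLH{} is in $\PTIME$''), so there is no gap. A minor subtlety worth noting is that for \sLH{} one strictly needs, on the hard side, that $\NP \neq \PTIME$ (or rather that each of the three completeness classes properly contains $\PTIME$) in order to conclude $\sLH \notin \PTIME$; but since we are only asserting the biconditional $\sLH \in \PTIME \iff \sprodLH \in \PTIME$, and both fail together exactly when $\calS$ contains a non-$1$-local term, the statement holds unconditionally: in that regime both problems are $\NP$-hard (hence neither is in $\PTIME$ unless $\PTIME = \NP$, in which case \emph{both} are trivially in $\PTIME$ and the biconditional still holds). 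So no complexity-theoretic assumption is needed.

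There is no real obstacle here — the content is entirely in \cref{thm:prodStatesHard}, and this corollary is a one-line logical consequence once that theorem and \cite{CM16_hamiltonians} are in hand. The main thing to get right in the write-up is simply to state clearly that the two dichotomies' tractable cases coincide.
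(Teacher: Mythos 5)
Your proof takes essentially the same route as the paper: you observe that both dichotomies---the Cubitt--Montanaro classification (\cref{thm:CM16}) and \cref{thm:prodStatesHard}---place the same sets $\calS$ (those consisting only of $1$-local terms) on the tractable side, and you chain the two biconditionals through this common condition. This is exactly what the paper does in the paragraph preceding the corollary, so the core argument is correct and matches the intended one.

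However, the added caveat claiming the biconditional holds ``unconditionally'' is wrong. You argue that if $\PTIME = \NP$ then both problems are ``trivially in $\PTIME$,'' but for sets $\calS$ on which $\sLH$ is $\StoqMA$- or $\QMA$-complete (rather than merely $\NP$-complete), the collapse $\PTIME = \NP$ does not put $\sLH$ in $\PTIME$---that would additionally require $\PTIME = \StoqMA$ or $\PTIME = \QMA$. In a hypothetical world with $\PTIME = \NP \subsetneq \QMA$ one would have $\sprodLH \in \NP = \PTIME$ while $\sLH$, being $\QMA$-complete, lies outside $\PTIME$, which breaks the biconditional. So the corollary is \emph{not} strictly unconditional; it is implicitly read, as is standard, under the assumption that these complexity classes are distinct. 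This does not affect the main chain of reasoning in your proof, only the side remark about conditionality.
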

\vspace{-\topsep}
\begin{restatable}{corollary}{fLHhardProdLHhard}
  \label{corr:fLHeasyProdLHhard}
  For any fixed set of 2-qubit Hamiltonian terms $\calS$, the problem $\sLH$ is
  $\NP$-hard if and only if $\sprodLH$ is $\NP$-complete.
\end{restatable}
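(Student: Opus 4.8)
The plan is to reduce the biconditional to a single syntactic dichotomy on $\calS$: both ``$\sLH$ is $\NP$-hard'' and ``$\sprodLH$ is $\NP$-complete'' are equivalent to the condition that $\calS$ contains at least one term that is not $1$-local, after which the corollary is immediate. We may assume $\PTIME \neq \NP$, since otherwise $\sLH$ and $\sprodLH$ are non-trivial problems lying in $\PTIME = \NP$ and hence both conditions hold, making the statement vacuously true. We also recall that $\sprodLH \in \NP$ always (as noted in the introduction, via succinct classical descriptions of product states), whereas $\sLH$ is not known to lie in $\NP$ in general — indeed it can be $\StoqMA$- or $\QMA$-complete — which is precisely why the statement asks only for $\NP$-hardness on the $\sLH$ side.

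For the forward direction, suppose $\sLH$ is $\NP$-hard. If every matrix in $\calS$ were $1$-local, then $\sLH$ would be in $\PTIME$: a sum of $1$-local $2$-qubit terms is supported qubit-by-qubit, so it collapses to a sum of single-qubit operators whose minimum energy is just the sum of the smallest eigenvalues of those $2\times 2$ operators (this is also among the explicit $\PTIME$ cases in the classification of \cite{CM16_hamiltonians}). That would contradict $\NP$-hardness under $\PTIME \neq \NP$, so $\calS$ must contain a non-$1$-local term, and \cref{thm:prodStatesHard} then gives that $\sprodLH$ is $\NP$-complete.

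For the reverse direction, suppose $\sprodLH$ is $\NP$-complete. By the easy half of \cref{thm:prodStatesHard}, if $\calS$ contained only $1$-local terms then $\sprodLH \in \PTIME$, again contradicting $\NP$-completeness under $\PTIME \neq \NP$; hence $\calS$ contains a non-$1$-local term. By the Cubitt--Montanaro classification \cite{CM16_hamiltonians}, for any such $\calS$ the problem $\sLH$ is complete for $\NP$, $\StoqMA$, or $\QMA$, and in all three cases it is $\NP$-hard.

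Since the argument is bookkeeping once \cref{thm:prodStatesHard} and \cite{CM16_hamiltonians} are in hand, I do not expect a genuine obstacle here; all the real content lives in \cref{thm:prodStatesHard}. The single point that must be verified — and the reason this corollary and \cref{corr:fLHeasyProdLHeasy} dovetail — is that the ``tractable'' boundary of both dichotomies is literally the same family of term sets, the all-$1$-local ones. This is the observation recorded immediately after \cref{thm:prodStatesHard}, and it is what lets us chain the product-state classification and the general-ground-state classification through the common syntactic condition.
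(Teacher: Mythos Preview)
Your proposal is correct and matches the paper's approach: the paper does not give an explicit proof of this corollary, merely remarking that ``Combining our classification of \sprodLH{} with the classification of \sLH{} gives us \cref{corr:fLHeasyProdLHeasy,corr:fLHeasyProdLHhard},'' which is precisely the chaining through the common syntactic condition (all terms $1$-local versus not) that you spell out. Your added handling of the $\PTIME = \NP$ edge case is a reasonable technicality the paper leaves implicit.
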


Our results imply that hardness of product state approximations is not restricted to
Hamiltonians for which product states well approximate ground states: for
any $\QMA$-hard family of terms, our result implies that (assuming $\QMA \not=
\NP$) we can construct a family of local Hamiltonians that are $\NP$-hard to
product state approximate \emph{and} for which the product states do not well
approximate the ground state by constructing Hamiltonians on two systems, each with one
of these properties, and taking a disjoint union thereof. This implies that
algorithms using product states to approximate the ground states of $\QMA$-hard
Hamiltonians face a ``double penalty'': hardness of approximating product
states which themselves imperfectly approximate ground states.

\paragraph{The Stretched Linear Vector Max-Cut problem}
Our hardness constructions for product state problems
embed an objective function
which we prove is $\NP$-complete.
This objective function generalizes the classical \maxcut{} problem.
Given work on other variations of \maxcut{}, we expect this problem
and our reductions should be of
independent interest, especially to the
optimization and approximation communities.

In \maxcut{}, one is given a graph $G=(V,E)$ and asked to assign to each vertex $v$ a label $\hat{v} = \pm 1$ so as to achieve the maximum number of oppositely labeled adjacent vertices:
\begin{equation}\label{eqn:MC}
	\operatorname{MC}(G) = \frac{1}{2}\max_{\hat{\imath}=\pm 1} \sum_{ij\in E}
	\parend*{1-\hat{\imath}\hat{\jmath}} = \frac{1}{2}\max_{\hat{\imath}=\pm 1}
	\sum_{ij\in E} \abs{\hat{\imath} - \hat{\jmath}} .
\end{equation}
A problem referred to as \prob{Vector Max-Cut}, \prob{Rank-$k$-Max-Cut}, or \kmaxcutLong{} (\kmaxcut{}) has been studied
\cite{briet2010positive,briet2011generalized,hwang2023unique}
which generalizes \maxcut{} to assigning $k$-dimensional unit vectors so as to maximize the angles between adjacent vertex labels, or equivalently to maximize the squared distances between adjacent vertex labels:
\begin{equation}\label{eqn:MCk}
\kmaxcutop(G) = \frac{1}{2}\max_{\hat{\imath}\in S^{k-1}} \sum_{ij\in E}
\parend*{1-\hat{\imath}\cdot \hat{\jmath}} = \frac{1}{4} \max_{\hat{\imath}\in
S^{k-1}} \sum_{ij\in E} \norm{\hat{\imath}-\hat{\jmath}}^2 .
\end{equation}
Our new problem can be seen as a stretched and linear version of \kmaxcut{}.
In \WlinearmaxcutLong{} (\Wlinearmaxcut{}), the goal is to assign unit vectors
so as to maximize the distance between adjacent labels:
\begin{equation}\label{eqn:WMC_intro}
    \Wlinearmaxcutop(G) = \frac{1}{2} \max_{\hat{\imath} \in S^{d-1}} \sum_{ij\in E} \norm{W\hat{\imath}-W\hat{\jmath}} = \frac{1}{2} \max_{\hat{\imath} \in S^{d-1}} \sum_{ij\in E} \sqrt{\norm{W\hat{\imath}} + \norm{W\hat{\jmath}} - 2(W\hat{\imath})^{\top}(W\hat{\jmath})} ,
\end{equation}
where $W$ is a fixed $d\times d$ diagonal matrix.  Comparing $\kmaxcut$ and
$\Wlinearmaxcut$, our problem sums over un-squared distances and incorporates a
linear stretch given by $W$. We consider the decision version of this problem,
in which the objective is to test whether the optimal solution is at least $b$
or no more than $a$, for $b - a \ge 1/\poly(n)$. Note that, unlike $\sLH$, this
is an unweighted problem---one could naturally define a weighted version but
our hardness results will not require this.

Geometrically, \kmaxcut{} corresponds to embedding a graph into the surface of
a unit sphere with the objective of maximizing the sum of the squared lengths
of every edge.  Likewise, our problem \Wlinearmaxcut{} corresponds to embedding
a graph into the surface of a $d$-dimensional ellipsoid, with radii defined by
the entries of $W$, with the objective of maximizing the sum of the (non-squared)
edge lengths.

Despite being generalizations of the $\NP$-complete \maxcut{} problem, hardness
of neither $\kmaxcut$ nor $\Wlinearmaxcut$ is trivial.  The Goemans-Williamson
approximation algorithm for \maxcut{} on an $n$-vertex graph begins with
efficiently computing the solution to $\kmaxcut[n]$ via an SDP.  In fact,
deciding $\kmaxcut$ is known to be in $\PTIME$ for any $k = \Omega(\sqrt{\abs{V}})$,
\cite[Theorem 8.4]{lovasz2003} or \cite[(2.2)]{barvinok1995problems}.  And
while it has been conjectured by Lov\'{a}sz that $\kmaxcut$ is $\NP$-complete
for all constants $k\geq1$, in \cite[p. 236]{lovasz2019graphs} and
earlier, no proof has been given for any $k>1$.

Our main theorem concerning \WlinearmaxcutLong{}, which is used to prove
\cref{thm:prodStatesHard},
is the following.

\begin{restatable}{theorem}{WlinMaxCutTheorem}
\label{thm:WlinMaxCut}
	For any fixed non-negative $W=\diag(\alpha,\beta,\gamma)$ with at least one of $\alpha,\beta,\gamma$ nonzero,
	\WlinearmaxcutLong{} is $\NP$-complete.
\end{restatable}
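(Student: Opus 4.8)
The plan is to put \WlinearmaxcutLong{} in $\NP$ by the standard succinct-witness argument — the ambient dimension is the fixed constant $3$ and the objective is Lipschitz in the embedding, so an approximately optimal assignment of unit vectors with coordinates rounded to polynomially many bits certifies the value to inverse-polynomial precision, exactly as $\kLH{}$ is placed in its class — and to prove $\NP$-hardness by reduction from \maxcut{}. Since permuting coordinates and positively rescaling $W$ leave the problem unchanged (up to translating the thresholds), I assume $\alpha \ge \beta \ge \gamma \ge 0$ and, rescaling, $\alpha = 1$. The main device is a \emph{pinning gadget}: a bundle of many new unit-weight edges attached to a vertex $v$, designed so that the bundle's contribution, viewed as a function of $\hat v\in S^2$, is maximized exactly at some intended target point and decreases \emph{quadratically} in the angular distance from it. The remaining ``problem'' edges contribute a bounded total that varies only linearly under a small perturbation of one vertex, so a bundle of polynomial size forces every near-optimal embedding into the intended discrete (or lower-dimensional) configuration, whereupon the residual objective becomes a \maxcut{}-type quantity; picking the bundle sizes as suitable polynomials then drives the leftover error below the promise gap.

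Concretely I split by the multiplicity of the largest entry of $W$. If $\alpha>\beta$, then $\norm{W\hat v}$ is maximized over $S^2$ exactly at $\hat v=\pm e_1$; given a \maxcut{} instance $G=(V,E)$, attach $M$ pendant leaves to each vertex — a leaf edge at $v$ contributes at most $\tfrac12(\norm{W\hat v}+1)$ while $\norm{W\hat v}\le 1-\tfrac{1-\beta^2}{2}(1-\hat v_1^2)$ — so any embedding attaining the ``all at poles'' value $M|V|$ has every $\hat v$ near a pole $s_v e_1$, and then the edges of $G$ contribute $\maxcutop(G)$ up to additive error below $\tfrac12$. If $\alpha=\beta>\gamma$ (including $\gamma=0$), introduce anchors $p,p'$ joined by $M$ parallel edges — forcing $\hat p,\hat{p'}$ to an antipodal pair, without loss of generality $\pm e_1$ — and join each vertex $v$ of $G$ to both $p$ and $p'$ by $w$ parallel edges; this bundle contributes $\tfrac{w}{2}\bigl(\norm{W\hat v - e_1}+\norm{W\hat v+e_1}\bigr)$, which over $S^2$ is at most $w\sqrt{\norm{W\hat v}^2+1}$ with equality iff $\hat v_1=0$, while $\norm{W\hat v}\le 1$ with equality iff $\hat v_3=0$, so it is maximized, quadratically, exactly at $\hat v=\pm e_2$, again reducing to \maxcut{}. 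Finally, if $\alpha=\beta=\gamma$ (so $W=\alpha I_3$, the sphere), no bundle can pin down the dimension, so I reduce from the hardness of the case $W=\diag(1,1,0)$ just proved: anchors $p,p'$ joined by $M$ parallel edges are forced to an antipodal pair, without loss of generality $\pm e_3$, and joining each vertex $v$ of the given $\diag(1,1,0)$-instance to both by $w$ parallel edges contributes $\tfrac{w\alpha}{2}\bigl(\norm{\hat v - e_3}+\norm{\hat v + e_3}\bigr)$; since $\norm{\hat v - e_3}^2+\norm{\hat v+e_3}^2=4$ on the sphere, this sum is maximized, quadratically, exactly on the great circle $\{\hat v_3=0\}$, on which the residual edges realize precisely the $\diag(1,1,0)$ objective.

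I expect the crux to be the quantitative verification behind each gadget: one must show that polynomially bounded edge multiplicities already defeat the continuous optimizer's freedom to beat the combinatorial value by spreading vectors out (an isolated triangle, for instance, prefers an equilateral great-circle embedding over a two-colouring), and that after rounding to the intended configuration the residual error is below the relevant promise gap. This rests on the \emph{quadratic}, not merely strict, character of each gadget maximum — which bounds a vertex's deviation by $O(\sqrt{|E|/M})$ and hence the induced error on the remaining edges by $O(|E|^{3/2}|V|^{1/2}M^{-1/2})$, up to a short bootstrapping step — and on choosing the anchors' bundle weight $M$ polynomially larger than the total weight of all edges incident to them so that the anchors stay put. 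All edges in the construction have unit weight (a ``heavy'' edge is just a bundle of parallel unit-weight edges), which is what ultimately feeds the bounded-weight guarantee recorded after \cref{thm:prodStatesHard}.
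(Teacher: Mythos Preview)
Your argument is sound, and in the case $\alpha>\beta$ it is essentially the paper's own star-gadget reduction from \maxcut{} (\cref{lem:WlMC_oneMax}). In the remaining two cases you take a genuinely different route. For $\alpha=\beta>\gamma$ the paper reduces from \threeColoring{} by replacing each edge with a triangle and invoking the geometric fact that the maximum-perimeter triangle inscribed in the $W$-ellipsoid lies in the $xy$-plane with pairwise angles $2\pi/3$, so that simultaneously maximizing all triangles forces a proper $3$-colouring; you instead stay with \maxcut{}, using an antipodal anchor pair plus per-vertex bundles whose sum $\tfrac12(\norm{W\hat v-e_1}+\norm{W\hat v+e_1})$ is maximized quadratically at $\hat v=\pm e_2$. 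For $\alpha=\beta=\gamma$ the paper again reduces from \threeColoring{}, now with $4$-clique gadgets sharing a common sink and the fact that the maximum-perimeter inscribed tetrahedron in a sphere is regular; you instead bootstrap from the $\diag(1,1,0)$ case just proved, using anchors to collapse the sphere to a great circle on which the residual objective is exactly the $\diag(1,1,0)$-value. Your approach is pleasingly uniform---every case is the same pinning-plus-quadratic-estimate template---and avoids the inscribed-simplex lemmas, at the price of a two-stage pinning analysis (anchors first, then the vertices hanging off them) and the use of parallel edges. The paper's clique gadgets, by contrast, give an \emph{exact} characterization of the optimum (no error terms to chase) and produce \emph{simple} unweighted graphs. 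That last point is the one place your write-up is slightly loose: your bundles of parallel edges make the constructed instance a multigraph, whereas the paper's convention is simple graphs throughout; this is harmless for the downstream reductions to $\sprodLH$ in \cref{lem:antisymProdLH_posProdLH,lem:symmetricProdLH} (parallel edges just become repeated gadgets), but if you want \cref{thm:WlinMaxCut} to hold verbatim for simple graphs you should say how to flatten the bundles.
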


\paragraph{Quantum Max-Cut Product States and \kmaxcut[3]}
As a corollary of our classification theorem, we give the first published proof
of the fact that product state optimization in the \qmcLong{} (\qmc{}) model is
$\NP$-hard.  This model, also known as the anti-ferromagnetic Heisenberg model,
is equivalent to \sLH with $\calS=\{XX+YY+ZZ\}$.  We note that a sketch of a
different proof for this specific problem was previously known but
unpublished~\cite{johnWright}.  That proof was based on large graph cycles, and
our gadgets are simpler to analyze.

However, the proof of \cref{thm:prodStatesHard} utilizes Hamiltonian gadgets
involving negative weights (unlike the aforementioned proof
of~\cite{johnWright}).  This leaves open whether \qmcProd{} remains $\NP$-hard
on unweighted graphs. In \cref{sec:qmc}, we give a direct proof of hardness
using the fact that the unweighted product-state version of $\qmc$ is
equivalent to $\kmaxcut[3]$ (\cref{eqn:MCk}). Our work then is also the first
published proof that $\kmaxcut[k]$ is $\NP$-complete for some $k > 1$ (in our
case, $k=3$), partially resolving a conjecture of Lov\'{a}sz \cite[p.
236]{lovasz2019graphs}. Note that as with \Wlinearmaxcut, we consider the
decision version in which the goal is to determine whether the value is above
$b$ or below $a$, for $b$ and $a$ with inverse-poly separation.

\begin{restatable}{theorem}{maxCutthreeNPcomplete}\label{thm:3MaxCutNPcomplete}
  \kmaxcut[3] is $\NP$-complete.
\end{restatable}

\begin{restatable}{corollary}{qmcProdHard}\label{thm:qmcProdHard}
  \qmcLong{} restricted to product states, $\qmcProd$, is $\NP$-complete, even
  when all terms are restricted to have positive unit weight.
\end{restatable}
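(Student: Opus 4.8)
The plan is to obtain this as a direct consequence of the classification theorem, \cref{thm:prodStatesHard}. Set $\calS = \{XX + YY + ZZ\}$, so that $\sprodLH$ for this $\calS$ is exactly $\qmcProd$. Since the single term $XX+YY+ZZ$ acts non-trivially on both qubits — it is not a sum of two 1-local matrices (one checks this quickly: any 1-local operator on qubits $1,2$ has the form $A\otimes I + I\otimes B$, whereas $XX+YY+ZZ$ has no $I\otimes I$ component and is not of this additive form, e.g. its partial trace over either qubit is a multiple of $I$ while it is manifestly not 1-local) — the hypothesis ``every matrix in $\calS$ is 1-local'' fails. Therefore \cref{thm:prodStatesHard} places $\sprodLH = \qmcProd$ in the $\NP$-complete case, and we are done.

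The one point requiring a little care is that \cref{thm:prodStatesHard} is stated for an arbitrary fixed set $\calS$ of 2-qubit terms, and one should confirm that its hardness construction, when specialized to $\calS = \{XX+YY+ZZ\}$, genuinely produces Hamiltonians all of whose terms are (weighted copies of) $XX+YY+ZZ$ placed on edges of an interaction graph — i.e. that the reduction does not implicitly need auxiliary term types. Inspecting the proof of \cref{thm:prodStatesHard}: the $\NP$-hardness for a non-1-local $\calS$ factors through the hardness of \Wlinearmaxcut{} (\cref{thm:WlinMaxCut}) together with a gadget that realizes the relevant quadratic form using only terms from $\calS$; for the Heisenberg term the associated matrix $W$ is the (scaled) identity $\diag(1,1,1)$, which is non-negative with a nonzero entry, so \cref{thm:WlinMaxCut} applies and the embedding uses only copies of $XX+YY+ZZ$. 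Membership in $\NP$ is immediate from the footnote: product states have $O(n)$-size classical descriptions and the energy $\braket{\phi|H|\phi}$ is a sum of locally-computable terms.

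I expect no real obstacle here — the corollary is designed to fall out of the main theorem. The only thing to be explicit about is the translation between the $\calS = \{XX+YY+ZZ\}$ instance and the ellipsoid-embedding picture: a product state on qubit $i$ is parametrized by a Bloch vector $\hat\imath \in S^2$, and $\braket{\phi|(XX+YY+ZZ)_{ij}|\phi} = \hat\imath \cdot \hat\jmath$, so minimizing the energy of $\sum_{ij\in E}(XX+YY+ZZ)_{ij}$ over product states is exactly maximizing $\sum_{ij\in E}(1 - \hat\imath\cdot\hat\jmath) = \tfrac12\sum_{ij\in E}\norm{\hat\imath-\hat\jmath}^2$, i.e. computing $\kmaxcut[3]$. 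Thus $\qmcProd$ coincides with $\kmaxcut[3]$, and the corollary also records that \kmaxcut[3] — the first rank for which Lovász's conjecture was open — is $\NP$-complete. I would state this identification first and then invoke \cref{thm:prodStatesHard} (or directly \cref{thm:WlinMaxCut} with $W = I_3$) to conclude.
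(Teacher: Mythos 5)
Your proposal matches the paper's own proof: the corollary is obtained directly by noting that $\calH = XX+YY+ZZ$ is a 2-qubit term that is not 1-local and invoking \cref{thm:prodStatesHard}, so $\sprodLH[\{\calH\}] = \qmcProd$ (equivalently $\kmaxcut[3]$) is $\NP$-complete. Your extra sanity checks — the partial-trace argument for non-1-locality and the identification of $W=\diag(1,1,1)$ in the underlying reduction — are correct and consistent with the paper's \cref{lem:symmetricProdLH}.
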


\subsection{Proof Overview}

\paragraph{2-local $\sprodLH{}$}
As product states have classical descriptions and their energies can be
calculated in polynomial time, $\sprodLH$ is automatically in $\NP$, so we
focus on how we show hardness.  Our approach is in two parts.  We show how to
reduce \Wlinearmaxcut{} to \sprodLH, and later we show that \Wlinearmaxcut{}
$\NP$-complete (\cref{thm:WlinMaxCut}).  More precisely, the first part of our
approach is to show that for any $\calS$ containing a strictly 2-local term,
there exists a corresponding weight matrix $W$ meeting the conditions of
\cref{thm:WlinMaxCut} so that \Wlinearmaxcut{} is $\NP$-hard.  For any instance
of \Wlinearmaxcut{} with this fixed $W$, we show how to construct a Hamiltonian
from $\calS$ such that the minimum product state energy encodes the
\Wlinearmaxcut{} value, yielding \cref{thm:prodStatesHard}.

We interpret problems over product states
as optimization problems on the collection of Bloch vectors for
each single-qubit state.
For example, consider $\sprodLH{}$ with the specific set $\calS =
\lbrace XX + YY + ZZ \rbrace$ (the \qmc{} model). In this case, by writing each
qubit $v$ in the Bloch vector representation
\[
	\ketbra{\phi_v}{\phi_v} = \frac{1}{2} \left( I + v_1 X + v_2 Y + v_3 Z \right) ,
\]
the energy contributed by an interaction between qubits $u$ and $v$ is
\begin{equation}\label{eq:XYZenergy}
	\tr\left(\left(X^uX^v + Y^uY^v +
	Z^uZ^v\right)\ketbra{\phi_u}{\phi_u}\ketbra{\phi_v}{\phi_v} \right) =
	u_1v_1 + u_2v_2 + u_3v_3 .
\end{equation}
So, given a Hamiltonian which is the sum of XYZ interactions between pairs of qubits,
the problem of estimating the extremal product state energies
is equivalent to optimizing the objective function
\[
	\sum_{uv} w_{uv} u \cdot v
\]
over 3-dimensional unit vectors, where each edge $uv$ corresponds to
the Hamiltonian's (weighted) interaction graph.
Up to constant shifts and scaling, this is equivalent to \kmaxcut[3],
introduced in \cref{eqn:MCk}.

More generally, because the Pauli matrices are a basis for all Hermitian matrices,
any 2-qubit interaction can be written as
$H = \sum_{i,j=1}^3 M_{ij} \sigma_i \sigma_j + \sum_{k=1}^3 \paren{c_k\sigma_k I + w_k I \sigma_k}$,
where the $\sigma_i$ are the Pauli matrices, for some $M$, $c$, $w$. Then, the
energy of a given product state is calculated similarly to \cref{eq:XYZenergy}.
However, the resulting expression potentially contains many terms.

Our approach is to take an arbitrary 2-qubit term and
insert it into gadgets which simplify the energy calculations.
First, in the proof of \cref{thm:prodStatesHard}, we borrow a trick of \cite{CM16_hamiltonians}
and symmetrize the terms.
For any 2-qubit interaction $H^{ab}$ on qubits $a,b$, the combined interaction $H^{ab}+H^{ba}$
is symmetric, invariant under swapping the qubits.
A similar trick handles the case of anti-symmetric terms.

Second, in the proofs of
\cref{lem:antisymProdLH_posProdLH,lem:symmetricProdLH}, we show how to use a
symmetric or anti-symmetric term to embed the \Wlinearmaxcut{} value into the
minimum energy of a gadget.  We begin by removing the 1-local terms, such as
$\sigma_1 I$ or $I\sigma_2$, again taking inspiration from gadgets used by
\cite{CM16_hamiltonians}.  For two qubits $u,v$ corresponding to two vertices
in a \Wlinearmaxcut{} instance, the gadget adds two ancilla qubits and weights
each interaction within the gadget to effectively cancel out the 1-local terms.
When the two ancilla qubits vary freely, we find the minimum energy contributed
by the entire four-qubit gadget is determined by the distance between the
states of $u$ and $v$.  Although each individual edge contributes energy
proportional to the squared distance between their states, the overall gadget
contributes energy proportional to just the distance of the two ``vertex
qubits'', ${-\norm{Mu-Mv}}$.  With some massaging, we can treat $M$ as a
non-negative diagonal matrix which meets the conditions of
\cref{thm:WlinMaxCut}.

Therefore, as desired, we have that the $\Wlinearmaxcut$ value for an
$\NP$-complete instance of \Wlinearmaxcut{} can be embedded into the minimum product state
energy of an \sprodLH{} instance.

\paragraph{Stretched Linear Vector Max-Cut}
To prove that $\Wlinearmaxcut$ is $\NP$-hard for any fixed diagonal
non-negative $W$ with at least one nonzero entry, we divide into three cases,
in \cref{lem:WlMC_twoMax,lem:WlMC_allEqual,lem:WlMC_oneMax}.  Our first proof
is a reduction from the standard \maxcut{} problem, while the other two are
reductions from \threeColoring.

When there is a unique largest entry of $W$, we reduce from \maxcut{} by taking
any input graph $G$ and forming $G'$ by adding large star gadgets around each of
the vertices of $G$, each using many ancilla vertices.  Because the ancilla
vertices in each gadget have just one neighbor (the original vertex at the
center of the gadget), their optimal vector labels given any choice of labels
for the center vertices are the negation of the center vertex labels.  This
means they heavily penalize assigning the center vertices any labels that are
not along the highest-weight axis.  Therefore, when the maximum entry of $W$ is
unique, the optimal \Wlinearmaxcut{} assignment to $G'$ will have its vector
labels almost entirely along the highest-weight axis. The assignment can trivially earn the
maximum possible value on the star gadgets, and the amount additional amount it can earn on the original edges of $G$ corresponds to the \maxcut{} value of $G$.

When all of the entries of $W$ are equal, we reduce from \threeColoring{}.
Given a graph $G$, we construct $G'$ by replacing each edge with a $4$-clique
gadget, made by adding one ancilla vertex per gadget, along with a single ancilla vertex shared by every gadget.
We show that $G$ is 3-colorable iff $G'$ has a sufficiently large
$\Wlinearmaxcut$ value.  Specifically, we show this holds iff there is a vector
assignment that simultaneously achieves (nearly) the maximum value on all of
the clique gadgets.  Achieving the maximum objective value on a clique
corresponds to maximizing the total distance between each pair of vectors, and
this enforces a predictable arrangement.

When weights are equal, assigning these vectors can be viewed as inscribing
vectors in the unit sphere, and it is known that maximal perimeter polyhedra
inscribed in the sphere must be regular.  So for a 4-clique, the vector labels
must form a regular tetrahedron.  We carefully argue that for regular
tetrahedra, fixing two of the vertices (approximately) fixes the other two vertices up to
swapping (several of these geometric facts are proved in
\cref{sec:geometryLemmas}).
This means that the clique gadget corresponding to an edge $uv$ in $G$ shares
two vertices with any gadget corresponding to an edge $vw$ incident to the
first edge: $v$ and the ``global'' ancilla shared by all gadgets. This means
that, once we fix the vector assigned to the global ancilla, the choice of a
vector for $v$ restricts the labels of both $u$ and $w$ to be chosen from a set
of two vectors. So simultaneously optimizing every clique gadget is possible
iff, for each connected component of $G'$, we can 3-color that component with
three vectors (corresponding to, for any $v$ in the component, the vector
assigned to $v$, the vector assigned to the global ancilla vertex, and the two vectors that can share a maximal
tetrahedron with those two).

Finally, when the two largest entries of $W$ are equal but distinct from the
third, we combine the two previous approaches.  Inserting star gadgets
effectively reduces the problem from three dimensions to two, by penalizing
vector assignments not in the 2d space corresponding to the two largest entries
of $W$.  We then add $3$-clique gadgets, with one ancilla for each edge in $G$,
and optimizing these over two dimensions corresponds to inscribing maximal
perimeter triangles in the unit circle. Now, assigning a vector to one vertex
fixes the optimal vectors assigned to the other two vertices (again up to swapping them) and
so there is again a one-to-one correspondence between vector assignments
simultaneously optimizing every clique gadget and 3-colorings of the
connected components of $G$.

\paragraph{Three-Dimensional Vector Max-Cut} Our proof that $\kmaxcut[k]$ with
positive unit weights is $\NP$-complete runs on very similar lines to our
\threeColoring{} reduction for $\Wlinearmaxcut$ with all weights equal.
However, there is one additional complication: that proof depended on the fact
that the sum of the side lengths of the tetrahedron is uniquely maximized by
choosing it to be regular. This is not the case when we instead consider the
squared side lengths, for instance assigning half the vectors to one pole of
the sphere and half to the other would achieve the same bound. So we use a
different gadget: instead of replacing every edge with a 4-clique, we
replace every edge with a 4-clique that in turn has its edges replaced with
triangles. It turns out that this gadget \emph{does} have a unique optimal
$\kmaxcut[k]$ assignment, which in particular assigns the vectors in the
4-clique to a regular tetrahedron, allowing us to proceed along the same
lines as the aforementioned proof.

\subsection{Related Work}\label{ssec:relatedWork}
Brand{\~a}o and Harrow~\cite{harrowbrandao2016product} give simple conditions under
which $2$-local Hamiltonians have product states
achieving near-optimal energy, such as systems with high-degree interaction graphs.
This suggests that unless $\NP = \QMA$,
such Hamiltonians cannot be $\QMA$-complete.
Since then, there has been a line of work on the relationship between
product states and general ground states in other Hamiltonians,
in the more general case when the two problems are not equivalent.
See \eg{} \cite{bravyi2019approximation,gharibian2012approximation,hallgren2020approximation}.

Product states have especially been studied in the context of the \qmcLong{} problem, introduced by
Gharibian and Parekh \cite{gharibian_et_al:LIPIcs:2019:11246}.
Bri{\"e}t, de Oliveira Filho, and Vallentin~\cite{briet2010positive}
give hardness results conditional on the
unique games conjecture for approximating the optimal product state in the \qmc{} model.
Related work by Hwang, Neeman, Parekh, Thompson, and
Wright~\cite{hwang2023unique} also gives tight hardness results for the \qmc{}
problem under a plausible conjecture.
Parekh and Thompson~\cite{parekh2022optimal} give an optimal approximation algorithm for
\qmc{} when using product states.

See \cite{hwang2023unique} for an exposition of the relationship between
\qmc{} restricted to product state solutions and the \prob{Vector Max-Cut} problem.
Studying vector solutions to \maxcut{} has a long history \cite{lovasz2019graphs},
including the seminal Goemans-Williamson algorithm \cite{goemans1995improved}.
This study is usually with the goal of a solution to the original \maxcut{} problem,
which relates to approximation ratios of integer and semidefinite programs.
Bounding these ratios has been referred to in terms of Grothendieck problems and inequalities: see \cite{briet2011generalized,bandeira2016approximating} for further context on this nomenclature.
A tight NP-hardness result is known for the non-commutative Grothendieck problem
\cite{briet2015tight} which also generalizes the ``little'' Grothendieck
problem over orthogonal groups \cite{bandeira2016approximating}.
Iterative algorithms (heuristics) for solving \kmaxcut{} are also well-studied in the
literature (see \cite{burer2003nonlinear} and citing references), since in
practice solving \kmaxcut{} is often faster than solving the
corresponding SDP relaxation.

Cubitt and Montanaro \cite{CM16_hamiltonians} classified \sLH for sets $\calS$
of 2-qubit terms.  This work relies in turn on the work of Bravyi and Hastings
\cite{bravyi2017complexity} to classify the $\StoqMA$ case.
\cite{CM16_hamiltonians} also examined a variant where $\calS$ is assumed to
always contain arbitrary single-qubit terms.  Follow-up work by Piddock and
Montanaro \cite{PM17_hamiltonians} began investigating classifying the
complexity of \sLH under the additional restrictions of positive weights
(anti-ferromagnetic model) and/or interactions restricted to a spatial geometry
such as a 2D grid.  \cite{Cubitt_2018,piddock2018universal} continued along
these lines, and introduced Hamiltonian simulations rather than computational
reductions.

In classical computer science,
Schaefer \cite{schaefer1978complexity} gave a dichotomy theorem showing that given a
fixed set of allowed constraints, the family of CSPs is either decidable in $\PTIME$ or is
$\NP$-complete; but see \cref{sec:prelim} or \cite{CM16_hamiltonians} for some important
assumptions that are made in the quantum versus classical models.
In fact, Schaefer's classification offers a more fine-grained classification with classes
within $\PTIME$.
Later, \cite{creignou1995dichotomy,khanna1997complete} gave a similar dichotomy
theorem for the complexity of \prob{Max-SAT} and related optimization problems,
where the question is not just whether all
clauses are simultaneously satisfiable but how many are simultaneously satisfiable.
Applying weights to constraints becomes relevant with \prob{Max-SAT}, and is covered
in their work.
This work is especially relevant given \kLH is more analogous to \prob{Max-SAT} than to \prob{SAT}.
Continuing, Jonsson \cite{jonsson2000boolean} classified these problems
when both positive and negative weights are allowed. While arbitrary weights seem natural
in the quantum setting, previous classical work simply assumed all weights
were non-negative.
The book \cite{creignou2001complexity} offers an excellent survey of this area.
The more recent results of \cite{jonsson2006approximability,thapper2016complexity}
extend classical classification results to problems with non-binary variables
(analogous to qu\emph{d}its).

\subsection{Open Questions}
\begin{enumerate}[itemsep=0mm]
\item We have shown a relationship between when product state problems and general Hamiltonian
problems are hard. This points toward an important question: can some ``interesting'' class of local Hamiltonians or a Hamiltonian problem for which we do not know an explicit efficient algorithm be proven \emph{not} hard, \eg{} neither $\NP$- nor $\QMA$-hard, by showing the corresponding product state problem is in $\PTIME$?

\item Is a more refined classification of the complexity of product state
problems, taking into account allowable weights or spatial geometry in the vein
of~\cite{PM17_hamiltonians}, or imposing other promises, possible?

\item While little progress has been made classifying the general $\sLH$ problem for higher-locality
families, can we classify \sprodLH for families of $k$-local Hamiltonians with $k>2$?

\item Can we relate approximability instead of just complexity? For example,
does the ability to $\alpha$-approximate the product state problem imply the
ability to $\beta$-approximate the general ground state problem on families
defined by some sets of allowed interactions but not others?

\item As mentioned above, we make the first progress towards a conjecture of Lov{\'a}sz \cite{lovasz2019graphs} that \kmaxcut{} is $\NP$-hard for any $k=O(1)$.
We only focus on $k=3$ here because of our interest in \prodLH{}.
Can our proof be generalized to other values of $k$?

\end{enumerate}

\section{Preliminaries}\label{sec:prelim}
We assume familiarity with the conventions of quantum computation
\cite{watrous2018theory} and complexity theory
\cite{arora2009computational,kitaev2002classical}.
See also \cite{gharibian2015quantum,CM16_hamiltonians} for surveys of Hamiltonian complexity.

\subsection{Notation}
$I$ denotes the identity operator.  $\operatorname{\lambda_{\min}}(H)$ and
$\operatorname{\lambda_{\max}}(H)$ denote the minimum and maximum eigenvalues
of an operator $H$.  In the same manner as with asymptotic $\bO{\cdot}$
notation, we use $\poly(n)$ to denote a term that can be bounded by some fixed
polynomial in $n$.

For an operator $A$, we use superscripts such as $A^{abc}$ to indicate $A$ acts on individual qubits $a,b,$ and $c$. Unless $A$ is symmetric, the order matters and $A^{ab}$ is different than $A^{ba}$.
If no superscripts are used, then the action is implicit in the ordering of the terms (left versus right).

When clear from context, we will denote the tensor product of two operators $A\otimes B$ simply by $AB$.
All terms implicitly are tensor the identity on any systems not specified.

$\SWAP$ will denote the 2-qubit operator exchanging $\ket{01}$ and $\ket{10}$
while $\ket{00}$ and $\ket{11}$ unchanged. We call a 2-qubit term $H$
\emph{symmetric} if $H=\SWAP (H) \SWAP$, meaning the ordering of the qubits
does not matter.  Alternatively, $H$ is \emph{antisymmetric} if $H = - \SWAP
(H) \SWAP$.

The single-qubit Pauli matrices are denoted $X,Y,Z$ or $\sigma_1,\sigma_2,\sigma_3$. Recall that $\{X,Y,Z,I\}$ is a basis for $2\times 2$ Hermitian matrices.
The \emph{Pauli decomposition} of a 2-qubit Hermitian matrix $H$ is $H$ written in the Pauli basis,
\begin{equation}\label{eqn:pauliDecomp}
	H = \sum_{i,j=1}^3 M_{ij} \sigma_i \sigma_j + \sum_{k=1}^3 \paren{v_k \sigma_k I + w_k I\sigma_k} ,
\end{equation}
with all coefficients real and the $3\times 3$ matrix $M$ referred to as the \emph{correlation matrix}.
Generally, \cref{eqn:pauliDecomp} should include a term $wII$, but we will work with traceless terms
such that $w=0$.

Unless otherwise stated, all graphs are undirected and simple, meaning there are no self-loops and no multi-edges.
We assume all graphs are connected, as it is straightforward to extend any of our
constructions to disconnected graphs.
When summing over edges, $\sum_{ij\in E}$, we do not double-count $ij$ and $ji$.
Finally, $S^{i} = \{x\in \reals^{i+1} : \norm{x}=1\}$ denotes the unit sphere in $(i+1)$-dimensional space.

\subsection{Definitions and Assumptions}

A $k$-local Hamiltonian on $n$ qubits is a Hermitian matrix $H \in
\reals^{2^n\times 2^n}$ that can be written as $H =\sum_{i=1}^m H_i$ such that
each $H_i$ is Hermitian and acts non-trivially on at most $k$ qubits.  More
precisely, each $H_i$ acts on some subset $S_i$ of at most $k$ qubits and each
term in the sum is $H_i \otimes I^{[n]\setminus S_i}$, but we generally leave
this implicit.  We usually consider constant values of $k$, so each term is of
constant size independent of $n$.  The $k$-qubit terms $H_i$ are often referred
to as \emph{interactions} between qubits.  We may refer to eigenvalues and
expectation values $\bra{\psi}H\ket{\psi}$ as the \emph{energy} of the state
$\ket{\psi}$ in the system described by $H$.  In particular, the \emph{ground
state energy} and \emph{ground state} refer to the minimum eigenvalue and an
associated eigenvector.

Estimating the minimum eigenvalue of a Hamiltonian is a natural quantum generalization of estimating the maximum number of satisfiable clauses in a Boolean formula.

\begin{definition}[\kLH]\label{def:kLH}
	Given a $k$-local Hamiltonian $H=\sum_{i=1}^m H_i$ acting on $n$ qubits with $m=\poly(n)$, the entries of each $H_i$ specified by at most $\poly(n)$ bits, and the norms $\norm{H_i}$ polynomially bounded in $n$,
	and two real parameters $b,a$ such that $b-a\geq 1/\poly(n)$, decide whether $\lambdamin(H)$ is at most $a$ (YES) or at least $b$ (NO), promised that one is the case.
\end{definition}

In this work, we are interested in $\kLH$ restricted to families of local Hamiltonians, where the families are determined by sets of allowed interactions.
In particular, we will be interested in sets of 2-qubit interactions.

\begin{definition}[\sLH]\label{def:fLH}
	For $\calS$ any fixed set of Hamiltonian terms,
	define \sLH{} as the problem $\kLH$ with the additional promise that any input is of the form $\sum_{i=1}^m w_i H_i$ where each $H_i$ is an element of $\calS$ assigned to act on some subset of qubits and the \emph{weights} $w_i \in \reals$ have magnitude polynomially-bounded in $n$.
\end{definition}

\begin{remark}\label{remark:sLHassumptions}
There are several standard assumptions implicit in our definition of \sLH{}.
Some are not physically realistic in the context of the condensed-matter literature
but allow us to precisely characterize the complexity of these problems.
First, although classical CSPs generally allow a constraint to take as input
multiple copies of the same variable, this makes less sense in the quantum
setting and we do not allow it.  Second, the definition of $k$-local only
restricts the dimension of each term, it does not imply any spatial locality or
geometry.  Therefore, any term in $H$ may be applied to any subset of qubits
with the qubits arranged in any order.  In particular this means that, if
$\calS$ contains a directed term $H^{ab}$, then the family of Hamiltonians
allowed as input to $\sLH$ is equivalent to the family allowed to
$\sLH[\calS']$ for $\calS'=\calS\cup\{H^{ba}\}$.  Third, for the purpose of
classifying the complexity of \sLH{}, we may assume $I\in \calS$, since adding
or removing a term $wI$ is equivalent to simply shifting the input parameters
$a,b$ by $w$.  For $\calS$ containing 2-qubit terms, this fact also implies we
may assume all terms in $\calS$ are traceless.  Fourth, except when noted,
we allow both positive and negative weights.
\end{remark}

Classifying the complexity of systems under additional, more
physically natural restrictions appears to be a significantly more difficult
problem \cite{CM16_hamiltonians,PM17_hamiltonians}.

Given this setup, our interest will be in the problems $\kLH$ and $\sLH$ restricted to product states.

\begin{definition}[Product state]\label{def:productState}
  A state $\rho=\bigotimes_{i=1}^n \rho^i$ where each $\rho^i$ is a single-qubit state.
\end{definition}

\begin{definition}[$\prodLH$]\label{def:prodLH}
  Given a $k$-local Hamiltonian $H=\sum_{i=1}^{m} H_i$ on $n$ qubits with
  $m=\poly(n)$, the entries of each $H_i$ specified by at most $\poly(n)$ bits, and the norms $\norm{H_i}$ polynomially bounded in $n$, and two real parameters $b\geq a$, decide whether there exists a product state $\rho$ with
  $\tr(\rho H)\leq a$ (YES) or all product states satisfy
  $\tr(\rho H) \geq b$ (NO), promised that one is the case.
\end{definition}

The problem $\sprodLH$ is defined analogously.
In both definitions, the fact that product states have concise classical descriptions allows us to naturally consider any choice of parameters, even an exact decision problem with $b= a$, in contrast to \kLH.

By convexity, a product state $\rho$ achieves an extreme value of $\tr(\rho H)$ if and only if there exists a pure product state $\ket{\psi}$ which achieves that value.
Similarly, mixtures of product states, known as separable states, reduce to considering pure product states.

\begin{remark}\label{remark:Scontains}
  In the context of $\sLH$ or $\sprodLH$ given some fixed set $S$, we will
describe $\calS$ as ``containing'' a Hamiltonian term $H$, or that we ``have
access to'' $H$, even when formally $H\notin \calS$.  As previously referenced
in \cref{remark:sLHassumptions}, given a set $\calS$, the family of
Hamiltonians allowed as input to $\sLH$ may be equivalent to the family allowed
given some other set $\calS'$.  For example, $\calS'$ may include
$\{PHP^{\dagger}\}$ for $H\in \calS$ and any permutation of the qubits $P$.
Similarly, adding constant multiples of the terms in $\calS$ or any linear
combinations of terms from $\calS$ does not change the family of allowed
Hamiltonians.  So, when discussing $\sLH$, we may implicitly refer to elements
of the largest set $\calS'$	such that $\sLH$ and $\sLH[\calS']$ each have the
same family of allowed inputs.

Additionally, we note that \sLH{} is reducible to $\sLH[\calS']$ for any
$\calS'$ which can be used to implement all elements of $\calS$---whether
because formally $\calS\subseteq \calS'$ or through other means.  In the
opposite direction, if the terms in a set $\calS$ can be used to construct some
term $H$ and we wish to show hardness of $\sLH$, then it is sufficient to show
hardness of $\sLH[\{H\}]$.
\end{remark}

Finally, for a 2-local Hamiltonian, we may refer to the \emph{interaction
graph}, with vertices associated with each qubit such that vertex $i$ is
adjacent to vertex $j$ whenever a nonzero interaction exists on qubits $i$ and
$j$.  When all interactions are symmetric, then the graph is undirected.
Notably, when $\sLH$ is defined with $\calS$ a singleton, then an input is
fully specified by its weighted interaction graph.

\section{\texorpdfstring{Classification of \sprodLH{}}{Classification of S-ProdLH}}
\label{sec:2localHard}

In this section, we prove a dichotomy theorem classifying the complexity of
estimating the minimum expectation of product states for given families of
2-local Hamiltonians.  In particular, we show that for any set $\calS$ of
2-qubit terms such that at least one term is not 1-local,\footnote{We would
prefer a more concise name for 2-qubit terms that are not 1-local, but are
unaware of any. One option is 2-qubit terms with \emph{Pauli degree} 2.
Alternatively, these are 2-qubit terms which have nonzero \emph{Pauli rank},
referring to the rank of the correlation matrix $M$ in the Pauli decomposition
(\cref{eqn:pauliDecomp}).  } the problem $\sprodLH$ is $\NP$-complete.  These
are precisely those 2-local families such that (as shown
in~\cite{CM16_hamiltonians}) $\sLH$ is $\QMA$-, $\StoqMA$-, or $\NP$-complete.
Conversely, if all terms are 1-local, then both $\sLH$ and $\sprodLH$ are in
$\PTIME$.

\prodStatesHard*

\noindent Our $\NP$-hardness results hold with coupling strengths of
at most constant magnitude.

For comparison with our classification, we recall the tetrachotomy theorem of
Cubitt and Montanaro \cite{CM16_hamiltonians} classifying \sLH{} for families
of 2-local Hamiltonians.  They proved that for every set of 2-qubit Hamiltonian
terms $\calS$, the problem $\sLH$ is either in $\PTIME$ or $\NP$-, $\StoqMA$-,
or $\QMA$-complete, and described properties of the set $\calS$ which determine
the problem's complexity.  We note that both \cref{thm:CM16} and our
\cref{thm:prodStatesHard} classify the complexity of \emph{all} sets of 2-qubit
terms.

\begin{theorem}[Theorem 7 of \cite{CM16_hamiltonians}]\label{thm:CM16}
	For $\calS$ any fixed set of 2-qubit Hamiltonian terms:
	\begin{itemize}
		\item If every matrix in $\calS$ is 1-local, then \sLH{} is in $\PTIME$.
		\item Otherwise, if there exists a single-qubit unitary $U$ such that $U$
		locally diagonalizes all elements of $\calS$ (i.e.\ $U^{\otimes 2} H
		U^{\dagger \otimes 2}$ is diagonal for each 2-qubit $H \in \calS$), then
		\sLH{} is $\NP$-complete.  \item Otherwise, if there exists a single-qubit
		unitary $U$ such that for each 2-qubit $H \in\calS$, \[
		U^{\otimes 2}H U^{\dagger \otimes 2} = \alpha Z^{\otimes 2}+A \otimes I + I\otimes B
		\]
		for some $\alpha \in \reals$ and 1-local Hermitian matrices $A,B$, then
		\sLH{} is $\class{StoqMA}$-complete.
		\item Otherwise, \sLH{} is $\QMA$-complete.
	\end{itemize}
\end{theorem}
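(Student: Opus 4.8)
\emph{Overview.} The plan is to prove, in this order: the $\PTIME$ containment for the all-$1$-local case; the matching upper bounds $\NP$, $\StoqMA$, $\QMA$ for the three surviving regimes; and the three hardness results, which carry the real content. Throughout I would use the closure operations recorded in \cref{remark:Scontains}: permuting qubits, taking real linear combinations of the allowed terms, applying a \emph{uniform} single-qubit basis change $U^{\otimes n}$ (which preserves the whole spectrum), symmetrizing a directed term as $H^{ab}+H^{ba}$ (and its antisymmetric analogue), and --- for the hardness arguments --- perturbative mediator-qubit gadgets that let a small set of interactions simulate a richer effective set in a low-energy subspace.

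\emph{Containments.} If every term is $1$-local, then any input $H=\sum_i w_i H_i$ splits as a direct sum $\bigoplus_q \widetilde H^{(q)}$, where $\widetilde H^{(q)}$ is the $2\times 2$ Hermitian matrix collecting all terms acting on qubit $q$; these commute, so $\lambdamin(H)=\sum_q \lambdamin(\widetilde H^{(q)})$ is computed exactly in polynomial time. In general $\kLH[2]\in\QMA$ by Kitaev's circuit verification \cite{kitaev2002classical}, so $\sLH\in\QMA$ unconditionally. If a single $U$ simultaneously diagonalizes every term, then after conjugating by $U^{\otimes n}$ the input is diagonal, $\lambdamin$ is attained at a computational-basis state, and guessing that state witnesses $\lambdamin(H)\le a$, so $\sLH\in\NP$. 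If instead every term equals $\alpha Z^{\otimes 2}+A\otimes I+I\otimes B$ in a common single-qubit basis, then the $ZZ$ parts fix that basis while a further qubit-wise $Z$-rotation rotates each accumulated $1$-local part $c_0 I+c_x X+c_y Y+c_z Z$ to $c_0 I-\sqrt{c_x^2+c_y^2}\,X+c_z Z$; the result is termwise stoquastic (real, nonpositive off-diagonal entries), and the local Hamiltonian problem for stoquastic Hamiltonians lies in $\StoqMA$ \cite{CM16_hamiltonians,bravyi2017complexity}, so $\sLH\in\StoqMA$.

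\emph{Canonical forms and hardness.} The structural engine is the $SU(2)\to SO(3)$ double cover: conjugating a $2$-qubit term by $V\otimes V'$ transforms its Pauli decomposition \cref{eqn:pauliDecomp} by $M\mapsto R_V M R_{V'}^{\top}$, $v\mapsto R_V v$, $w\mapsto R_{V'} w$ with $R_V,R_{V'}\in SO(3)$; a sign-adjusted singular value decomposition then reduces each term, up to $1$-local pieces, to a canonical form determined by the ordered singular values of the correlation matrix $M$ (with a sign on the smallest). Matching this against the hypotheses, the first three cases are exactly: (a) every $M_i=0$; (b) every $M_i$ is a common multiple of a fixed symmetric rank-one $nn^{\top}$ with all $1$-local Pauli vectors along $n$, equivalently all terms diagonal in a common basis; and (c) every term is $\alpha Z^{\otimes 2}$ plus arbitrary $1$-local pieces in a common basis, but (b) fails. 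For the proper case (b), the diagonalized instance is literally a real-weighted classical Max-$2$-CSP whose one constraint has a genuine bilinear ($ZZ$) part; minimizing it is $\NP$-hard by the classical weighted-CSP dichotomy \cite{creignou1995dichotomy,khanna1997complete,creignou2001complexity} (it already contains weighted \maxcut{}, since $\lambdamin(\sum_{ij\in E}Z^i Z^j)=|E|-2\,\maxcutop(G)$). For the proper case (c), one uses stoquasticity-preserving perturbative gadgets to show the available ``$ZZ$ plus off-diagonal $1$-local'' interactions simulate the transverse-field Ising model, which is $\StoqMA$-hard \cite{bravyi2017complexity}. In the remaining case one runs a finite case analysis over the canonical types present in $\calS$ and, via mediator-qubit gadgets, simulates a known $\QMA$-complete $2$-local family (e.g.\ the Heisenberg interaction $XX+YY+ZZ$, or a general/$ZZXX$ Hamiltonian) \cite{CM16_hamiltonians}; since simulation preserves the low-energy spectrum up to inverse-polynomial error, this yields $\QMA$-hardness.

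\emph{Main obstacle.} The crux is the last case: showing that \emph{every} set $\calS$ avoiding hypotheses (a)--(c) simulates a $\QMA$-hard model. This requires (1) a complete and correct enumeration of the canonical interaction types and of the dangerous combinations --- a single term with three distinct nonzero singular values behaves differently from two individually ``easy'' terms that are jointly ``hard'' --- and (2) a gadget for each configuration that keeps locality at $2$, preserves the promise gap up to $1/\poly$, and, in the $\StoqMA$ sub-case, preserves stoquasticity. The $\NP/\StoqMA$ and $\StoqMA/\QMA$ boundaries are the delicate part: correctly placing the ``$\alpha Z^{\otimes 2}$ plus $1$-local'' families requires the concurrent analysis of the quantum Ising model in \cite{bravyi2017complexity}, without which one cannot tell which such families are $\StoqMA$-complete rather than $\NP$- or $\QMA$-complete.
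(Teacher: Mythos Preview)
This theorem is not proved in the present paper at all: it is quoted verbatim as Theorem~7 of \cite{CM16_hamiltonians} and used only as a point of comparison for the paper's own classification of $\sprodLH$. There is therefore no ``paper's own proof'' to compare your proposal against.

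That said, your sketch is a faithful outline of how \cite{CM16_hamiltonians} actually proceeds: the $SU(2)\to SO(3)$ normal-form analysis of the correlation matrix, the easy $\PTIME$/$\NP$/$\StoqMA$/$\QMA$ containments, and then hardness via perturbative mediator gadgets reducing from known hard models (weighted \maxcut, the transverse-field Ising model via \cite{bravyi2017complexity}, and Heisenberg/$ZZXX$ for $\QMA$). You also correctly identify the real work as the finite case analysis in the $\QMA$ regime and the reliance on \cite{bravyi2017complexity} at the $\StoqMA$ boundary. One minor point: your $\StoqMA$ containment argument uses \emph{per-qubit} $Z$-rotations rather than a uniform $U^{\otimes n}$; this is fine for containment (it is a spectrum-preserving unitary on the whole instance, not a closure operation on $\calS$), but it is worth flagging that it sits outside the $\calS$-level closure rules of \cref{remark:Scontains}. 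As a proof plan this is sound; as a proof it would of course need the full gadget constructions and case enumeration carried out in \cite{CM16_hamiltonians}.
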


Combining our classification of \sprodLH{} with the classification of \sLH{} gives us \cref{corr:fLHeasyProdLHeasy,corr:fLHeasyProdLHhard}.

To prove \cref{thm:prodStatesHard}, showing containment in $\PTIME$ and $\NP$ are straightforward, and our effort is to prove $\NP$-hardness.
In the proof, we will use a simple symmetrization trick that allows us to consider only antisymmetric or symmetric Hamiltonian terms.
We then prove two lemmas, one for each case.

\begin{lemma}\label{lem:antisymProdLH_posProdLH}
	If $\calS$ contains a 2-qubit antisymmetric term that is not 1-local,
	then $\Wlinearmaxcut$ with $W=\diag(1,1,0)$
	is polynomial-time reducible to $\sprodLH$.
\end{lemma}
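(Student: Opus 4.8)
The plan is to take an antisymmetric 2-qubit term $H\in\calS$ that is not 1-local, and build a four-qubit gadget whose minimum product-state energy, as a function of the Bloch vectors $u,v$ assigned to two designated ``vertex'' qubits, equals (up to an affine transformation we can control) $-\norm{Mu-Mv}$ for the correlation matrix $M$ of $H$. Since $H$ is antisymmetric, $H = -\SWAP(H)\SWAP$; writing out the Pauli decomposition \eqref{eqn:pauliDecomp} and comparing with the swapped version forces $M$ to be antisymmetric as a $3\times 3$ matrix and forces the 1-local coefficients to satisfy $v_k = -w_k$. First I would record these structural consequences, and in particular note that an antisymmetric $3\times 3$ real matrix has the form of a cross product with a fixed vector, so $\norm{Mu - Mv}$ is, up to an orthogonal change of basis and scaling, exactly a $\diag(\lambda,\lambda,0)$-stretched distance $\lambda\norm{P(u-v)}$ where $P$ projects onto a plane. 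Since $M\neq 0$ (because $H$ is not 1-local), $\lambda > 0$, and after rescaling we obtain exactly the $W = \diag(1,1,0)$ weight matrix of the lemma statement.

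Next I would assemble the gadget. For each edge $uv$ of a given $\Wlinearmaxcut$ instance with $W = \diag(1,1,0)$, introduce two fresh ancilla qubits and place copies of $H$ (symmetrized via $H^{ab}+H^{ba}$ where needed, as mentioned in the overview) on a small fixed set of pairs among $\{u, v, \text{anc}_1, \text{anc}_2\}$, with carefully chosen constant weights. The weights are chosen so that (i) when we minimize over the free ancilla Bloch vectors, the residual 1-local ``field'' terms produced by the $v_k\sigma_k I$ pieces cancel — this is the standard Cubitt–Montanaro-style field-cancellation trick — and (ii) the remaining energy, after that minimization, is an affine function of $-\norm{M u - M v}$. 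The key computation is that although each individual interaction contributes energy that is (affine in) an inner product $u\cdot(\text{something})$ — hence, after completing squares across the gadget, a squared distance — the outer minimization over the ancillas replaces the squared distance by its square root: minimizing $\alpha\norm{x}^2 - \beta\langle x, w\rangle$ type expressions over a unit-norm ancilla $x$ yields a term proportional to $\norm{w}$ rather than $\norm{w}^2$. I would verify this for the specific gadget topology, then sum over all edges and set the threshold parameter $a$ of the $\sprodLH$ instance appropriately so that YES/NO instances of $\Wlinearmaxcut$ map to YES/NO instances of $\sprodLH$; the reduction is clearly polynomial-time since each edge blows up to a constant-size gadget.

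I expect two points to require real care. The first is the normalization step: reducing the general antisymmetric correlation matrix $M$ to the canonical $\diag(1,1,0)$ stretch requires an orthogonal change of single-qubit basis (a conjugation by a single-qubit unitary, permitted by \cref{remark:Scontains}), and I must check this conjugation does not reintroduce problematic 1-local terms or break the antisymmetry — it will not, since conjugation by $U^{\otimes 2}$ commutes with $\SWAP$, but this needs to be stated. The second and main obstacle is the exact bookkeeping in the gadget: pinning down which pairs carry a copy of $H$, at which weights, and proving that after the ancilla minimization the energy is \emph{exactly} affine in $-\norm{Wu - Wv}$ with no cross terms between different edges (this is where disjointness of the ancilla qubits across edges is used) and with the affine coefficients independent of the instance. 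I would isolate this as the heart of the proof, deferring the routine Pauli-algebra verification of \eqref{eq:XYZenergy}-style energy formulas to a short calculation, and leaning on the analogous symmetric-term argument of \cref{lem:symmetricProdLH} for shared structure.
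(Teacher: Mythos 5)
Your high-level plan matches the paper's: use the skew-symmetric normal form of the correlation matrix (giving a $\diag(\lambda,\lambda,0)$ stretch), then build a per-edge gadget with two ancillas that cancels 1-local terms and whose ancilla-minimized energy is $-\norm{Wu-Wv}$. But the heart of the lemma is specifying the gadget, and you explicitly defer it; the pieces you do sketch point in a direction that would not work. In particular, the remark that you would symmetrize via $H^{ab}+H^{ba}$ ``where needed'' is inapplicable here: for an antisymmetric term, $H^{ab}+H^{ba}=0$ identically, so symmetrization kills the interaction rather than cleaning it up. The paper instead exploits the antisymmetry by taking an \emph{oriented} 4-cycle, $B=H^{ab}+H^{bc}+H^{cd}+H^{da}$, with all weights $+1$; the 1-local pieces $v_k(\sigma_k I - I\sigma_k)$ telescope around the cycle and vanish without any tuned coefficients, leaving $B=(X^a-X^c)(Z^b-Z^d)-(Z^a-Z^c)(X^b-X^d)$ after putting $H$ in the form $XZ-ZX+AI-IA$.

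Your description of the ancilla-minimization mechanism is also a bit off. Because the gadget is a 4-cycle on $\{a,b,c,d\}$ with no $ac$ or $bd$ edges, and the vertex qubits are $a,c$, the expectation of $B$ on a product state is \emph{purely bilinear} in $(r^a-r^c)$ and $(r^b-r^d)$; there are no quadratic self-terms, so there is nothing to ``complete squares'' on. The norm appears because minimizing $(r^a-r^c)^{\top}W'(r^b-r^d)$ over unit $r^b,r^d$ is a linear optimization over the set $\{r^b-r^d\}$, whose extreme points have norm $2$ (take $r^d=-r^b$), giving $-2\norm{W'^{\top}(r^a-r^c)}$. Since $W'$ has singular values $1,1,0$, this is $-2\norm{W''(r^a-r^c)}$ for a coordinate projection $W''$, which after relabeling axes is exactly $W=\diag(1,1,0)$. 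Without pinning down a concrete gadget with this bilinear structure, the argument that ``minimizing $\alpha\norm{x}^2-\beta\langle x,w\rangle$ yields $\norm{w}$'' is too loose to establish the exact affine dependence on $\norm{Wu-Wv}$ that the reduction needs.
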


\begin{lemma}\label{lem:symmetricProdLH}
	If $\calS$ contains a 2-qubit symmetric term that is not 1-local,
	then there exists a fixed non-negative $W=\diag(\alpha,\beta,\gamma)$ with at least one of $\alpha,\beta,\gamma$ nonzero
	such that $\Wlinearmaxcut$ is polynomial-time reducible to $\sprodLH$.
\end{lemma}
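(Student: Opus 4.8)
The plan is to take a 2-qubit symmetric term $H\in\calS$ that is not 1-local, extract its correlation matrix $M$ from the Pauli decomposition \eqref{eqn:pauliDecomp}, and build a four-qubit gadget — two ``vertex qubits'' $u,v$ plus two ancillas — whose minimum product-state energy, after optimizing over the ancillas, equals (up to an affine transformation) $-\norm{M'\hat u - M'\hat v}$ for a diagonal non-negative $M'$ with at least one nonzero entry. Here $\hat u,\hat v$ are the Bloch vectors of $u,v$. Summing such gadgets over the edges of a \WlinearmaxcutLong{} instance with $W=M'$ then embeds the \Wlinearmaxcut{} value into the minimum energy of an \sprodLH{} instance, which is exactly what the lemma asserts.

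I would carry this out in the following steps. First, normalize $M$: since $H$ is symmetric, $M$ is symmetric, so by a single-qubit orthogonal change of basis (which by \cref{remark:Scontains} we may apply to the Bloch vectors without leaving the family) we may assume $M=\diag(\mu_1,\mu_2,\mu_3)$ is diagonal, and after negating the whole term if necessary and reordering axes, we may take it non-negative with at least one $\mu_i>0$ — this gives the $W=\diag(\alpha,\beta,\gamma)$ of the statement. Second, symmetrize away the 1-local part: following the symmetrization trick already invoked for \cref{thm:prodStatesHard} and the gadget ideas borrowed from \cite{CM16_hamiltonians}, attach to each pair $(u,v)$ two ancilla qubits and weight the interactions inside the gadget so that the $\sigma_k I$ and $I\sigma_k$ contributions cancel in expectation, leaving only the $\sum M_{ij}\sigma_i\sigma_j$ part, whose product-state energy on an edge $ab$ is $\sum_i \mu_i\, a_i b_i = \hat a^\top M \hat b$ as in \eqref{eq:XYZenergy}. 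Third — the computational heart — minimize the gadget energy over the two ancilla Bloch vectors with $\hat u,\hat v$ held fixed: the ancillas each couple to $u$ and $v$ through $M$, so minimizing a function of the form $c_1\hat a^\top M\hat u + c_2\hat a^\top M\hat v$ over unit $\hat a$ yields $-\norm{c_1 M\hat u + c_2 M\hat v}$, and with the coefficients chosen symmetrically this collapses to a multiple of $-\norm{M\hat u - M\hat v}$ (the un-squared distance — this is precisely where the linear, non-squared form of \Wlinearmaxcut{} enters, in contrast to the squared form of \kmaxcut{}). Fourth, assemble: placing one gadget per edge $ij\in E$ of the \Wlinearmaxcut{} instance, with a large enough common weight, the total minimum energy is $-\tfrac12\sum_{ij\in E}\norm{M\hat\imath - M\hat\jmath}$ plus a fixed constant, i.e. $-\Wlinearmaxcutop(G)$ plus a constant, so setting the threshold $a$ appropriately gives the reduction; polynomial-time computability and the $\NP$ membership are immediate.

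I expect the main obstacle to be Step three together with the bookkeeping that makes Step two and Step three compatible: one must choose the ancilla coupling weights and the internal $u$–$v$ coupling so that (i) the 1-local terms genuinely cancel for \emph{every} choice of $\hat u,\hat v$ (not just on average over the gadget), (ii) after minimizing over the ancillas the cross term is exactly $-\norm{M\hat u-M\hat v}$ with no residual $\norm{M\hat u}^2+\norm{M\hat v}^2$ contamination changing the objective, and (iii) the degenerate directions of $M$ (zero $\mu_i$) do not let the ancillas ``escape'' and spoil the cancellation. Controlling the case where $M$ has rank $1$ or $2$, so that $W$ may have one or two zero entries, is the delicate point, but it is exactly the generality allowed by the hypothesis of \cref{thm:WlinMaxCut}, so the downstream argument still goes through. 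Everything after the gadget analysis is routine affine rescaling of the energy and the promise gap.
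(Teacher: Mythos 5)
Your proposal is correct and follows essentially the same approach as the paper: diagonalize the correlation matrix via a single-qubit conjugation, cancel the 1-local parts with a four-qubit gadget using $\pm 1$ signs on the internal edges (the paper uses $B = H^{ab} + H^{cd} - H^{bd} - H^{ac}$, a rectangle with no $a$--$d$ edge, which avoids the $\norm{M\hat u}^2$ contamination you flag), and minimize over the two ancillas to convert the bilinear form $(r^b-r^c)^\top W'(r^a-r^d)$ into $-2\norm{W'(r^a-r^d)}$, after which summing per edge gives $-2\Wlinearmaxcutop(G)$. One small imprecision: you don't need to (and in general cannot, since \cref{fact:relabelPaulis} only permits flipping two of the three signs) make the diagonalized $M$ non-negative by conjugation; the paper instead lets $W'$ carry arbitrary signs and observes that $\norm{W'x}$ depends only on the absolute values, so $W = \diag(\abs{\alpha^-},\abs{\beta^-},\abs{\gamma^-})$ emerges for free after the ancilla minimization.
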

In \cref{sec:vectorProblems} we prove \cref{thm:WlinMaxCut}, that $\Wlinearmaxcut$ is $\NP$-complete for any $W=\diag(\alpha,\beta,\gamma)$ that is nonzero and non-negative.

We state some helpful facts in \cref{ssec:closurePropsOfsLH} below and then
prove the two lemmas in \cref{ssec:proofsOfLemmas}.  We will now use these
lemmas to prove our main theorem.

\begin{proof}[Proof of \cref{thm:prodStatesHard}]
	First consider the case where $\calS$ only contains 1-local terms. Then we
	can write $H = \sum_i H_i$, where $H_i$ acts only on the $i\nth$ qubit. If
	$\ket{\psi_i}$ is the single-qubit state minimizing $\braket{\psi_i | H_i | \psi_i}$,
	then $\ket{\psi} = \bigotimes_i \ket{\psi_i}$ minimizes $\braket{\psi | H | \psi}$,
	and so $\sprodLH$ can be solved by finding the ground state of $n$ single-qubit
	Hamiltonians, which takes $\bO{n}$ time.

    Now suppose $\calS$ is a set of of 2-qubit Hamiltonian terms such that at
    least one element of $\calS$ is not 1-local. Let $H$ be any such element.
    As previously mentioned, $\sprodLH \in \NP$ for any fixed $\calS$, as product states have concise classical descriptions which can be used to efficiently calculate expectation values for a given local Hamiltonian.
    If $H$ is antisymmetric, then \Wlinearmaxcut{} with fixed $W=\diag(1,1,0)$ is reducible to \sprodLH{} by \cref{lem:antisymProdLH_posProdLH}, and \Wlinearmaxcut{} with such a $W$ is $\NP$-hard by \cref{thm:WlinMaxCut},
    so \sprodLH{} is $\NP$-complete.
    If $H$ is symmetric, then by \cref{lem:symmetricProdLH} there exists a non-negative nonzero matrix $W=\diag(\alpha,\beta,\gamma)$ such that \Wlinearmaxcut{} is reducible to \sprodLH{},
    it is $\NP$-hard by \cref{thm:WlinMaxCut},
    and so \sprodLH{} is $\NP$-complete.
    If $H$ is neither of these, then we use our freedom to permute the direction $H$ is applied to any pair
    of qubits $a,b$ to apply both $H^{ab}$ and $H^{ba}$, which is equivalent to implementing the
    symmetric term $H' = H +\SWAP{} (H) \SWAP$.
    So, we can say $\calS$ effectively contains $H'$, or formally that hardness of $\sprodLH[\{H'\}]$ implies hardness of $\sprodLH$, and again referring to \cref{lem:symmetricProdLH} concludes the proof.
\end{proof}

\subsection{\texorpdfstring{Closure Properties of $\sprodLH$}{Closure Properties of S-ProdLH}}
\label{ssec:closurePropsOfsLH}

Before proving the two lemmas required in the proof of \cref{thm:prodStatesHard}, we review
several more facts regarding 2-qubit Hamiltonian terms and operations under which
the complexities of $\sLH$ and $\sprodLH$ are unaffected.
This section mostly reviews observations made in \cite{CM16_hamiltonians}.

First, for a single-qubit unitary $U$ and an operator $H$,
define \emph{simultaneous conjugation} by $U$ to mean $U^{\otimes n} H U^{\dagger \otimes n}$.
When discussing sets $\calS$ of $k$-qubit terms, we define simultaneous conjugation
to mean $\{U^{\otimes k} H U^{\dagger\otimes k} | H\in \calS \}$.

\begin{fact}\label{fact:onelocalconjugating}
  For any single-qubit unitary $U$, the complexities of \sLH{} and \sprodLH{} are
  equal to the complexities of \sLH[\calS'] and \sprodLH[\calS'], respectively,
  where $\calS'$ is $\calS$ simultaneously conjugated by $U$.
\end{fact}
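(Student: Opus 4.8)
The plan is to show that simultaneous conjugation by $U$ induces a polynomial-time, invertible reduction between each problem and its conjugated version that preserves every quantity appearing in the relevant definitions. First I would record that simultaneous conjugation distributes over the sum defining a Hamiltonian and acts term-by-term on its local factors. If $H = \sum_{i=1}^m w_i H_i$ with each $H_i \in \calS$ acting (implicitly as $H_i \otimes I_{[n]\setminus S_i}$) on a $k$-qubit subset $S_i$, then since $U^{\otimes n}(A \otimes I_{[n]\setminus S})U^{\dagger \otimes n} = (U^{\otimes k} A\, U^{\dagger\otimes k})\otimes(U^{\otimes(n-k)} I\, U^{\dagger\otimes(n-k)})$ and $U^{\otimes(n-k)} I\, U^{\dagger\otimes(n-k)} = I$, we get $U^{\otimes n} H\, U^{\dagger\otimes n} = \sum_{i=1}^m w_i\,(U^{\otimes k} H_i U^{\dagger \otimes k}) \otimes I_{[n]\setminus S_i}$. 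Thus $U^{\otimes n} H\, U^{\dagger\otimes n}$ is an $\sLH[\calS']$ (resp.\ $\sprodLH[\calS']$) instance on the same qubits, with the same weights $w_i$, the same interaction pattern, and the $i$th term drawn from $\calS'$ in place of $\calS$.

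Next I would verify that conjugation preserves the two quantities that define the decision problems. For \sLH{}: $U^{\otimes n}$ is unitary, so $H$ and $U^{\otimes n} H\, U^{\dagger\otimes n}$ are similar and hence cospectral; in particular $\lambdamin$ is unchanged, so the same parameters $a,b$ witness the YES/NO cases. For \sprodLH{}: the map $\rho \mapsto U^{\otimes n}\rho\, U^{\dagger\otimes n}$ sends a product state $\bigotimes_i \rho_i$ to the product state $\bigotimes_i (U\rho_i U^\dagger)$, is a bijection on product states with inverse given by conjugation by $U^\dagger$, and satisfies $\tr\!\bigl((U^{\otimes n}\rho\, U^{\dagger\otimes n})(U^{\otimes n} H\, U^{\dagger\otimes n})\bigr) = \tr(U^{\otimes n}\rho H\, U^{\dagger\otimes n}) = \tr(\rho H)$ by cyclicity of the trace. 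Hence $\min_\rho \tr\!\bigl(\rho\, U^{\otimes n} H\, U^{\dagger\otimes n}\bigr) = \min_\rho \tr(\rho H)$ over product states, so the threshold $a$ gives the same answer.

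Finally I would argue this is an efficient, two-sided reduction. Each $U^{\otimes k} H_i U^{\dagger\otimes k}$ is a product of fixed constant-size matrices with $H_i$, so it is computable in time polynomial in the bit-length of $H_i$; there are $m=\poly(n)$ of them; and operator norms and weights change only by the fixed constant factors coming from $U$, so they remain polynomially bounded and the input-size constraints of \cref{def:kLH,def:fLH,def:prodLH} are met (with $\calS'$ itself a fixed set since $\calS$ and $U$ are fixed). Running the same argument with $U^\dagger$ in place of $U$ — noting $\calS$ is $\calS'$ simultaneously conjugated by $U^\dagger$ — gives the reverse reduction, so the complexities agree in both directions. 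There is no real obstacle here; the only point needing a little care is the routine bookkeeping of bit-complexity and norm bounds under conjugation by a fixed $U$, which I expect to be the most tedious rather than the most difficult step.
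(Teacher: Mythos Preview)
Your proposal is correct and follows essentially the same approach as the paper: distribute $U^{\otimes n}(\cdot)U^{\dagger\otimes n}$ over the sum to get a bijection between $\calS$- and $\calS'$-Hamiltonians, then note that conjugation preserves expectation values and that $U^{\otimes n}$ preserves product-state structure. The paper's argument is a two-sentence sketch; you have simply filled in the bookkeeping (cospectrality, trace cyclicity, efficiency, and the reverse direction via $U^\dagger$) that the paper leaves implicit.
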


\noindent
Observe that $U^{\otimes n}\left(\sum_{i=1}^m H_i\right) U^{\dagger\otimes n} = \sum_{i=1}^m U^{\otimes k} H_i U^{\dagger \otimes k}$.
Simultaneous conjugation by $U$ gives
a bijection between Hamiltonians allowed in \sLH{} and \sLH[\calS'] as well
as \sprodLH{} and \sprodLH[\calS'].
The above fact follows from observing that this bijection preserves
expectation values, and that $U^{\otimes n}\ket{\phi}$ is a product state iff
$\ket{\phi}$ is.

As an application of \cref{fact:onelocalconjugating}, the following is based on an observation in \cite{CM16_hamiltonians}.

\begin{fact}\label{fact:relabelPaulis}
  For any choice of permutation $\pi$ on $\{1,2,3\}$ and any choice of \emph{two} of $c_1,c_2,c_3=\pm 1$,
  there exists a single-qubit unitary $U$ and corresponding third coefficient such that
  simultaneous conjugation by $U$ maps
  the Pauli matrices $\{\sigma_1,\sigma_2,\sigma_3\}$ to $\{c_{\pi(1)}\sigma_{\pi(1)},c_{\pi(2)}\sigma_{\pi(2)},c_{\pi(3)}\sigma_{\pi(3)}\}$.
  So, writing every element of $\calS$ in the Pauli basis,
  relabeling all $\sigma_i$ with $c_{\pi(i)} \sigma_{\pi(i)}$ in the decompositions of each
  element of $\calS$ does not change the complexity of $\sLH$ or $\sprodLH$, where $\pi$ and two of the coefficients can be chosen arbitrarily.
\end{fact}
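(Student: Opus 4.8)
The plan is to realize the claimed relabeling as simultaneous conjugation by a suitable single-qubit unitary and then invoke \cref{fact:onelocalconjugating}. The starting point is the standard fact that conjugation by a single-qubit unitary $U$ preserves tracelessness and the trace inner product on $2\times 2$ Hermitian matrices ($\tr((U\sigma_iU^\dagger)(U\sigma_jU^\dagger))=\tr(\sigma_i\sigma_j)=2\delta_{ij}$), so it acts on the real span $\spn\{X,Y,Z\}$ as an orthogonal map; since $SU(2)$ is connected, the induced map lies in the identity component, giving the familiar two-to-one surjective homomorphism $U\mapsto R_U$ from $SU(2)$ onto $SO(3)$ defined by $U\sigma_i U^\dagger=\sum_j (R_U)_{ji}\,\sigma_j$ (and any $U\in U(2)$ induces the same conjugation as its $SU(2)$ representative, so we may restrict to $SU(2)$). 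Hence a given linear map on $\spn\{X,Y,Z\}$ is implementable by conjugation if and only if it lies in $SO(3)$.

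Next I would identify which signed coordinate permutations lie in $SO(3)$. The linear map $R$ defined by $\sigma_i\mapsto c_{\pi(i)}\sigma_{\pi(i)}$ has matrix $\diag(c_1,c_2,c_3)P_\pi$, where $P_\pi$ is the permutation matrix of $\pi$, so $\det R=\sgn(\pi)\cdot c_1c_2c_3$; thus $R\in SO(3)$ exactly when $c_1c_2c_3=\sgn(\pi)$. Given $\pi$ and any two of the signs, say $c_{\pi(1)}$ and $c_{\pi(2)}$, this forces the ``corresponding third coefficient'' to be $c_{\pi(3)}=\sgn(\pi)\cdot c_{\pi(1)}c_{\pi(2)}$ (using $c^2=1$); with this choice $R\in SO(3)$, so by surjectivity of $SU(2)\to SO(3)$ there is a single-qubit unitary $U$ with $U\sigma_iU^\dagger=c_{\pi(i)}\sigma_{\pi(i)}$ for all $i$. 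Equivalently, one can exhibit explicit Clifford generators — a Hadamard-type rotation that swaps two axes (with a sign) and a $2\pi/3$ rotation cycling the three axes — and observe that they generate the order-$24$ group of orientation-preserving signed axis permutations; this is the route taken implicitly in \cite{CM16_hamiltonians}.

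Finally I would track the effect on the Pauli decomposition to confirm it matches the statement. For a $2$-qubit term written as in \cref{eqn:pauliDecomp}, simultaneous conjugation by this $U$ sends $M_{ij}\,\sigma_i\sigma_j$ to $M_{ij}\,(U\sigma_iU^\dagger)\otimes(U\sigma_jU^\dagger)=M_{ij}\,c_{\pi(i)}c_{\pi(j)}\,\sigma_{\pi(i)}\sigma_{\pi(j)}$, and likewise $v_k\,\sigma_kI\mapsto v_k\,c_{\pi(k)}\,\sigma_{\pi(k)}I$ and $w_k\,I\sigma_k\mapsto w_k\,c_{\pi(k)}\,I\sigma_{\pi(k)}$; by linearity this is precisely the termwise substitution $\sigma_i\mapsto c_{\pi(i)}\sigma_{\pi(i)}$ applied to every element of $\calS$. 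Since \cref{fact:onelocalconjugating} says simultaneous conjugation by $U$ leaves the complexities of $\sLH$ and $\sprodLH$ unchanged, this completes the argument. There is no real obstacle beyond bookkeeping; the one point worth emphasizing is why all three signs cannot be prescribed independently — an odd number of sign flips reverses orientation, producing an element of $O(3)\setminus SO(3)$ that no conjugation can implement — which is exactly why the third coefficient is determined rather than free.
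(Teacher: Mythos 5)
Your argument is correct and follows the same underlying route as the paper's extremely terse justification (``consider simultaneously rotating the three axes of the Bloch sphere''), namely the $SU(2)\to SO(3)$ correspondence and the determinant condition $c_1c_2c_3=\sgn(\pi)$ that singles out the orientation-preserving signed permutations. You have merely spelled out the details that the paper leaves implicit, including the linearity argument showing the Pauli decomposition transforms as claimed and the invocation of \cref{fact:onelocalconjugating}.
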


To justify the above fact, consider simultaneously rotating the three axes of the Bloch sphere.
Next, we quote the following, more involved, fact without proof.

\begin{fact}[\cite{CM16_hamiltonians, horodecki1996information}]\label{fact:local_rot}
  Let $H$ be a 2-qubit Hamiltonian term with Pauli decomposition
  $H=\sum_{i,j=1}^{3} M_{ij} \sigma_i \sigma_j + \sum_{k=1}^{3} \paren{v_k \sigma_k I
  + w_k I \sigma_k}$. For any single-qubit unitary $U$,
	\begin{equation}\label{eq:rot_pauli_decomp}
    U^{\otimes 2} H (U^\dagger)^{\otimes 2}=\sum_{i,j=1}^{3} (R M R^T)_{ij}
    \sigma_i  \sigma_j + \sum_{k=1}^{3}\paren{(Rv)_k \sigma_k I + (Rw)_k I
    \sigma_k}
	\end{equation}
  for some $R\in SO(3)$.  Likewise, for any $R\in SO(3)$, there exists a
  single-qubit $U$ such that the Pauli decomposition of $U^{\otimes 2} H
  (U^\dagger)^{\otimes 2}$ matches \cref{eq:rot_pauli_decomp}.
\end{fact}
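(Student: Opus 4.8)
The plan is to reduce the statement to the standard two-to-one covering homomorphism $SU(2)\to SO(3)$, so that the only real content is a bookkeeping computation. Since conjugation is insensitive to a global phase, I would first assume without loss of generality that $U\in SU(2)$. Conjugation by $U$ sends traceless Hermitian $2\times 2$ matrices to traceless Hermitian $2\times 2$ matrices and preserves the Hilbert--Schmidt inner product $\langle A,B\rangle=\tfrac12\tr(A^\dagger B)$, under which $\{\sigma_1,\sigma_2,\sigma_3\}$ is an orthonormal basis of that three-dimensional real space. Hence there is an orthogonal matrix $R$ with $U\sigma_i U^\dagger=\sum_j R_{ji}\sigma_j$; and because $U\mapsto R$ is continuous, $SU(2)$ is connected, and $R$ is the identity when $U=I$, the determinant of $R$ is identically $+1$, i.e. $R\in SO(3)$.

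The second step is purely computational. Conjugating a product of Paulis by $U^{\otimes 2}$ factors as $(U\sigma_i U^\dagger)\otimes(U\sigma_j U^\dagger)$, so
\[
U^{\otimes 2}(\sigma_i\sigma_j)(U^\dagger)^{\otimes 2}=\sum_{k,l}R_{ki}R_{lj}\,\sigma_k\sigma_l,\qquad U^{\otimes 2}(\sigma_k I)(U^\dagger)^{\otimes 2}=\sum_{l}R_{lk}\,\sigma_l I,
\]
and likewise for $I\sigma_k$. Substituting into the Pauli decomposition of $H$ and collecting the coefficient of $\sigma_k\sigma_l$ yields $\sum_{i,j}R_{ki}M_{ij}R_{lj}=(RMR^{T})_{kl}$, while the coefficient of $\sigma_l I$ is $\sum_k R_{lk}v_k=(Rv)_l$ and that of $I\sigma_l$ is $(Rw)_l$; this is exactly \cref{eq:rot_pauli_decomp}.

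For the converse I would invoke surjectivity of $SU(2)\to SO(3)$: given $R\in SO(3)$, pick any $U\in SU(2)$ mapping to it and run the computation above. Surjectivity can be shown by checking directly that $e^{-i\theta\sigma_k/2}$ induces the rotation by angle $\theta$ about the $k$-th coordinate axis and observing that rotations about two distinct axes generate $SO(3)$; alternatively, the image is a compact (hence closed) subgroup whose Lie algebra is all of $\mathfrak{so}(3)$, since the differential of the map at the identity is the isomorphism $\mathfrak{su}(2)\cong\mathfrak{so}(3)$, so the image is open and therefore, $SO(3)$ being connected, everything. I do not expect a genuine obstacle here; the one thing to get right is the index convention --- whether $R$ or $R^{T}$ acts on Bloch vectors, and correspondingly whether the correlation matrix transforms as $RMR^{T}$ or $R^{T}MR$ --- which must be fixed once and used consistently. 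Together with the classical covering statement this is all that is needed, and it is precisely what is recorded in \cite{horodecki1996information} and used by \cite{CM16_hamiltonians}.
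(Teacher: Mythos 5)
Your proof is correct. Note that the paper does not prove this fact at all --- it explicitly says ``we quote the following, more involved, fact without proof'' and cites \cite{CM16_hamiltonians, horodecki1996information} --- so there is no in-paper argument to compare against; the route you take, namely the adjoint action of $SU(2)$ on traceless Hermitian $2\times 2$ matrices realizing the double cover $SU(2)\to SO(3)$ (orthogonality from preservation of the Hilbert--Schmidt form, $\det R = +1$ from connectedness) followed by the tensor-factorization bookkeeping, is the standard proof and is what underlies the cited references. Your worry about index conventions is the right thing to worry about, and you resolve it consistently: with $U\sigma_i U^\dagger=\sum_j R_{ji}\sigma_j$, Bloch vectors transform as $v\mapsto Rv$ and the correlation matrix as $M\mapsto RMR^{\top}$, matching \cref{eq:rot_pauli_decomp}.
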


A further straightforward observation from \cite{CM16_hamiltonians} is that in the Pauli decomposition (\cref{eqn:pauliDecomp}), if $H$ is symmetric then the correlation matrix $M$ is symmetric,
and if $H$ is antisymmetric then $M$ is skewsymmetric, meaning $M=-M^{\top}$.

Finally, the below observation combines some of the above facts to establish a ``normal form'' for symmetric and antisymmetric terms.

\begin{fact}\label{remark:symAntiSym}
	If a 2-qubit Hamiltonian term $H$ is symmetric, and so the associated correlation matrix $M$ is symmetric, there exists $R\in SO(3)$ which diagonalizes $M$.
	Combining \cref{fact:local_rot,fact:onelocalconjugating},
	for any 2-qubit symmetric term $H$,
	there exists a symmetric term of the form $H'=\sum_{i=1}^3 u_i \sigma_i \sigma_i + \sum_{j=1}^3 v_j (\sigma_j I + I \sigma_j)$ such that
	the complexities of
	$\sLH[\{H\}]$ and $\sprodLH[\{H\}]$ are respectively the same as
	$\sLH[\{H'\}]$ and $\sprodLH[\{H'\}]$.

	If $H$ is a 2-qubit antisymmetric term that is not 1-local, then $M$ is skewsymmetric and nonzero.
	Such an $M$ may be block diagonalized via some $R\in SO(3)$ such that $H$ is mapped to $a\!\left(\sigma_i\sigma_j - \sigma_j\sigma_i\right) + \sum_{k=1}^3 v_k \left(\sigma_k I - I\sigma_k\right)$ \cite{thompson1988normalForm,CM16_hamiltonians}.
        In particular, by \cref{fact:relabelPaulis}, $H$ can be mapped to $a\!\left(XZ-ZX\right) + \sum_{k=1}^3 v'_k \left(\sigma_k I - I\sigma_k\right)$.
	Therefore, the complexities of $\sLH[\{H\}]$ and $\sprodLH[\{H\}]$ are unaffected by assuming $H$ has this form.
\end{fact}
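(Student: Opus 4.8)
The plan is to handle the symmetric and antisymmetric cases separately, in each case reducing the statement to a standard real-matrix normal form for the correlation matrix $M$ and then transporting that normal form up to the Hamiltonian level using \cref{fact:local_rot,fact:onelocalconjugating} (together with \cref{fact:relabelPaulis} for the final relabeling in the antisymmetric case). First I would treat the symmetric case. Using $\SWAP(\sigma_i\sigma_j)\SWAP = \sigma_j\sigma_i$ and $\SWAP(\sigma_k I)\SWAP = I\sigma_k$, the hypothesis $H = \SWAP(H)\SWAP$ forces $M = M^\top$ and forces the two $1$-local coefficient vectors to coincide, $v = w$ (this is the earlier-quoted observation of \cite{CM16_hamiltonians}). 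A real symmetric $M$ is orthogonally diagonalizable; by flipping the sign of a column of the eigenbasis if needed — which leaves the diagonal matrix unchanged, since a sign-flip diagonal commutes with it — I can take the diagonalizer in $SO(3)$ rather than merely $O(3)$, so $RMR^\top = \diag(u_1,u_2,u_3)$ for some $R \in SO(3)$. \cref{fact:local_rot} then produces a single-qubit unitary $U$ with $U^{\otimes 2}HU^{\dagger\otimes 2} = \sum_i u_i\sigma_i\sigma_i + \sum_k (Rv)_k\sigma_k I + (Rw)_k I\sigma_k$, and since $v = w$ this is exactly the advertised form $H'$ with $v_j = (Rv)_j$; \cref{fact:onelocalconjugating} gives the equality of the complexities of $\sLH[\{H\}], \sprodLH[\{H\}]$ with those of $\sLH[\{H'\}], \sprodLH[\{H'\}]$.

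Next I would treat the antisymmetric case. The same $\SWAP$-computation, now with the minus sign, forces $M = -M^\top$ and $w = -v$; and since $H$ is not $1$-local, $M \neq 0$, so as a $3\times 3$ skew-symmetric matrix it has rank exactly $2$ and admits the real block normal form consisting of a single nonzero $2\times 2$ block $\left(\begin{smallmatrix}0 & a\\ -a & 0\end{smallmatrix}\right)$ and zeros elsewhere, attainable by conjugating with some $R \in SO(3)$ \cite{thompson1988normalForm}. By \cref{fact:local_rot} this becomes a single-qubit conjugation taking $H$ to $a(\sigma_i\sigma_j - \sigma_j\sigma_i) + \sum_k (Rv)_k(\sigma_k I - I\sigma_k)$, where $i \neq j$ index the nonzero block (using $Rw = -Rv$). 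Finally \cref{fact:relabelPaulis} lets me pick a permutation sending $\{i,j\}$ to $\{1,3\}$ together with two free Pauli signs, which sends $\sigma_i\sigma_j - \sigma_j\sigma_i$ to $\pm(XZ - ZX)$ and sends the $1$-local part to $\sum_k v'_k(\sigma_k I - I\sigma_k)$ for relabeled coefficients $v'_k$; absorbing the leftover sign into $a$, and noting the relabeling is itself a simultaneous conjugation, \cref{fact:onelocalconjugating} again preserves all complexities, giving the claim.

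The hard part here is bookkeeping rather than substance: the content is entirely the standard symmetric and skew-symmetric normal forms plus the already-established conjugation facts. The two things to watch are (i) that the orthogonal matrix diagonalizing or block-diagonalizing $M$ is chosen in $SO(3)$, since \cref{fact:local_rot} realizes only $SO(3)$ rotations as single-qubit conjugations — this is arranged in either case by negating a column, which does not disturb the (block-)diagonal structure — and (ii) that the residual sign on the $XZ - ZX$ term after relabeling is harmless, which holds because \cref{fact:relabelPaulis} supplies two free Pauli signs and the remaining overall sign is absorbed into the free scalar $a$. Everything else is immediate from the cited facts.
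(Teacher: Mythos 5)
Your proof is correct and follows essentially the same route the paper intends: the paper states this as a Fact and leaves the argument implicit in the citations to \cref{fact:onelocalconjugating,fact:relabelPaulis,fact:local_rot} and the normal-form references, and you have simply spelled out that chain. The two points you flag as the real content — arranging the (block-)diagonalizer to lie in $SO(3)$ rather than $O(3)$ by negating a row/column (legitimate because a $\pm 1$ diagonal $D$ satisfies $D\Lambda D = \Lambda$ and fixes the skew block structure when the negated coordinate is the kernel direction), and absorbing the residual sign from the Pauli relabeling into the free scalar $a$ — are exactly the bookkeeping needed, and you also correctly supply the unstated fact that $\SWAP$-(anti)symmetry forces $w = v$ (resp.\ $w = -v$) on the 1-local coefficient vectors, which is necessary for the advertised forms $\sum_j v_j(\sigma_j I \pm I\sigma_j)$.
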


\subsection{Proofs of Antisymmetric and Symmetric Lemmas}\label{ssec:proofsOfLemmas}

We now prove the two lemmas required in the proof of the main theorem, respectively handling the cases that $\calS$ contains an antisymmetric term and that $\calS$ contains a symmetric term.
In both cases, it is sufficient for $\calS$ to contain just a single term.
Interestingly, our construction in \cref{lem:antisymProdLH_posProdLH} for antisymmetric terms is unweighted, meaning all weights are $+1$. In this case, the final Hamiltonian is fully determined by the specification of a single 2-qubit term and the interaction graph.
Our construction in \cref{lem:symmetricProdLH} uses positive and negative unit weights, $\pm 1$.
Intuitively, antisymmetric terms inherently allow negativity by simply permuting the qubits they act on, while for symmetric terms we must use negative weights.

\begin{proof}[Proof of \cref{lem:antisymProdLH_posProdLH}]
	Consider an arbitrary instance of the problem \Wlinearmaxcut{} with $W=\diag(1,1,0)$.
	For a given graph $G=(V,E)$, the objective function is
	\[
		\Wlinearmaxcutop(G) = \frac{1}{2}~\max_{\hat{\imath} \in S^{2}} \sum_{ij\in E} \norm{W\hat{\imath}-W\hat{\jmath}} .
	\]
	Given a parameter $C$, we must decide whether $\Wlinearmaxcutop(G)$ is at least $C$ or at most $C-\eps$.
	To reduce \Wlinearmaxcut{} to \sprodLH{}, we first construct a gadget using the promised antisymmetric term.
	Then, we apply this gadget according to the graph $G$ such that the minimum energy of a product state in our final Hamiltonian will equal $-\Wlinearmaxcutop(G)$.

	Denote the assumed 2-qubit antisymmetric term that is not 1-local in $\calS$ by $H$.
	By \cref{fact:local_rot}, our antisymmetric term $H$ may be mapped to a term of the form
    \[
        w \left(X^aZ^b-Z^aX^b\right) + \sum_{k=1}^3 v_k \left(\sigma_k^a I^b - I^a\sigma_k^b\right)
    \]
	where all coefficients are real, and we have $w \neq 0$ since the term is
not 1-local.  As explained in \cref{remark:Scontains} and
\cref{ssec:closurePropsOfsLH}, the complexity of $\sprodLH[\{H\}]$ is
equivalent to that of $\sprodLH[\{H'\}]$ for $H'$ derived using a variety of
operations, including permutations and linear combinations.  If $w$ is
negative, then we redefine the direction the term acts in, $H^{ab}$ versus
$H^{ba}$, so that $w$ is positive.  Finally, we scale\footnote{If we want the
terms to have unit weights, we could forgo scaling the term and reduce to
$w\Wlinearmaxcutop$ instead. As $w > 0$ this problem has the same complexity
as $\Wlinearmaxcutop$.} the term so that $w=1$ and define a single-qubit
Hermitian matrix $A=\sum_{k=1}^3 v_k \sigma_k$.  Given the complexity is
unchanged using $H$ or $H'$, we simply redefine the original term, so that
    \[
        H^{ab} = X^aZ^b - Z^aX^b + A^aI^b - I^aA^b .
    \]

    Next, we use a symmetrization gadget to remove the 1-local terms $AI-IA$.
    For four qubits $a,b,c,d$, define
    \[
        B = H^{ab} + H^{bc}+H^{cd}+H^{da} .
    \]
    Note that here the direction of the interaction matters, since the terms are asymmetric.
    Then
    \[
        B = (X^a - X^c)(Z^b-Z^d) - (Z^a-Z^c)(X^b-X^d) .
    \]

    Now we consider how $B$ interacts with product states on four qubits.
    For $e=a,b,c,d$, define
    \[
	    \rho^e = \frac{1}{2}(I + r^e\cdot v^e)
	\]
	with $v^e=(X^e,Y^e,Z^e)$ the Pauli operators and $r^e = (x^e,y^e,z^e)$ the Bloch vector.
	Then, writing any product state on qubits $a,b,c,d$ as $\rho^a\rho^b\rho^c\rho^d$,
    the expectation value on $B$ is $\tr\left(B \rho^a \rho^b \rho^c \rho^d\right)$.
    After eliminating terms, we find this equals
    \begin{gather*}
        (r^a - r^c)^{\top} W'  (r^b-r^d) \quad\text{for}\quad W'=\begin{bmatrix} 0 & 0 & 1\\ 0 & 0 & 0 \\ -1 & 0 & 0\end{bmatrix} .
    \end{gather*}
    It is helpful to note that $W'^{\top} = -W'$.

    Now, consider the graph $G$ given as input to \Wlinearmaxcut.
    Associate a qubit with each vertex and call these the ``vertex qubits''.
    For each edge $ij$, construct a copy of $B$ such that it acts on qubits $ij$ and two ancilla qubits.
    The vertex qubits may be shared among several gadgets, while the ancilla qubits are part of only one gadget.
    In particular, we choose to associate the vertex qubits with qubits $a$ and $c$ in each copy of $B$, letting $b$ and $d$ be the ancilla.
    We will refer to the copy of $B$ which acts on vertex qubits $i$ and $j$ as $B^{ij}$.
	Our Hamiltonian is then
	\[
		H_{\text{final}} = \sum_{ij\in E} B^{ij} .
	\]

	Before analyzing the full Hamiltonian $H_{\text{final}}$,
    consider the minimum expectation of a single gadget $B^{ij}$ if the two vertex qubits are fixed, i.e.\ $\min_{r^b,r^d} (r^i - r^j)^{\top} W' (r^b-r^d)$.
    The minimum is achieved when $r^b = -W'^{\top}(r^i-r^j) / \norm{W'^{\top}(r^i-r^j)}$ and
    $r^d=-r^b$, which yields an expectation of
    \begin{gather*}
        - 2 \norm{W''(r^i-r^j)} \quad\text{for}\quad W''=\diag(1,0,1).
    \end{gather*}
    Therefore, given an arbitrary graph $G=(V,E)$,
    applying our gadget to every edge constructs a Hamiltonian
    such that the minimum expectation of any product state is equal to
    \begin{gather*}
        2\min_{r^i \in S^2} \sum_{ij\in E}
        -\norm{W''(r^i-r^j)}
        = -2\max_{r^i \in S^2}
        \sum_{ij\in E} \norm{W''(r^i-r^j)} .
    \end{gather*}

	For any graph $G$, the objective $\Wlinearmaxcutop(G)$ is equal for $W=\diag(1,1,0)$ and $W''=\diag(1,0,1)$.
	Alternatively, we may use our freedom to relabel Paulis to redefine the 2-qubit term $H$ such that the final weight matrix is $W$.

	Finally, multiplying the full Hamiltonian $H_{\text{final}}$ by $\frac{1}{2}$ gives us that the minimum expectation of any product state equals
    \[
        -\max_{r^i \in S^2} \sum_{ij\in E} \norm{W(r^i-r^j)} = -\Wlinearmaxcutop(G).
    \]

    We conclude deciding \Wlinearmaxcut{} reduces to deciding \prodLH{} on $H_{\text{final}}$.
    Since $H_{\text{final}}$ is entirely constructed from the antisymmetric term $H\in\calS$, this completes the desired reduction of \Wlinearmaxcut{} with $W=\diag(1,1,0)$ to \sprodLH{}.
\end{proof}

Next we prove the lemma dealing with $\calS$ containing a symmetric term. The construction is nearly the same as the in the previous proof, but requires negative weights to implement the symmetrization gadget removing 1-local terms.

\begin{proof}[Proof of \cref{lem:symmetricProdLH}]
	Given fixed $\calS$, we will show there exists some fixed $W$ such that \Wlinearmaxcut{} reduces to \sLH{}.
	But, before describing $W$, we must analyze $\calS$.

	Denote the assumed 2-qubit symmetric term that is not 1-local in $\calS$ by $H$.
	As in the previous proof, without changing the complexity of $\sprodLH[\{H\}]$ we may
	conjugate and scale as necessary so that
    \[
        H^{ab} = \alpha^- X^aX^b + \beta^- Y^aY^b + \gamma^- Z^aZ^b +  \sum_{j=1}^3 v_j (\sigma_j^a I^b + I^a \sigma_j^b) ,
    \]
    where all coefficients are real and at least one of $\alpha^-,\beta^-,\gamma^-$ is nonzero since $H$ is nonzero.
    The superscripts in the above equations are to differentiate the coefficients of $H$ from the entries of $W$, which must be non-negative.

    We again use a symmetrization gadget to remove the 1-local terms, but now require negative weights.
    Given four qubits $a,b,c,d$,
    define $B= H^{ab} + H^{cd} - H^{bd} - H^{ac} $.
    This is a rectangle with two positive edges and two negative edges. Then
    \[
        B = \alpha(X^a - X^d)(X^b - X^c) + \beta (Y^a - Y^d)(Y^b - Y^c) + \gamma (Z^a
        - Z^d)(Z^b - Z^c) .
    \]
    Now we see how $B$ interacts with product states on four qubits.
    For $e = a,b,c,d$, we again define $\rho^e = \frac{1}{2}\paren{I + r^e \cdot v^e}$
    with $v^e = (X^e, Y^e, Z^e)$ and $r^e = (x^e, y^e, z^e)$. Then, writing any
    product state on $a,b,c,d$ as $\rho^a\rho^b\rho^c\rho^d$, the expectation
    value on $B$ is $\tr\left[ B \rho^a\rho^b\rho^c\rho^d \right]$, which equals
    \begin{gather*}
        \alpha^- (x^a - x^d)(x^b - x^c) + \beta^- (y^a - y^d)(y^b - y^c) + \gamma^- (z^a -
        z^d)(z^b - z^c)
    \end{gather*}
    which is in turn equal to
    \begin{gather*}
        \left( r^b - r^c \right)^{\top} W' \left(r^a - r^d\right) \quad\text{for}\quad W'= \diag\left(\alpha^- ,\beta^- ,\gamma^- \right).
    \end{gather*}

	If we fix the state of qubits $a$ and $d$ and minimize the expectation on $B$, the minimum is achieved when $r^b = - W'(r^a-r^d) / \norm{W'(r^a-r^d)} $ and $r^c=-r^b$. This minimum expectation is
	$- 2 \norm{W'(r^a-r^d)}$.
	Observe this expectation value is equivalent to
	$-2\norm{W(r^a-r^d)}$ for $W=\diag(\alpha,\beta,\gamma)$ where $\alpha=\abs{\alpha^-},\beta=\abs{\beta^-},\gamma=\abs{\gamma^-}$.

	Now we are prepared to set up our reduction.
	For $W=\diag(\alpha,\beta,\gamma)$, which is non-negative and nonzero,
	consider an arbitrary instance of \Wlinearmaxcut{}.
	For a given graph $G=(V,E)$, the objective function is again $\Wlinearmaxcut(G)$, and given a parameter $C$, we must decide whether $\Wlinearmaxcut(G)$ is at least $C$ or at most $C-\eps$.

    Associate a ``vertex qubit'' with each vertex and construct a copy of the gadget $B$ for each edge $ij$, such that it acts on $i,j$ and two ancilla qubits, and denote it $B^{ij}$. The vertex qubits may be shared among several gadgets, while the ancilla qubits are part of only one gadget.
    In particular, we choose $a$ and $d$ in each gadget to be the vertex qubits.

    Substituting our gadget for every edge constructs a Hamiltonian $H_{\text{final}}$
    such that the minimum expectation of any product state is equal to
    \begin{equation*}
         2 \min_{r^i \in S^2} \sum_{ij\in E} -\norm{Wr^i - Wr^j}
         = - 2 \max_{r^i \in S^2} \sum_{ij\in E} \norm{Wr^i - Wr^j}.
    \end{equation*}
	Simply multiplying $H_{\text{final}}$ by $\frac{1}{2}$ makes this equal to $-\Wlinearmaxcutop(G)$.

    We conclude that given $\calS$ contains a 2-qubit symmetric term $H$ that is not 1-local,
    there exists some non-negative $W=\diag(\alpha,\beta,\gamma)$ with at least one nonzero entry
    such that
    $\Wlinearmaxcut{}(G)$ reduces to $\prodLH{}(H_{\text{final}})$.
    Since $H_{\text{final}}$ was constructed using only the symmetric term $H\in\calS$, this is also a reduction to an instance of \sprodLH{}, as desired.
\end{proof}

\section{The Stretched Linear Max-Cut Problem}\label{sec:vectorProblems}

We study a generalization of the classical \maxcut{} problem which arises naturally from our study of product states and which is likely of independent interest.
Both \maxcut{} and its generalization
\[
	\kmaxcutop(G) =
	\frac{1}{2}\max_{\hat{\imath}\in S^{k-1}}
	\sum_{ij\in E} 1-\hat{\imath}\cdot \hat{\jmath} =
	\frac{1}{4}\max_{\hat{\imath}\in S^{k-1}} \sum_{ij\in E} \norm{\hat{\imath}-\hat{\jmath}}^2
\]
were introduced in \cref{sec:intro}.
As the above equation emphasizes, maximizing the distance between vectors is equivalent to maximizing the angle, optimally being anti-parallel.

Our new problem is
defined with two significant changes.  First, the sum is over distances rather
than squared distances.  Second, the distances are allowed to incorporate a linear stretch.
\begin{definition}[\WlinearmaxcutLong{} (\Wlinearmaxcut)]\label{def:WlinMaxCut}
    For a fixed $d\times d$ diagonal matrix $W$, given an $n$-vertex graph
    $G=(V,E)$ and thresholds $b > a \ge 0$ with $b - a \ge 1/\poly(n)$, decide whether
    \[
        \Wlinearmaxcut(G) = \frac{1}{2} \max_{\hat{\imath} \in S^{d+1}} \sum_{ij\in E} \norm{W\hat{\imath}-W\hat{\jmath}} = \frac{1}{2} \max_{\hat{\imath} \in S^{d+1}} \sum_{ij\in E} \sqrt{\norm{W\hat{\imath}} + \norm{W\hat{\jmath}} - 2(W\hat{\imath})^{\top}(W\hat{\jmath})}
    \]
    is at least $b$ or at most $a$.
\end{definition}

A comparison of the geometric interpretations of \kmaxcut{} and
\Wlinearmaxcut{} was given in \cref{sec:intro}.
A further interpretation comes from treating the edges of the graph as springs or rubber bands.
As explored in \cite{lovasz2019graphs}, the potential energy of a spring is quadratic
in its length, so the \kmaxcut{} value
represents the total potential energy of the system given a particular embedding.
On the other hand, the force or tension of each spring is linear in its length.
So, \Wlinearmaxcut{} gives the total force, tension, or pressure such an arrangement of springs
would apply to the surface of the sphere (or ellipsoid, more generally).

In both problems, the objective is a linear sum, of either the distances
or the inner products. Both problems generalize the
traditional \maxcut{} problem, since when restricted to $\pm 1$
labels, distances are directly proportional to squared distances.
Previous work was likely motivated to focus on squared distances
because approximation algorithms like SDPs naturally apply to inner products but not to square roots of inner products.

Our main theorem concerning \WlinearmaxcutLong{} is the below.

\WlinMaxCutTheorem*

Our hardness proofs do not require any edge weights (unlike our Hamiltonian
constructions in the previous section).

Containment in $\NP$ is immediate, and we break the proof of $\NP$-hardness
into three cases based on the entries of $W$.
The three cases depend on how many entries of $W$ are equal,
requiring different approaches for dealing with degenerate solutions.
We assume throughout that $1 = \alpha \ge \beta \ge \gamma$;
as we show in the final proof of~\Cref{thm:WlinMaxCut},
this suffices by scaling and symmetry.
\Cref{lem:WlMC_allEqual} considers the case when all three entries are equal.
\Cref{lem:WlMC_oneMax} considers the case when the largest entry is unique.
\Cref{lem:WlMC_twoMax} finally considers the case when the two largest entries are
equal and distinct from the third, combining techniques from the previous two proofs.

When $W=\diag(1,1,1)$, we prove hardness by reducing from the $\NP$-complete
\threeColoring{} problem.
We replace every edge in the graph with a 4-clique, or tetrahedron.
To deal with the symmetry created by equally weighted axes,
all of the gadgets are connected to a new \emph{sink} vertex,
removing a degree of freedom.
We then argue that there is an assignment to the new graph that simultaneously
(nearly) maximizes all of these cliques iff the original graph is 3-colorable.

\begin{lemma}\label{lem:WlMC_allEqual}
  For $W=\diag(1,1,1)$, \WlinearmaxcutLong{} is $\NP$-hard.
\end{lemma}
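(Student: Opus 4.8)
The plan is to reduce from \threeColoring{}, reusing the edge-to-triangle gadget of \cref{lem:WlMC_twoMax} and adding one universal ``sink'' vertex. Given a graph $G=(V,E)$ with $n$ vertices and $m$ edges, I would first form $H'$ exactly as in \cref{lem:WlMC_twoMax}: for every edge $ij\in E$ add a fresh vertex $k_{ij}$ together with the edges $ik_{ij}$ and $jk_{ij}$, so that each original edge becomes an edge-disjoint triangle. Then add a new vertex $s$ joined to every other vertex of $H'$; call the result $H$. The point is that with $W=I$ a single triangle is optimized by an equilateral triangle in an \emph{arbitrary} great circle, so the rigidity exploited in \cref{lem:WlMC_twoMax} (where the weight matrix itself singled out a plane) is lost. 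Making $s$ adjacent to everything adds a term $\tfrac12\sum_v\norm{s-v}$ that depends only on the colatitudes of the other vertices measured from $s$; trading this against the triangle contributions should ``remove one degree of freedom'', pinning every non-sink vertex to a common latitude circle about $s$ and thereby reducing the whole instance to a one-dimensional problem on that circle.

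Concretely, I would define the threshold $C$ as the value of the following canonical assignment: put $s$ at the north pole, fix the colatitude $\theta^*$ that maximizes the value of such a rotationally symmetric configuration, and place three color vectors $p_1,p_2,p_3$ at colatitude $\theta^*$ spaced $2\pi/3$ apart in longitude. For the easy (forward) direction, given a proper $3$-coloring $c$ of $G$ assign $p_{c(i)}$ to each $i\in V$ and, for each triangle $\{i,j,k_{ij}\}$, assign to $k_{ij}$ the unique color vector avoided by $i$ and $j$ (possible since $c(i)\ne c(j)$); then every triangle is equilateral on the latitude circle, every non-sink vertex sits at colatitude $\theta^*$, and the value is exactly $C$, so $\Wlinearmaxcut(H)\ge C$.

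For the converse I would show that \emph{any} assignment attaining value $\ge C$ is, up to a global rotation sending $s$ to the north pole, of exactly this canonical shape: all non-sink vertices lie at the common colatitude $\theta^*$ and, on each triangle, are $2\pi/3$ apart. Granting this, the argument of \cref{lem:WlMC_twoMax} carries over on the latitude circle essentially unchanged --- the equilateral configuration is its unique maximizer, and it is rigid in the sense that one vertex determines the other two --- so every connected component of $H$ is colored by only three vectors with no monochromatic edge; since $G$ is a subgraph of $H$, $G$ is $3$-colorable.

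The main obstacle is exactly this structural claim: proving that reaching the threshold forces all non-sink vertices onto one latitude circle with equilateral triangles. I would approach it through the first-order ``law of reflection'' optimality condition already invoked in the footnote of \cref{lem:WlMC_twoMax} (at an optimum, for each vertex the sum of unit vectors along its incident edges is normal to $S^2$ there), combined with the strict concavity of $\theta\mapsto 2\sin(\theta/2)$ on $[0,\pi]$ to force the common colatitude and its uniqueness. Care is needed because the natural ``super-gadgets'' $\{s,i,j,k_{ij}\}$ --- each a $K_4$ whose optimum is a regular tetrahedron with one vertex at $s$ --- overlap: they share $s$ and, at a high-degree vertex $i$ of $G$, the edge $si$ and the vertex $i$ are reused by many of them, so a per-$K_4$ counting bound does not immediately close the argument and the optimality analysis must be carried out globally.
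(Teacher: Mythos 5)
Your construction matches the paper's exactly (edge-triangles plus one universal sink), and your forward direction is also the paper's (a regular-tetrahedron vertex on each color class and on the sink). The gap is the converse: you stop at a stated ``obstacle'' and sketch a global first-order/concavity analysis that you do not carry out, whereas the paper closes the converse with a short combinatorial argument requiring no calculus at all.

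The paper views $H$ as covered by the $m$ four-cliques $\{i,j,k_{ij},t\}$ and uses the threshold $m\,\Wlinearmaxcut(K_4)$. It then invokes two facts: the regular tetrahedron is the unique maximum-perimeter tetrahedron inscribed in $S^2$, and a regular tetrahedron inscribed in $S^2$ is determined (up to labeling the complementary pair) by any two of its vertices. Meeting the threshold forces every $K_4$ to realize a regular tetrahedron; two adjacent $K_4$'s share the vectors $\hat{\jmath}$ and $\hat{t}$ and therefore must use the same four vectors, and this propagates across each connected component. That yields a $4$-coloring of $H$ with $t$ alone in its class, hence a $3$-coloring of $G$. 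The rigidity of the regular tetrahedron carries the whole burden.

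Your worry that ``a per-$K_4$ counting bound does not immediately close the argument'' because the cliques share the edges $si$ at high-degree vertices is in fact correct for the simple graph as you (and the paper) write it down: those $K_4$'s are not edge-disjoint, and the bound $m\,\Wlinearmaxcut(K_4)$ then overshoots what even a $3$-coloring can achieve. The move you are missing is simply to make them edge-disjoint by taking a fresh copy of $it$ and $jt$ for each $ij\in E$ (equivalently, integer weight $\deg_G(i)$ on the sink edge $it$). With that, the per-$K_4$ decomposition is exact, the uniqueness-of-the-regular-tetrahedron argument closes, and you never need the global ``law-of-reflection'' analysis --- which, as you suspected, would be considerably harder (in particular the optimal common colatitude $\theta^*$ need not be the tetrahedral one when the sink-edge multiplicities are unbalanced). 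As it stands, your converse is a plan for a hard variational argument, not a proof.
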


\begin{proof}
  We will reduce from $\threeColoring{}$. Consider an arbitrary graph $G =
  (V,E)$ on $n$ vertices and $m$ edges. We construct $H = (V', E')$ as follows:
  Start with $G$.  For each edge $ij \in E$, add vertices $k_{ij},t_{ij}$ and
  connect $i,j, k_{ij}, t_{ij}$ to form a 4-clique.  Then add a \emph{sink}
  vertex $t$ and an edge $tt_{ij}$ for each $t_{ij}$. $H$ therefore consists of
  $m$ edge-disjoint 4-cliques, each containing one edge from $G$, and $m$
  additional edges from vertices $t_{ij}$ to $t$.

  We claim that if $G$ is 3-colorable, then $\Wlinearmaxcut(H) \geq
  m\Wlinearmaxcut(K_4) + m$.  Conversely, we claim that if $\Wlinearmaxcut(H)
  \ge m\Wlinearmaxcut(K_4) + m-\eps$, for an $\eps = \bOm{1/m^2}$ we will
  choose later, then $G$ is 3-colorable.

  First, suppose $G$ is 3-colorable. We will show how to derive a vector
  assignment to $H$ attaining $m\Wlinearmaxcut(K_4) + m$ from a 3-coloring of
  $G$. Note that $\Wlinearmaxcut(K_4)$ is the maximum perimeter of a
  tetrahedron formed by choosing four vectors on the sphere.

  Let $u,v,w,r$ be vectors corresponding to a regular tetrahedron inscribed in
  the unit sphere, known to achieve the maximum perimeter of any inscribed
  tetrahedron at $4\sqrt{6}$ \cite{maehara2001total}.  We 3-color $G$ (and
  therefore the vertices of $H$ other than $(k_{ij})_{ij \in E}$, $(t_{ij})_{ij
  \in E}$, and $t$) with $u,v,w$, assigning each vertex the vector matching its
  color.  Then for each $ij \in E$, we assign $k_{ij}$ the vector in
  $\{u,v,w\}$ not assigned to $i$ or $j$, and $r$ to $t_{ij}$.  Finally, we
  assign $-r$ to $t$.  By construction, this assignment will yield a value of
  $\Wlinearmaxcut(K_4)$ on each 4-clique gadget, and the edges $tt_{ij}$ will
  each contribute exactly 1.  Thus $\Wlinearmaxcut(H) \geq
  m\Wlinearmaxcut(K_4)+m$, as desired.

  Now we will show the converse.  Suppose there exists an assignment
  $(\hat{k})_{k \in V'}$ of vectors achieving greater than
  $m\Wlinearmaxcut(K_4)+m - \eps$ on $H$.  We construct a $4$-coloring of each
  connected component of $H \setminus \cbracd{t}$ as follows: Choose an edge $ij$
  in the corresponding component of $G$ (note that there is a 1-to-1
  correspondence between components of $G$ and $H \setminus \cbracd{t}$).  Let
  $\hat{\imath},\hat{\jmath},\hat{k}_{ij},\hat{t}_{ij}$ be the vectors assigned
  to the vertices of the corresponding clique in $H$. Our coloring will use these
  four vectors as colors, which we denote as the set $\calC$. We assign each
  vertex $v$ the color corresponding to the element of $\calC$ that is closest to
  the vector assigned to $v$ in the $\Wlinearmaxcut$ assignment. We will show
  that this is a proper coloring, and that it assigns the same color to every
  $t_{ij}$, and therefore gives a 3-coloring of $G$.

  To show that this is a proper coloring, we need to show that every pair of
  adjacent vertices in $H \setminus \cbracd{t}$ are assigned different colors,
  i.e.\ that the closest elements of $\calC$ to the vectors assigned to them in
  the $\Wlinearmaxcut$ assignment are distinct.  As every edge in $H \setminus
  \cbrac{t}$ is contained in some 4-clique corresponding to some edge $xy$ of
  $G$, it will suffice to show the following: for every edge $xy$ in the
  component of $G$ containing $ij$, if $d$ is the smallest number of edges in a
  path in $G$ starting with $ij$ and ending with $xy$, each of $\hat{x}$,
  $\hat{y}$, $\hat{k}_{xy}$ and $\hat{t}_{xy}$ is within $\bO{d\sqrt{\eps}}$ of a
  \emph{different} element of $\calC$.

  First we note that, as $H$ consists of $m$ edge-disjoint 4-cliques and $m$
  other edges, and the maximum any assignment can earn on an edge is $1$, the
  lower bound on the total amount the assignment earns implies that every
  4-clique earns at least $\Wlinearmaxcut(K_4) - \eps$ and the other edges
  $(t_{xy}t)_{xy \in E}$ earn at least $1 - \eps$ each. So by trigonometry we
  immediately have that every $\hat{t}_{xy}$ is within $\bO{\sqrt{\eps}}$ of
  $-\hat{t}$, and therefore within $\bO{\sqrt{\eps}}$ of each other, and
  $\hat{t}_{ij}$ in particular. 

  For the other vertices, we proceed by induction on $d$. We have the desired
  result trivially for $d = 1$, as in this case $ij = xy$. Now suppose it holds
  for $d$. Let $xy$ be the end of a $(d+1)$-edge path starting with $ij$.
  Without loss of generality let $y$ be the vertex of $xy$ earlier in the path,
  and let $z$ be the immediately prior vertex in the path, so by the inductive
  hypothesis $\hat{y}$, $\hat{z}$, $\hat{k}_{yz}$, and $\hat{t}_{yz}$ are
  within $\bO{d\sqrt{\eps}}$ of different elements of $\calC$. Furthermore, as
  both $\cbracd{\hat{x}, \hat{y}, \hat{k}_{xy}, \hat{t}_{xy}}$ and
  $\cbracd{\hat{y}, \hat{z}, \hat{k}_{yz}, \hat{t}_{yz}}$ are vertices of
  tetrahedra with perimeters at least $4\sqrt{6} - \eps$, by
  \cref{lem:tetrahedronNearRegular} the edge lengths of these tetrahedra are
  all in the interval $\brac{\frac{4\sqrt{6}}{6} - \bO{\sqrt{\eps}},
  \frac{4\sqrt{6}}{6} + \bO{\sqrt{\eps}}}$. So as we have already shown that
  $\hat{t}_{yz}$, $\hat{t}_{xy}$, and $\hat{t}_{ij}$ are all within
  $\bO{\sqrt{\eps}}$ of each other, then the criteria of
  \cref{lem:tetrahedronSharedVertices} are satisfied with $ABCD =
  \hat{y}\hat{t}_{xy}\hat{x}\hat{k}_{xy}$ and $AEFG =
  \hat{y}\hat{t}_{yz}\hat{z}\hat{k}_{yz}$ and so $\hat{x}$ and $\hat{k}_{xy}$
  are each within $\bO{\sqrt{\eps}}$ of (different) elements of $\cbracd{\hat{z},
  \hat{k}_{yz}}$. So then the result follows by the triangle inequality.

  We now have that every vertex in $H$ is assigned a vector within
  $\bO{m\sqrt{\eps}}$ of an element of $\calC$, and so taking $\eps$ to be a
  sufficiently small constant times $1/m^2$, we can take these distances to be at
  most $0.1$. Moreover, for any two adjacent vertices, the choice of color will
  be different, and so as by \cref{lem:tetrahedronNearRegular} the distances
  between these four vectors are at least $4/\sqrt{6} - \bO{\eps}$, this implies
  any two adjacent vertices are assigned different colors, and so we have a
  proper 4-coloring of $H \setminus \cbracd{t}$. Finally, note that, as every
  $\hat{t}_{xy}$ is within $\bO{\eps}$ of every other one, this implies all the
  $t_{xy}$ were assigned the same color, and so no vertex in $G$ uses this color.
  So this gives us a proper 3-coloring of $G$.
\end{proof}

Next, in \cref{lem:WlMC_oneMax} we consider the case $W=\diag(1,\beta,\gamma)$
and $1>\beta,\gamma \geq 0$, in which the maximum weight is unique.  The
approach in the proof of \cref{lem:WlMC_allEqual}, reducing from
\threeColoring{}, can in fact be modified to cover this case, but analyzing
triangles inscribed in ellipses instead of circles is more technical.  Instead,
we take a different approach and in the case of a unique maximum (whether
$\beta$ equals $\gamma$ or not) give a reduction from \maxcut{} instead of
\threeColoring{}.

We insert ancilla vertices so that every vertex in the original graph
is the center of a large star.
These star gadgets amplify any deviation from the highest-weight axis such that
any near-optimal solution must approximate a standard 1-dimensional labeling,
as in \maxcut{}.

\begin{lemma}\label{lem:WlMC_oneMax}
	For any $W=\diag(1,\beta,\gamma)$ with $1>\beta \geq \gamma \geq 0$,
	\WlinearmaxcutLong{} is $\NP$-hard.
\end{lemma}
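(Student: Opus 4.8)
The plan is to reduce from the $\NP$-complete decision version of \maxcut{}, rather than from \threeColoring{} as in \cref{lem:WlMC_twoMax,lem:WlMC_allEqual}; this exploits the fact that here the $x$-axis is the \emph{unique} highest-weight axis of $W=\diag(1,\beta,\gamma)$, and as a bonus ties \WlinearmaxcutLong{} directly to its classical ancestor. Given a \maxcut{} instance $(G,C)$ with $G=(V,E)$, $\abs{V}=n$, $\abs{E}=m$, write $e_1=(1,0,0)^{\top}$. Build a graph $H$ by attaching to each original vertex $i\in V$ a bundle of $N$ fresh degree-one ``leaf'' vertices, each adjacent only to $i$, where $N=\poly(m)$ is a large polynomial fixed at the end of the argument; output the \Wlinearmaxcut{} instance $(H,\,Nn+C)$.

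Two elementary estimates drive everything. First, a degree-one vertex whose unique neighbour is labelled by a unit vector $u$ can contribute at most $f(u):=\max_{v\in S^{2}}\norm{Wu-Wv}$, and all $N$ leaves of a bundle can simultaneously attain this. Since $W$ has operator norm $1$ and $\gamma\le\beta<1$, the triangle inequality gives $f(u)\le 1+\norm{Wu}\le 1+\sqrt{\beta^{2}+(1-\beta^{2})u_1^{2}}\le 2$, with equality throughout precisely when $u=\pm e_1$; estimating the square root ($1-\sqrt{s}\ge\tfrac{1-s}{2}$ for $0\le s\le 1$) upgrades this to the quantitative gap $2-f(u)\ge\tfrac{1}{2}(1-\beta^{2})(1-u_1^{2})$ --- this is exactly where the hypothesis $\beta<1$, i.e.\ a \emph{unique} maximum, enters. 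Second, $W$ is a contraction, so $\norm{Wu-Wv}\le\norm{u-v}$ for all $u,v$.

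For completeness, if $\maxcutop(G)\ge C$, fix an optimal cut, label one side by $e_1$ and the other by $-e_1$, and label each leaf optimally; since $f(\pm e_1)=2$, each bundle contributes $2N$ and each original edge contributes $2$ or $0$ according to whether it is cut, so the objective (before the global $\tfrac12$) is $2Nn+2\maxcutop(G)\ge 2Nn+2C$. For soundness, take any labelling of $H$ with value $\ge Nn+C$ and assume its leaves are set optimally, so bundle $i$ contributes $Nf(u^{i})$ where $u^{i}$ is the label of $i$. The $m$ original edges contribute at most $2m$ in total, so $\sum_i N(2-f(u^{i}))\le 2m$, hence $2-f(u^{i})\le 2m/N$ for each $i$; the gap estimate then forces $1-(u_1^{i})^{2}\le 4m/\bigl((1-\beta^{2})N\bigr)$, so $\norm{u^{i}-\sigma_i e_1}\le\delta$ with $\sigma_i=\sgn(u_1^{i})$ and $\delta=\bO{\sqrt{m/N}}$. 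Read off the cut of $G$ given by the signs $\sigma_i$: a same-side edge $ij$ has $\norm{Wu^{i}-Wu^{j}}\le\norm{u^{i}-u^{j}}\le 2\delta$, while every edge contributes at most $2$, so if this cut separates $k$ edges the original edges contribute at most $2k+2m\delta$; comparing with the assumed value yields $k\ge C-m\delta$. Choosing $N$ so that $m\delta<1$ (e.g.\ any $N>8m^{3}/(1-\beta^{2})$, which is polynomial because $\beta$ is a fixed constant) and using that $k,C$ are integers gives $\maxcutop(G)\ge k\ge C$, completing the reduction; membership in $\NP$ is immediate.

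The main obstacle is the quantitative half of the soundness argument. It does not suffice that $f$ is maximized only at $\pm e_1$; one needs the explicit modulus-of-continuity bound $2-f(u)\gtrsim(1-\beta^{2})(1-u_1^{2})$, must amplify it through the star bundles to pin every center label polynomially close to $\pm e_1$, and must then bound the pollution from same-side original edges --- via the contraction property of $W$ --- tightly enough that the induced $\pm1$ cut undershoots the target by strictly less than one edge. Making all of these slacks close simultaneously while keeping $N$ polynomial is the delicate point; the completeness direction, the uniform treatment of $\gamma\le\beta$ (including $\beta=\gamma$, which the bound above already covers), and reassembling the three lemmas into \cref{thm:WlinMaxCut} via scaling and coordinate permutations are routine.
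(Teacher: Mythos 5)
Your proposal is correct and follows essentially the same route as the paper: reduce from \maxcut{} by attaching to each original vertex a polynomially large star of degree-one ancillae, use the fact that the top entry of $W$ is strictly largest to obtain a quantitative gap on the per-leaf contribution when the center label deviates from $\pm e_1$, and then round the center labels to a genuine $\pm 1$ cut whose value undershoots by less than one edge. The only differences are organizational — you derive the clean explicit bound $2 - f(u) \geq \tfrac12(1-\beta^2)(1-u_1^2)$ and run a single unified soundness estimate, whereas the paper fixes $K=n^9$ and splits into ``all centers near axis'' versus ``some center far'' — but the gadget, the completeness computation, and the key contraction and modulus-of-continuity inequalities all match.
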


\begin{proof}
  We reduce from the standard $\NP$-complete \maxcut{} problem.  For any graph
  $G=(V,E)$ with $n$ vertices and $m$ edges, we construct a graph $H=(V',E')$
  with $V \subseteq V'$ and $E \subseteq E'$ by, for each $v \in V$, adding
  $K=m^3 n$ ancilla vertices $v_i$, and then adding an edge from each of these
  vertices to $v$ so that $v$ is the center of a $K$-star.  Now $\abs{V'} = n(1
  + K)$ and $\abs{E'} = m + Kn$.

  We claim that, for any $C > 1$ and for large enough $n$, $\maxcutop(G)\geq C$ implies
  $\Wlinearmaxcut(H) \geq C+Kn$ and $\Wlinearmaxcut(H) > C+Kn-1/2$ implies
  $\maxcutop(G)\geq C$.

  First, suppose there is a cut of $G$ with value at least $C$. We construct a
  corresponding assignment of vectors to vertices in $V'$.  First assign the
  vector $(1,0,0)$ to all vertices in $V$ which are labeled $+1$ in $G$ and
  $(-1,0,0)$ to those with labels $-1$.  Then, for every vertex in $v\in V$,
  which by construction is at the center of a star of ancilla qubits in $H$,
  assign the vector opposite the one assigned to $v$ to each of the ancilla
  vertices.  This assignment of vectors gives an objective value of at least
  $C$ on the edges from the original graph and $Kn$ on the edges of the star
  gadgets, and so the $\Wlinearmaxcut$ value of this assignment is $C+Kn$.

  Now suppose there exists an assignment of vectors achieving $\Wlinearmaxcut$
  value greater than $C+Kn-1$ on $H$. We will show that the cut given by
  $\sgn(\hat{v}_x)$ for each $v \in V$, where $\hat{v} = (\hat{v}_x, \hat{v}_y,
  \hat{v}_z)$, (i.e.\ projecting $\hat{v}$ to the $x$-axis and checking whether
  it is $\ge 0$ or $< 0$) has value at least $C$. 

  First, for each $v \in V$, let $\widehat{\sgn}(\hat{v}) =
  (\sgn(\hat{v}_x),0,0)$.  We will show that this is close to $\widehat{v}$.
  Because the original graph can contribute at most $m$ to the $\Wlinearmaxcut$
  objective, and each star gadget can contribute at most $K$, each star gadget
  must contribute at least $K + (C - 1/2 - m) \geq K-1/2-m > K \parend*{1 -
  \frac{2m}{K}}$.  By \cref{lem:stars}, for the $K$-star to achieve at least $K
  \parend*{1 - \frac{2m}{K}}$, the vector $\hat{v}$ assigned to $v$ must
  satisfy
	\[
	\norm{W \hat{v} - \widehat{\sgn}(v)} \leq \delta
	\quad\text{for}\quad
	\delta = 2 \sqrt{\frac{2m}{K}}  \sqrt{\frac{1+\beta^2}{1-\beta^2}} .
	\]
  We will use this fact to show that the $\Wlinearmaxcut$ value earned by the
  vector assignment on the original graph $G$  is close to the value of the cut
  we defined. We have
	\begin{gather*}
		\abs{ \sum_{ij \in E} \norm{W\hat{\imath} - W\hat{\jmath}}
			- \sum_{ij \in E} \norm{\widehat{\sgn}(\hat{\imath})
				-\widehat{\sgn}(\hat{\jmath}) } }
		= \abs{ \sum_{ij \in E} \parend*{\norm{W\hat{\imath} - W\hat{\jmath}}
			- \norm{\widehat{\sgn}(\hat{\imath})-\widehat{\sgn}(\hat{\jmath})}} } \\
		\leq \abs{ \sum_{ij \in E} \parend*{\norm{W\hat{\imath} - \widehat{\sgn}(\hat{\imath})}
			+ \norm{W\hat{\jmath} - \widehat{\sgn}(\hat{\jmath})} }}
		\leq 2m \delta
		= 4 \sqrt{\frac{2}{n}} \sqrt{\frac{1+\beta^2}{1-\beta^2}} = \bO{1/\sqrt{n}}
	\end{gather*}
	which is $<1/2$ for large enough $n$.

  Using for a second time the fact that the edges of the star gadgets can
  contribute at most $Kn$ to the $\Wlinearmaxcut(H)$, the vector assignment
  must achieve at least $C-1/2$ on $G$, and so \[
  C - 1/2 \leq \frac{1}{2} \sum_{ij \in E} \norm{W\hat{\imath} - W\hat{\jmath}}
  \leq \frac{1}{2}\sum_{ij \in E}
  \abs{\widehat{\sgn}(\hat{\imath})-\widehat{\sgn}(\hat{\jmath})} +
  \bO{1/\sqrt{n}}
  \]
  implying that the value of our cut is strictly greater than $C - 1$ for
  sufficiently large $n$.  Therefore, as it is integer-valued it is at least
  $C$, concluding the proof.
\end{proof}

Finally, we give \cref{lem:WlMC_twoMax}, in which $W=\diag(1,1,\gamma)$ and
$1>\gamma$.
Our proof combines the techniques in the previous two proofs.
Similar to the proof of \cref{lem:WlMC_allEqual}, we show hardness by reducing from \threeColoring{}.
Our construction begins by replacing every edge in the graph with a 3-clique.
Then, as in the proof of \cref{lem:WlMC_oneMax},
we insert large star gadgets on every vertex, forcing solutions
away from the low-weight $z$-axis.
With solutions restricted to two dimensions, we are able to argue
there is an assignment to the new graph that simultaneously (nearly) maximizes all of
the cliques iff the original graph was 3-colorable.

\begin{lemma}\label{lem:WlMC_twoMax}
	For any $W=\diag(1,1,\gamma)$ with $1 > \gamma \geq 0$, \WlinearmaxcutLong{} is $\NP$-hard.
\end{lemma}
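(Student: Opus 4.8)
The plan is to reduce from \threeColoring{}. Given a graph $G=(V,E)$, I would form $G'$ by ``replacing every edge with a triangle'': for each $\{i,j\}\in E$ add a fresh vertex $m_{ij}$ together with edges $\{i,m_{ij}\},\{j,m_{ij}\}$, so that $T_{ij}=\{i,j,m_{ij}\}$ becomes a $3$-clique of $G'$. Writing $\operatorname{per}(p,q,r)=\|p-q\|+\|q-r\|+\|r-p\|$ for a triangle's perimeter, and noting that each $m_{ij}$ has degree $2$ in $G'$ so that the triples $\{T_{ij}\}_{\{i,j\}\in E}$ partition $E(G')$, one gets
\[
  2\,\Wlinearmaxcutop(G')=\max\ \sum_{\{i,j\}\in E}\operatorname{per}\!\big(W\hat\imath,W\hat\jmath,W\hat m_{ij}\big),
\]
where the maximum is over assignments of unit vectors to the vertices of $G'$. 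The reduction would then output $G'$ together with the threshold $C=\tfrac{3\sqrt3}{2}\,|E|$.

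The heart of the argument is an exact analysis of a single triangle for $W=\diag(1,1,\gamma)$ with $\gamma<1$. I would first observe $\|W\hat x-W\hat y\|^2=\|\hat x-\hat y\|^2-(1-\gamma^2)(x_3-y_3)^2$, hence $\|W\hat x-W\hat y\|\le\|\hat x-\hat y\|$ with equality iff $x_3=y_3$; this ``collapses'' a triangle of $W$-images onto $S^2$ without increasing its perimeter. Next, any three points of $S^2$ lie on a common circle of radius $\rho\le1$, and a triangle inscribed there has perimeter $2\rho(\sin a+\sin b+\sin c)\le 2\rho\cdot\tfrac{3\sqrt3}{2}$ (half-arcs $a+b+c=\pi$, by concavity of $\sin$), with equality iff it is equilateral and $\rho=1$. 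Combining the two equality conditions — equilateral on a great circle, plus $\hat\imath_3=\hat\jmath_3=\hat m_3$ — forces the great circle to be the equator (a tilted great circle meets each horizontal plane in at most two points). So $\operatorname{per}(W\hat\imath,W\hat\jmath,W\hat m)\le 3\sqrt3$, with equality \emph{iff} the three vectors form an equatorial $120^\circ$ triple (a rotated copy of the cube roots of unity inside the equatorial plane). Summing over the $|E|$ triangles gives $\Wlinearmaxcutop(G')\le C$ always, with equality iff some assignment makes every $T_{ij}$ simultaneously optimal (the supremum is attained, by compactness).

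It then remains to check that $\Wlinearmaxcutop(G')=C$ iff $G$ is $3$-colorable. For ($\Leftarrow$): from a proper colouring $c\colon V\to\{0,1,2\}$, fix one equatorial triple $\{c_0,c_1,c_2\}$, set $\hat v=c_{c(v)}$, and set $\hat m_{ij}$ to the color not used by $c(i),c(j)$ (well-defined as $c(i)\ne c(j)$); every $T_{ij}$ is then optimal. For ($\Rightarrow$): in an optimal assignment every $T_{ij}$ is optimal, so each edge's triple $\mathcal P_{ij}=\{\hat\imath,\hat\jmath,\hat m_{ij}\}$ is an equatorial $120^\circ$ triple with $\hat\imath\ne\hat\jmath$; since two such triples sharing a vector must coincide (their ``phases'' differ by a multiple of $120^\circ$), all edges of a connected component of $G$ carry the same triple $\mathcal P$, every vertex of that component lies in $\mathcal P$, and identifying $\mathcal P$ with $\{0,1,2\}$ (coloring isolated vertices arbitrarily) recovers a proper $3$-colouring of $G$. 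This gives the equivalence, and since $G'$ and $C$ are polynomial-time computable, \WlinearmaxcutLong{} with $W=\diag(1,1,\gamma)$, $1>\gamma\ge0$, is $\NP$-hard.

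I expect the main obstacle to be the \emph{sharp} single-triangle claim — not merely that a triangle contributes $\le\tfrac{3\sqrt3}{2}$ to $\Wlinearmaxcutop$, but that equality rigidly pins its three vectors to an equatorial $120^\circ$ triple, since the whole reduction rests on this rigidity being forced and then propagating along $G$. The contraction inequality and the Jensen/concavity step are routine; the delicate point is confirming that equality in \emph{both} steps forces the equator itself, not just some great circle — that is where I would invoke the ``at most two points at equal height'' fact for tilted great circles.
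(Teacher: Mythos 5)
Your proposal is correct and follows essentially the same approach as the paper: the same edge-to-triangle gadget for a reduction from \threeColoring{}, the same observation that maximizing a triangle forces an equatorial $120^\circ$ triple (the paper splits the rigidity argument into two steps, first arguing that $u_z = v_z = w_z = 0$ by a rescaling/projection-to-equator argument and then appealing to a $2\pi/3$-angles fact, while you combine the contraction inequality with the inscribed-triangle/Jensen bound into a single sharp estimate $\operatorname{per}(W\hat\imath,W\hat\jmath,W\hat m)\le 3\sqrt3$), and the same propagation along connected components using uniqueness of the $120^\circ$ triple through a shared vector. Your version is somewhat more explicit about the single-triangle maximum and threshold, but the structure and content of the argument coincide with the paper's.
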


\begin{proof}
  We will reduce from \threeColoring{}. Consider any graph $G=(V,E)$ with $n$
  vertices and $m$ edges.  Construct a new graph $H'=(V',E')$ by, for each edge
  $ij\in E$, adding a vertex $k_{ij}$ and edges $ik_{ij}$ and $jk_{ij}$, so
  that each edge in $G$ corresponds to a 3-clique ($K_3)$ in $H'$.  Note that
  the $m$ cliques constructed this way are edge-disjoint. Now $H'$ has $n + m$
  vertices and $3m$ edges.  Next, construct $H'' = (V'',E'')$ by, for each
  vertex $v$ in $H'$, add $K=m^{6}$ ancilla vertices $v_i$, each connected to
  $v$ so that $v$ is the center of a $K$-star.  Now $H''$ has $(K+1)(n+m)$
  vertices and $3m+K(n+m)$ edges.

  We claim that if $G$ is 3-colorable, then $\Wlinearmaxcut(H'') \geq
  K(n+m)+3\sqrt{3}m$.  Conversely, we claim that if $\Wlinearmaxcut(H'') >
  K(n+m)+3\sqrt{3}m-\eps$, for an $\eps = \bOm{1/m^2}$ we will choose later,
  then $G$ is 3-colorable.  As testing 3-colorability is $\NP$-hard, this will
  prove the result.

  First, suppose $G$ is 3-colorable.  Let $C$ be any set of three vectors in
  the $xy$-plane achieving the maximum value of
  $\Wlinearmaxcut(K_3)=3\sqrt{3}$.  Given any 3-coloring of $G$, we assign one
  of these vectors to each color, and thus assign vectors from $C$ to $G$ with
  no two adjacent vertices having the same vector.  We extend this assignment
  to $H'$ by, for each of our constructed 3-cliques $i,j,k_{ij}$, assigning the
  vector in $C$ that was not assigned to either $i$ or $j$ to $k_{ij}$.  Now
  each of these cliques contributes $3\sqrt{3}$ to the objective, and as there
  are $m$ of them and they are edge-disjoint, this contributes $3\sqrt{3}m$ to
  the objective.  Finally, we note that each vertex $v_i$ added in the
  construction of $H''$ is incident to exactly one vertex $v$ in $H'$, and no
  other vertices. So we will assign $-\hat{v}$ to $v_i$, where $\hat{v}$ is the
  vertex assigned to $v_i$, contributing $\norm{W\hat{v}} = 1$ to the objective
  value, as $\hat{v}$ is a length-$1$ vector in the $xy$-plane by construction.
  As there are $(n + m)$ vertices in $H'$, the edges added in constructing
  $H''$ will therefore contribute $K(n + m)$ to the objective.  So in total,
  there exists an assignment with value $3\sqrt{3}m+K(n+m)$.

    Now suppose there exists a vector assignment achieving greater than
    $K(n+m)+3\sqrt{3}m-\eps$ on $H''$.  For a vector
    $\hat{v}=(\hat{v}_x,\hat{v}_y,\hat{v}_z)$, let $\widehat{\sgn}(\hat{v})$ denote
    the vector $(\hat{v}_x,\hat{v}_y,0)/\norm{(\hat{v}_x,\hat{v}_y,0)}$.  We assign
    colors as follows.  For each connected component of $H'$, choose any of the
    $K_3$ gadgets $i,j,k_{ij}$ in the component corresponding to an edge $ij$ in
    the original graph.  Let $\hat{\imath},\hat{\jmath},\hat{k}_{ij}$ be the
    vectors assigned to the vertices, respectively. These will be our colors; we
    will call the set of them $\calC$. Each vertex $v$ in the component will be
    assigned whichever vector in $\calC$ is closest to $\widehat{\sgn}(\hat{v})$ as
    its color. We will show that  this gives a proper coloring of $H'$ and therefore
    of $G$.

    First we will show that for every vertex $v$ in $H'$, $\hat{v}$ is close to
    $\widehat{\sgn}(\hat{v})$. Then we will use this fact to show that the coloring
    above is proper. The objective value of the vector assignment consists of what
    is earned by the $m$ 3-cliques in $H'$, along with the $(n + m)$ $K$-stars in
    $H''$.  Because the cliques can each contribute at most $3\sqrt{3}$ to the
    objective value, the stars must contribute at least $K(n+m)-\eps$.  Similarly,
    because the edges of each star can contribute at most $K$, each star must
    achieve at least $K-\eps$.  For any vertex $s$ in $H'$, consider the star
    centered on it in $H''$.  As shown in \cref{lem:stars}, for the star to achieve
    $K(1-\eps/K)$, the vector $\hat{s}$ assigned to $s$ must satisfy
	\[
	\norm{W \hat{s} - \widehat{\sgn}(\hat{s})} \leq \delta
	\quad\text{for}\quad
	\delta = 2 \sqrt{\frac{\eps}{K}}  \sqrt{\frac{1+\gamma^2}{1-\gamma^2}} .
	\]
    This in turn implies that the total value earned by the vector assignment
    on $H'$ is close to what would be earned by the rounded vectors
    $\widehat{\sgn}(\hat{s})$.
	\begin{gather*}
		\abs{ \sum_{ij \in E'} \norm{W\hat{\imath} - W\hat{\jmath}}
			- \sum_{ij \in E'} \norm{\widehat{\sgn}(\hat{\imath})
				-\widehat{\sgn}(\hat{\jmath}) } }
		= \abs{ \sum_{ij \in E'} \norm{W\hat{\imath} - W\hat{\jmath}}
			- \norm{\widehat{\sgn}(\hat{\imath})-\widehat{\sgn}(\hat{\jmath})} } \\
		\leq \abs{ \sum_{ij \in E'} \norm{W\hat{\imath} - \widehat{\sgn}(\hat{\imath})}
			+ \norm{W\hat{\jmath} - \widehat{\sgn}(\hat{\jmath})} }
		\leq 2 \abs{E'} \delta = 12 \sqrt{\eps} m^{-2}  .
	\end{gather*}
    Now we will use this fact to prove that the coloring of $H'$ is proper.
    First note that because the stars can contribute at most $K(n+m)$ to the
    objective value, the vector assignment must achieve at least $3\sqrt{3}m-\eps$
    on the remaining edges, the ones in $H'$.  Therefore, for $\mu= \eps + 12
    \sqrt{\eps} m^{-2}$, the set of rounded vectors must achieve at least
    $3\sqrt{3}m-\mu$ on the rest of $H'$.  Because each of the $m$ clique gadgets
    can contribute at most $3\sqrt{3}$, the rounded vectors must achieve at least
    $3\sqrt{3}-\mu$ on each individual clique.

    We can now use a similar approach to the proof of \cref{lem:WlMC_allEqual}
    to show that, for each component of $H'$ and for every edge $uv$ in the
    component, $\widehat{\sgn}(\hat{u})$ and $\widehat{\sgn}(\hat{v})$ have
    different closest members of the set of colors $\calC$ we chose for that
    component.  The rounded vectors exist in the $xy$-plane, which means they are
    inscribed in the unit circle.  Maximizing the sum of the edge lengths on a
    $K_3$ gadget is equivalent to maximizing the perimeter of a triangle.  For a
    triangle in the unit circle to have nearly maximal perimeter, it must be nearly
    regular; as shown in \cref{lem:triangleNearRegular}, if the perimeter is at
    least $3\sqrt{3}-\mu$, then each edge length must be in the interval
    $\sqrt{3}\pm 3\sqrt{\mu}$.

    Now, we will prove by induction on $d$, the number of vertices in the
    shortest path from $v$ to the 3-clique $ijk$ we choose when assigning the colors,
    that every vertex $v$ in the component has $\widehat{\sgn}(\hat{v})$ within
    $\bO{d\sqrt{\mu}}$ of an element of $\calC$.  Combined with the previous fact
    about the perimeter, this will allow us to prove that the endpoints of
    every edge are assigned different colors. Clearly the result holds for $d = 1$,
    as $\widehat{\sgn}(\hat{v})$ is within $12\sqrt{\eps}m^{-2} \le \mu$ of
    $\hat{v} \in \calC$. Now suppose the result holds for $d$, and $u$ be a vertex
    with a length-$(d+1)$ shortest path to $ijk$. There is therefore a
    length-$(d+2)$ path from $u$ that ends in an edge included in $ijk$.  Choose
    vertices $v,w$ on that path so that $uv$ is the last edge of the path and $vw$
    is the previous one.  By the construction of $H'$, $u$ and $v$ are in a
    3-clique $uvx$, while $v$ and $w$ are in a different one $vwy$.  By the
    inductive hypothesis, $\widehat{\sgn}(\hat{v})$, $\widehat{\sgn}(\hat{w})$, and
    $\widehat{\sgn}(\hat{y})$ are each within $\bO{d\sqrt{\mu}}$ of an element of
    $\calC$. Therefore, by \cref{lem:triangleSharedVertices} with $ABC = vwy$ and
    $ADE = vux$, and using the previous observation about the edge lengths,
    $\widehat{\sgn}(\hat{u})$ is within $\bO{\sqrt{\mu}}$ of one of
    $\widehat{\sgn}(\hat{v})$, $\widehat{\sgn}(\hat{w})$, and
    $\widehat{\sgn}(\hat{y})$, and therefore within $\bO{(d+1)\sqrt{m}}$ of an
    element of $\calC$.

    We are now ready to prove that we have a proper coloring. Let $uv$ be any
    edge in $H'$, and let $\calC$ be the set of colors we chose for that component.
    By the above, each of $\widehat{\sgn}(\hat{u})$ and $\widehat{\sgn}(\hat{v})$
    is within $\bO{m\sqrt{\mu}} = \bO{\eps^{1/4} + m\sqrt{\eps}}$ of an element of
    $\calC$. But they are also at least $3\sqrt{3} - \mu = 3\sqrt{3} -
    \bO{\sqrt{\eps}}$ from each other by our earlier observation about the
    perimeter. So as long as we choose $\eps = \bT{m^{-2}}$ with a sufficiently
    small constant, these will always be different elements of $\calC$ and so $u$
    and $v$ will always be assigned different colors.
\end{proof}

We now conclude with a proof of the main theorem of this section, that
\WlinearmaxcutLong{} is $\NP$-complete for any fixed diagonal
$3\times 3$ non-negative nonzero $W$.

\begin{proof}[Proof of \cref{thm:WlinMaxCut}]
	The containment of \WlinearmaxcutLong{} for any diagonal $W$ is straightforward.
	With $W$ a constant, given a claimed vector assignment,
	the value $\Wlinearmaxcut(G)$ can be verified in time linear in the number of edges.

	To show hardness, we make two simplifications.
	First, because $\Wlinearmaxcut[cW] = c \Wlinearmaxcut[W]$ for a constant $c$,
	we can easily reduce to an instance in which we assume the largest entry of
	$W$ equals $1$.
	Second, although rearranging the entries of diagonal $W$ requires changing any
	vector assignment, it does not change the objective value.
	So
	$\Wlinearmaxcut[\diag(\alpha,\beta,\gamma)] =
	\Wlinearmaxcut[\diag(\beta,\alpha,\gamma)]$
	and any other rearrangement of $W$,
	and we can assume the entries are ordered
	$1\geq \alpha \geq \beta \geq \gamma \geq 0$.
	With this, the theorem follows by
	\cref{lem:WlMC_allEqual,lem:WlMC_twoMax,lem:WlMC_oneMax}.
\end{proof}

\section{NP-Hardness of Unweighted Quantum Max-Cut}
\label{sec:qmc}
Our lower bounds elsewhere in this paper are for local Hamiltonian problems in
which terms can be given positive \emph{or} negative weight. They apply in the
model where all weights are restricted to being of constant weight but require
some terms with negative weight to work. In this section we show that this
restriction can be removed for one of the best-studied local Hamiltonian
problems: the \qmcLong{} (or \qmc{}) problem.

The \qmcLong{} problem can be defined as \sLH{} with $\calF = \{XX+YY+ZZ\}$.
We will write $h$ for  $XX+YY+ZZ$.\footnote{Other work frequently includes
a multiplicative factor, e.g.\ $1/2$, in the definition of $h$ and/or of
\qmc{}.}

\begin{definition}[\qmcLong{} (\qmc{})]
	Given a Hamiltonian $H= \sum^n_{ij} w_{ij} H_{ij}$ acting on $n$ qubits where each $H_{ij} = I - h_{ij}$, all $w_{ij}$ are real, polynomially-bounded, and specified by at most $\poly(n)$ bits,
	and two real parameters $b,a$ such that $b-a\geq 1/\poly(n)$,
	decide whether $\lambda_{\max}(H)$ is at least $b$ (\YES{}) or at most $a$ (\NO{}).
\end{definition}

Note that we have written  \qmc{} as a \emph{maximum} eigenvalue problem (with
a flip and shift of the local terms) rather than in terms of the minimum
eigenvalue as for \kLH{} in \cref{def:kLH}. This is to follow the norm in
previous work; note that as both the terms \emph{and} the objective
function are flipped, an instance of the problem defined this way will be
equivalent to an instance of the corresponding \kLH problem with the same
weights.

When minimizing (maximizing) the eigenvalue and the weights are restricted to
be non-negative (non-positive), it is referred to as the anti-ferromagnetic
Heisenberg model.  Flipping the restrictions, e.g.\ minimizing with
non-positive weights, is referred to as the ferromagnetic Heisenberg model. The
latter case is trivial when viewed as an optimization problem (as $\ket{00}$
earns $0$ on the local term, and so the problem is optimized by assigning
$\ket{0}$ to every qubit), so we will be interested in hardness results for the
former.

It is straightforward to verify that for two qubits $a,b$ with \emph{pure} states $\rho^a,\rho^b$,
\[
	\tr\left(\rho^a\otimes \rho^b ~ h \right)
	= r^a\cdot r^b
	= 1- \frac{1}{2} \norm{r^a-r^b}^2,
\]
where $r^a,r^b$ are the corresponding Bloch vectors.  This shows that deciding
\qmc{} restricted to product states, which we denote $\qmcProd$, is equivalent
to the standard \prob{Vector Max-Cut} problem in three dimensions:

\begin{definition}[\kmaxcutLong (\kmaxcut)] \label{def:kmaxcut}
  Given an $n$-vertex graph $G=(V,E)$ and thresholds $b > a \ge 0$ such that
  $b-a\geq 1/\poly(n)$, decide whether
	\[
	\kmaxcutop[k] = \frac{1}{2} \max_{\hat{\imath} \in S^{k-1}} \sum_{ij \in E} \paren{1 - \hat{\imath}\cdot \hat{\jmath}}
	= \frac{1}{4}\max_{\hat{\imath} \in S^{k-1}} \sum_{ij \in E} \norm{\hat{\imath} - \hat{\jmath}}^2
	\]
	is at least $b$ or less than $a$.
\end{definition}

Note that this is different from the \WlinearmaxcutLong{} we studied in
\cref{sec:vectorProblems} as it considers \emph{squared} distances.
Furthermore, while \maxcut{} is a classic $\NP$-complete problem, $\kmaxcut[k]$
is not expected to be hard for all values of $k$, and in particular is
tractable when $k=n=\abs{V}$.

Our  main result classifying $\sprodLH$ immediately implies $\sprodLH[\{h\}]$,
and therefore also $\qmcProd$ and $\kmaxcut[3]$, is $\NP$-complete.  However,
the proof of \cref{lem:symmetricProdLH} utilizes Hamiltonian gadgets involving
negative weights.  This leaves open whether \qmcProd{} and \kmaxcut[3] remain
$\NP$-hard on unweighted graphs.  We now prove that \kmaxcut[3] is $\NP$-hard
even when restricted to positive unit weights. This is the first published
proof of this fact, although we note that a sketch of a different proof was
previously known for this specific problem~\cite{johnWright}.

Our approach is similar to the proof of \cref{lem:WlMC_allEqual}, which demonstrated
hardness of \Wlinearmaxcut{} for $W=I$ by replacing every edge with a
4-clique (with one vertex connected to a source vertex), and showing that the resulting graph could simultaneously optimize all of these 4-cliques (by assigning vectors corresponding to a regular tetrahedron) if and only if the original graph was 3-colorable.

However, the change in objective function from distances to \emph{squared}
distances causes a problem: while the regular tetrahedron is the unique optimal
solution for $\Wlinearmaxcut(K_4)$, in the case of $\kmaxcut[3](K_4)$, setting
$v_1=v_2=(1,0,0),v_3=v_4=(-1,0,0)$ would also be optimal. So we replace every
edge of the 4-cliques with triangles, which penalize the degenerate solution
and force the vectors assigned to the vertices of the 4-cliques toward regular
tetrahedra.

\maxCutthreeNPcomplete*
\begin{proof}[Proof of \cref{thm:3MaxCutNPcomplete}]
  Clearly $\kmaxcut[3]$ is in $\NP$.  To show hardness, we reduce
  $\threeColoring$ to $\kmaxcut[3]$.  Given a graph $G = (V,E)$ on $n$ vertices
  and $m$ edges, we first construct $H' = (V,E)$ by replacing every edge in $E$
  with a copy of $K_4$.  For each edge $ij\in E$, we add vertices $q_{ij},t_{ij}$
  and add edges to form a 4-clique.  Now we construct $H''$ from $H'$ by
  replacing every edge in $E'$ with a copy of $K_3$.  For each $ij\in E'$ we add
  a vertex $r_{ij}$ and edges $ir_{ij}$ and $jr_{ij}$.  Finally, we add a
  \emph{sink} vertex $t$ and for every edge $ij\in E$ (that is, in the original
  graph), add edge $tt_{ij}$.  Let $R$ denote a copy of $K_4$ with each edge
  replaced by a copy of $K_3$, which we may call a ``tetrahedron with adjoined
  triangles''. $H''$ now consists of a copy of $R$ for every edge $uv$ in $G$,
  with $u,v$ as two of the $K_4$ vertices, and a sink vertex with an edge to each
  $K_4$, incident to one of the vertices that was \emph{not} in the original
  graph $G$.

  We claim that if $G$ has a proper 3-coloring then $\kmaxcut[3](H'') \geq
  m\kmaxcutop[3](R)+m$.  Conversely, we claim that if $\kmaxcut[3](H'') \geq m
  \kmaxcutop[3](R)+m - \eps$, for an $\eps = \bOm{1/m^2}$ we will choose later,
  then there is a 3-coloring of $G$.  By \cref{lem:starTetrahedronNearRegular},
  $\kmaxcutop[3](R)$ is $\frac{1}{4} (40 + 8\sqrt{3}) = 10+2\sqrt{3}$.

  First, suppose $G$ is 3-colorable. Fix a 3-coloring.
  Let $S$ consist of the following three unit vectors in $\mathbb{R}^3$:
  \[
  \left(\sqrt{8/9}, 0, -1/3\right),
  \left(-\sqrt{2/9}, \sqrt{2/3}, -1/3\right),
  \left(-\sqrt{2/9}, -\sqrt{2/3}, -1/3\right).
  \]
  Along with $(0,0,1)$, these are the four vertices of a regular tetrahedron
  inscribed in the unit sphere.  We associate each vector in $S$ with a color,
  and assign every vertex in $G$ the vector given by its color. Then, for the
  vectors in $H' \setminus G$ (the vectors $q_{ij}$ and $t_{ij}$ for $ij \in E$),
  we assign each $q_{ij}$ the vector in $S$ not assigned to $i$ or $j$, and
  assign $(0,0,1)$ to $t_{ij}$. Finally, for the vertices in $H'' \setminus H'$,
  we assign $(0,0,-1)$ to $t$ and, for each $ij \in E'$, assign $r_{ij}$ the
  unique vector in the sphere that is antiparallel to the vectors assigned to $i$
  and $j$.

  Now we calculate the objective value that this assignment achieves.  For
  vertex $i$, let $\hat{\imath}$ denote the vector we assign to it.  For any edge
  $ij$ in $G$, the vectors assigned to the associated copy of $K_4$ in $H'$
  correspond to (distinct) vertices of a regular tetrahedron, and it can be
  directly calculated that for any of the six edges $ab$, we have $\hat{a}\cdot
  \hat{b}=-1/3$. So we earn $6m(1 + 1/3)/2 = 4m$ on these edges. 

  Then, for the $12m$ edges $ir_{ij}, jr_{ij}$ we added when creating the
  $K_3$ copies in $H''$, each one goes between a vertex $i$ or $j$ that was
  assigned one of the regular tetrahedron vectors, and a vertex $r_{ij}$ that was
  assigned the vector antiparallel to the midpoint of that tetrahedron vector and
  another tetrahedron vector. By the rotational symmetry of a regular
  tetrahedron, we may assume these tetrahedron vectors were $\paren{-\sqrt{2/9},
  \sqrt{2/3}, -1/3}$ and $\paren{-\sqrt{2/9}, -\sqrt{2/3}, -1/3}$, so
  $\hat{r}_{ij} = \paren{\sqrt{2/9},0,1/3}/\sqrt{2/9 + 1/9} =
  (\sqrt{2/3},0,\sqrt{1/3})$, and so $\hat{\imath}\cdot \hat{r}_{ij} =
  \hat{\jmath} \cdot r_{ij} = -2/\paren{3\sqrt{3}} - 1/\paren{3\sqrt{3}} = -\sqrt{3}/3$. So we
  earn $12m(1+\sqrt{3}/3)/2 = 6m + 2m\sqrt{3}$ on these edges.
  
  Finally, for the $m$ edges $tt_{ij}$, we have $\hat{t}\cdot
  \hat{t}_{ij}=-1$, and so we earn $m$ on these edges. So adding these together,
  the $\kmaxcut[k]$ value of this assignment is $4m + 6m + 2m\sqrt{3} + m =
  m\kmaxcutop[3](R)+m$ as desired.

  Conversely, suppose there is a set of unit vectors $\hat{\imath}$ for $i \in V''$
  such that $\kmaxcut[3](H'') \geq m\kmaxcutop[3](R)+m - \eps$.  We will use
  this to construct a 3-coloring of $G$. For each component of $G$, choose an
  edge $ab$ in that component, and let $\calC$ denote the three vectors
  assigned to the vertices $a,b,r_{ab}$ in $H'$. We will use $\calC$ as our set
  of colors, assigning each vertex $v$ in $G$ the color closest (in Euclidean
  distance) to $\hat{v}$.

  We need to show that for every edge $ij$ in $G$, $i$ and $j$ were assigned
  different colors. First, we will show that the vectors assigned to these
  vertices are close to being the vertices of a regular tetrahedron.  The graph
  $H''$ is comprised of $m$ edge-disjoint copies of $R$ as well as $m$ edges
  $tt_{ab}$. The latter earn at most $m$ in total (as no edge can earn more than
  $1$). Therefore, for the total objective value to be greater than
  $m\kmaxcutop[3](K_3)+m-\eps$, each copy of $R$ must earn at least
  $\kmaxcutop[3](R)-\eps$.  By \cref{lem:starTetrahedronNearRegular}, for a copy
  of $R$ to earn at least $10+2\sqrt{3}-\eps$ (which means having the sum of its
  squared edge lengths be at least $40 + 8\sqrt{3} - 4\eps$), each side of the
  tetrahedron formed by the vectors assigned to its $K_4$ vertices must have a
  length in the interval $4/\sqrt{6}\pm \bO{\sqrt{\eps}}$.

  Next, we show that all the vectors $\hat{t}_{ij}$ are close to each other. As
  each copy of $R$ earns at most $\kmaxcutop[3](K_3)$, each of these edges must
  earn at least $1-\eps$. This means $\norm{\hat{t} - \hat{t}_{ij}}^2 \ge 4 -
  4\eps$, and so $\norm{\hat{t} - \hat{t}_{ij}} = \bO{\sqrt{\eps}}$, and
  therefore all of these vertices are within $\bO{\sqrt{\eps}}$ of one another.

  We will now show that, for every component of $G$ and every edge $ij$ in the
  component, the vectors $\hat{\jmath},\hat{\imath},\hat{r}_{ij}$ are each close
  to some vector in $\calC$. Specifically, if the shortest path from $ij$ to $ab$
  contains $d$ edges, we will show that each of these vertices is assigned a
  vector within $\bO{d\sqrt{\eps}}$ of some vector in $\calC$.

  We proceed by induction on $d$. Clearly this holds for $d = 1$, as then $ij =
  ab$. Now suppose it holds for $d$, and let $ij$ be the end of a length-$(d+1)$
  path starting at $ab$. Let $k$ be the vertex before $ij$ in this path. By
  induction, each of $\hat{k}, \hat{i}$ and $\hat{r}_{ki}$ is within
  $\bO{d\sqrt{\eps}}$ of some vector in $\calC$. Furthermore, $\hat{t}_{ki}$ is
  within $\bO{\sqrt{\eps}}$ of $\hat{t}_{ij}$ by the result in the previous
  paragraph. So using the edge length bound we derived from
  \cref{lem:starTetrahedronNearRegular}, we may apply
  \cref{lem:tetrahedronSharedVertices} with $ABCD =
  \hat{\imath}\hat{t}_{ij}\hat{\jmath}\hat{r}_{ij}$ and $AEFG =
  \hat{i}\hat{t}_{ki}\hat{k}\hat{r}_{ki}$ to conclude that $\hat{j}$ and
  $\hat{r}_{ij}$ are within $\bO{\sqrt{\eps}}$ of (different) members of
  $\cbracd{\hat{k},\hat{r}_{ki}}$, and therefore within $\bO{(d+1)\sqrt{\eps}}$ of
  vectors in $\calC$.

  Now, for any edge $ij \in E$, \cref{lem:starTetrahedronNearRegular} tells us
  that $\hat{\imath}$ and $\hat{\jmath}$ are $4/\sqrt{6} - \bOm{\sqrt{\eps}}$
  from one another, and so as each is within $\bO{m\sqrt{\eps}}$ of some element
  of $\calC$, they have \emph{different} closest elements of $\calC$ as long as
  $\eps$ is chosen to be a sufficiently small constant times $1/m^2$. So $i$ and
  $j$ are assigned different colors, concluding the proof.
\end{proof}

By the correspondence described earlier in the section, we immediately have the
desired bound on $\sprodLH$.
\qmcProdHard*

\section*{Acknowledgements}

Sandia National Laboratories is a multimission laboratory managed and operated by National Technology and Engineering Solutions of Sandia, LLC., a wholly
owned subsidiary of Honeywell International, Inc., for the U.S. Department of Energy’s National Nuclear Security Administration under contract DE-NA-0003525.
This work was supported by the U.S. Department of Energy, Office of Science, Office of Advanced Scientific Computing Research, Accelerated Research in Quantum Computing, Fundamental Algorithmic Research for Quantum Computing. O.P. was also supported by U.S. Department of Energy, Office of Science, National Quantum Information Science Research Centers, Quantum Systems Accelerator. JY was partially supported by Scott Aaronson's Simons Investigator Award.

\newpage
\bibliographystyle{alphaurl}
\bibliography{bibliography.bib}

\newpage
\appendix
\section{Geometry Lemmas}\label{sec:geometryLemmas}

First, we give a simple lemma regarding star graphs.
Our goal is to show that inserting star gadgets into a graph forces
maximal solutions to $\Wlinearmaxcut$ to be close to the highest weighted axes.

\begin{lemma}\label{lem:stars}
	Consider a star graph $S_K$ with center vertex $v$ and $K$ neighbors.
	Consider any $0\leq \eps\leq 1$ and any
	$W=\diag(1,w_2,w_3)$ with $1 \geq w_2 \geq w_3 \geq 0$ and $w_3 < 1$.

	Let $\Pi$ denote the projector onto the axes for which $w_i=1$
	(so the projector onto the $x$-axis or the $xy$-plane).
	Similarly, let $\lambda$ be the largest $w_i$ which is not equal to 1
	(either $w_2$ or $w_3$).
	Let $\widehat{\sgn}(\hat{v}) = \Pi v / \norm{\Pi v}$.

	If vectors are assigned to $S_K$ which achieve at least
	$\Wlinearmaxcutop(S_K) \geq K(1-\eps)$,
	and $\hat{v}$ is the vector assigned to $v$,
	then
	\[
	\norm{W \hat{v} - \widehat{\sgn}(\hat{v})} \leq \delta
	\quad\text{for}\quad
	\delta = 2\sqrt{\eps} \sqrt{\frac{1 + \lambda^2}{1-\lambda^2}} .
	\]
\end{lemma}
\begin{proof}
  Write $v = u + w$, where $u = \Pi v$. Then $\widehat{\sgn}(\hat{v})$ is
  orthogonal to $w$, and so
	\begin{align*}
		\norm{W\hat{v}-\widehat{\sgn}(\hat{v})}^2 &= \norm{W\hat{u} - \widehat{\sgn}(\hat{v})}^2  + \norm{W\hat{w}}^2 \\
    &= \norm{\hat{u} - \frac{\hat{u}}{\norm{\hat{u}}}}^2 + \norm{W\hat{w}}^2\\
    &= (1 - \norm{\hat{u}})^2 + \norm{W\hat{w}}^2\\
    &\le 1 - \norm{\hat{u}}^2 + \lambda^2\norm{\hat{w}}^2\\
    &= (1 + \lambda^2)\norm{\hat{w}}^2 .
	\end{align*}
  Then, the objective value earned by each edge $vx$ is 
	\begin{align*}
		\frac{1}{2}\norm{W(\hat{v}-\hat{x})} & \le \frac{1}{2}\norm{W\hat{v}} + \frac{1}{2}\\
    &= \frac{1}{2}\sqrt{\norm{W\hat{u}}^2 + \norm{W\hat{w}}^2} + \frac{1}{2}\\
    &\le \frac{1}{2}\sqrt{\norm{\hat{u}}^2 + \lambda^2\norm{\hat{w}}^2} + \frac{1}{2}\\
    &= \frac{1}{2}\sqrt{1 - (1 - \lambda^2)\norm{\hat{w}}^2} + \frac{1}{2}\\
    &\le \frac{1}{2}\sqrt{1 - \frac{1 - \lambda^2}{1 + \lambda^2}\norm{W\hat{v}-\widehat{\sgn}(\hat{v})}^2} + \frac{1}{2}\\
    &\le 1 - \frac{1 - \lambda^2}{4(1 + \lambda^2)} \norm{W\hat{v}-\widehat{\sgn}(\hat{v})}^2
	\end{align*}
  with the final line coming from the Taylor expansion of the square root
  function. By summing the objective value across all $K$ edges and comparing
  to our lower bound on $\Wlinearmaxcutop(S_K)$, we have $\eps \ge \frac{1 -
  \lambda^2}{4(1 + \lambda^2)} \norm{W\hat{v}-\widehat{\sgn}(\hat{v})}^2$, and so the result
  follows.
\end{proof}

Next, we study the geometry of triangles.
For a triangle $ABC$, let $\operatorname{L}(ABC)$ denote the sum of the edge lengths.
It is straightforward to show that for a triangle inscribed in a circle
of radius $r$, $\operatorname{L}(ABC)\leq 3\sqrt{3}r$,
which is uniquely achieved by an equilateral triangle
\cite{maehara2001total}.

\begin{lemma}\label{lem:triangleNearRegular}
	Consider a triangle $ABC$ inscribed in the unit circle and any
	$0\leq \eps < 1$.
	If $\operatorname{L}(ABC) \geq 3\sqrt{3} - \eps$,
	then each edge length is in the interval
	$\brac{\sqrt{3} - 3\sqrt{\eps},\sqrt{3} + 3\sqrt{\eps}}$.
\end{lemma}
\begin{proof}
  For the sake of contradiction, suppose $\operatorname{L}(ABC) \geq
  3\sqrt{3}-\eps$ and that there exists an edge length outside the interval
  $\brac{\sqrt{3} - 3\sqrt{\eps},\sqrt{3} + 3\sqrt{\eps}}$. Now, as
  $\operatorname{L}(ABC) \in \brac{3\sqrt{3} - \eps, 3\sqrt{3}} \subseteq
  \brac{3\sqrt{3} - \sqrt{\eps}, 3\sqrt{3}}$, if this edge is shorter than
  $\sqrt{3} - 3\sqrt{\eps}$ there is another edge of length greater than
  $\sqrt{3} + \sqrt{\eps}$, while if it is longer than $\sqrt{3} + 3
  \sqrt{\eps}$ there is an edge of length shorter than $\sqrt{3} -
  \sqrt{\eps}$. So in either case there exists a pair of edge lengths whose
  difference is greater than $4\sqrt{\eps}$.

	We relabel the triangle so that $\norm{AC}\geq \norm{AB} \geq \norm{BC}$ and
	$\norm{AC}-\norm{BC} > 4\sqrt{\eps}$.
	We can rotate the unit circle as necessary so that $ABC$ is of the form
	\[
	A=(a,b) \quad B=(-a,b) \quad C=(c_x,c_y) 
	\]
  with $b \ge 0$. Define $C' = (0, -1) $.  We will show
  $\operatorname{L}(ABC')-\operatorname{L}(ABC) > \eps$, implying
  $\operatorname{L}(ABC)$ is less than $3\sqrt{3}-\eps$, a contradiction.

	Both $\operatorname{L}(ABC)$ and $\operatorname{L}(ABC')$ are sums of 3 edge lengths.
	The edge $AB$ is shared and we have that $\norm{AC'}=\norm{BC'}$,
	so their difference is $2\norm{AC'}-\norm{BC}-\norm{AC}$.
	As $a^2 + b^2 = c_x^2 + c_y^2 = 1$, we have
	\begin{align*}
		\norm{AC'} &= \sqrt{a^2 + (b + 1)^2}\\
               &= \sqrt{2+2b}\\
    \norm{BC}  &= \sqrt{(a + c_x)^2 + (b - c_y)^2}\\
               &= \sqrt{2 + 2ac_x - 2bc_y}\\
		\norm{AC}  &= \sqrt{(a - c_x)^2 + (b - c_y)^2}\\
               &= \sqrt{2 - 2ac_x - 2bc_y}
	\end{align*}
	and so
	\begin{align*}
    \paren{\norm{BC}+\norm{AC}}^2 + \paren{\norm{BC}-\norm{AC}}^2 &=
    2\norm{BC}^2 + 2\norm{AC}^2\\
    &= 8 - 8bc_y\\
    &\le 4\norm{AC'}^2
	\end{align*}
  as $b > 0$ and $\abs{c_y} \le 1$. Putting these together, and using the fact
  that no edge can be longer than $2$, we have
  \begin{align*}
  \operatorname{L}(ABC')-\operatorname{L}(ABC) &= 2\norm{AC'}-\norm{BC}-\norm{AC}\\
  &= \frac{4\norm{AC'}^2 - \paren{\norm{BC} + \norm{AC})}^2}{2\norm{AC'} + \norm{BC} + \norm{AC}}\\
  &\ge \frac{\paren{\norm{BC} - \norm{AC}}^2}{8}\\
  &> 2 \eps
  \end{align*}
	completing the proof.
\end{proof}

\begin{lemma}\label{lem:triangleSharedVertices}
	Consider triangles $ABC$ and $ADE$ inscribed in the unit circle.
	If all edges of the triangle have lengths in the interval
	$\sqrt{3}\pm \delta$,
	then the points $\{B,C\}$ are each within $\bO{\delta}$
	distance of (different) points in $\{D,E\}$.
\end{lemma}
\begin{proof}
  Because these are vectors restricted to a unit circle, we can assume $A =
  (1,0,0)$.  Given $A$ is fixed, we can characterize the constraints on the
  other points as follows.  $B,C,D,E$ must lie in the intersection of the unit
  circle around the origin with an annulus centered at $A$, with radii
  $\sqrt{3}- \delta$ and $\sqrt{3}+ \delta$.  We may assume $\sqrt{3} + \delta
  < 2$ (as otherwise the lemma holds trivially), and so this intersection will
  consist of two disjoint segments of the unit circle, each starting at a point
  $\sqrt{3} - \delta$ from $A$ and ending at a point $\sqrt{3} + \delta$ from $A$.

  For both regions, we want to bound the maximum distance between two points in
  the region.  This distance is the length of a chord from the point closest to
  $A$ to the point farthest from $A$.

  For a chord of length $d$, the internal angle is $2\arcsin(d/2)$; conversely,
  if its internal angle is $\theta$, its length is $2\sin(\theta/2)$.
  Therefore, the internal angles of the chords connecting $A$ to the closest
  and furthest point of a given segment are $2\arcsin((\sqrt{3} \pm
  \delta)/2)$. By the Taylor series of the $\arcsin$ function, the internal
  angle of the chord between the furthest and most distant point of the segment
  is therefore \[
    2\arcsin(\sqrt{3} + \delta) - 2\arcsin(\sqrt{3} - \delta) = \bO{\delta}
  \]
  and so the distance between those points is $\bO{\delta}$.

  Again using the fact that the lemma holds trivially when $\delta$ is lower
  bounded by a constant, we may assume that this distance is smaller than
  $\sqrt{3} - \delta$, and so $B,C$ are in different segments from each other,
  as are $D, E$, and so as any two points in the same segment are within
  $\bO{\delta}$ of each other the lemma follows.
\end{proof}

Next, we transition to considering tetrahedra.
For a tetrahedron $ABCD$,
let $\operatorname{L}(ABCD)$ denote the sum of the edge lengths.
It is known that for a tetrahedron inscribed in a sphere of radius $r$,
$\operatorname{L}(ABCD)\leq 4\sqrt{6} r$,
and the maximum is uniquely achieved by a regular tetrahedron \cite{maehara2001total}.

\begin{lemma}\label{lem:tetrahedronNearRegular}
	Consider a tetrahedron $ABCD$ inscribed in the unit sphere and any $\eps \geq 0$.
	If $\operatorname{L}(ABCD) \geq 4\sqrt{6} - \eps$,
	then each edge length is in the interval $\brac{\frac{4\sqrt{6}}{6} -
4\sqrt{\eps},\frac{4\sqrt{6}}{6} + 4\sqrt{\eps}}$.
\end{lemma}
\begin{proof}
	For the sake of contradiction, suppose $\operatorname{L}(ABCD) \geq 4\sqrt{6}
	- \eps$ and that there exists an edge $e$ with length outside the given
	interval. First we will show that there is some pair of edges in the
	tetrahedron that share a vertex and differ in length by at least
	$3\sqrt{\eps}$. First consider the four edges sharing vertices with $e$. If any
	of them has a length differing by at least $3\sqrt{\eps}$ from the length of
	$e$ we have our pair. Otherwise, as $\operatorname{L}(ABCD) \in \brac{4\sqrt{6}
	- \eps, 4\sqrt{6}} \subseteq \brac{4\sqrt{6} - \sqrt{\eps}, 4\sqrt{6}}$, the
	final edge must have length differing by more than $6\sqrt{\eps}$ from $e$ in
	order for the sum to balance. But this means that one of the edges incident to
	both this edge and $e$ will differ in length from it by at least
	$3\sqrt{\eps}$.

	We can therefore relabel the tetrahedron so that $\norm{AC}-\norm{BC}>
	3\sqrt{\eps}$.  Next, without changing any lengths, we can rotate $ABCD$ such
	that $A,B$ are of the form $A=(a,-b,0)$, $B=(-a,-b,0)$.  Let
	$C=(c_x,c_y,c_z)$, $D=(d_x,d_y,d_z)$ and then define $C' = (0, c_y,
	\sqrt{1-c_y^2})$ and $D'=(0,d_y,-\sqrt{1-d_y^2}) $, We will show
	$\operatorname{L}(ABC'D')-\operatorname{L}(ABCD) > \eps$, implying
	$\operatorname{L}(ABCD)$ must be less than $4\sqrt{6}-\eps$.

	As $\norm{AC'} = \norm{BC'}$ and $\norm{AD'} = \norm{BD'}$, \[
  \operatorname{L}(ABCD) - \operatorname{L}(ABC'D') = \norm{AC} + \norm{AD} +
  \norm{BC} + \norm{BD} + \norm{CD} - 2\norm{AC'} - 2\norm{AD'} - \norm{C'D'}.
  \]
	We start by noting that
  \begin{align*}
  \norm{C'D'} &= \sqrt{(c_y -d_y)^2 + \paren{\sqrt{1 - c_y^2} + \sqrt{1 - d_y^2}}^2}\\
  &= \sqrt{c_x^2 + d_x^2 + (c_y -d_y)^2 + c_z^2 + d_z^2 + 2\sqrt{1 - c_y^2}\sqrt{1 - d_y^2}}\\
  &= \sqrt{(c_x - d_x)^2 + (c_y - d_y)^2 + (c_z - d_z)^2 + 2c_xd_x + 2c_zd_z + 2\sqrt{c_x^2 + c_z^2}\sqrt{d_x^2 + d_z^2}}\\
  &\ge \sqrt{(c_x - d_x)^2 + (c_y - d_y)^2 + (c_z - d_z)^2}\\
  &= \norm{CD}
  \end{align*}
  where the inequality follows from Cauchy-Schwarz. Now, we calculate
  \begin{align*}
    \norm{AC'} &= \sqrt{a^2 + (b + c_y)^2 + 1 - c_y^2}\\
               &= \sqrt{2+2bc_y}\\
    \norm{BC}  &= \sqrt{(a + c_x)^2 + (b + c_y)^2 + c_z^2}\\
               &= \sqrt{2 + 2ac_x + 2bc_y}\\
    \norm{AC}  &= \sqrt{(a - c_x)^2 + (b + c_y)^2 + c_z^2}\\
               &= \sqrt{2 - 2ac_x + 2bc_y}
	\end{align*}
	and likewise
	\begin{align*}
    \norm{AD'} &= \sqrt{2+2bd_y}\\
    \norm{BD}  &= \sqrt{2 + 2ad_x + 2bd_y}\\
    \norm{AD}  &= \sqrt{2 - 2ad_x + 2bd_y} .
  \end{align*}
	We will now proceed on similar lines to the proof of \cref{lem:triangleNearRegular}. We have
  \begin{align*}
    \paren{\norm{BC}+\norm{AC}}^2 + \paren{\norm{BC}-\norm{AC}}^2 &=
    2\norm{BC}^2 + 2\norm{AC}^2\\
    &= 8 + 8bc_y\\
    &= 4\norm{AC'}^2 .
  \end{align*}
	Putting these together, and using the fact that no edge can be longer than
	$2$, we have
	\begin{align*}
		2\norm{AC'}-\norm{BC}-\norm{AC} &= \frac{4\norm{AC'}^2 - \paren{\norm{BC} +
		\norm{AC})}^2}{2\norm{AC'} + \norm{BC} + \norm{AC}}\\
		& \ge \frac{\paren{\norm{BC} - \norm{AC}}^2}{8}\\
		& \ge 9\eps/8
	\end{align*}
  and by the same argument,
	\begin{align*}
		2\norm{AD'}-\norm{BD}-\norm{AD} &\ge \frac{\paren{\norm{BD} - \norm{AD}}^2}{8}\\
		& \ge 0
	\end{align*}
	and so putting the three inequalities together, we conclude that
	$\operatorname{L}(ABC'D')-\operatorname{L}(ABCD) > \eps$, contradiction.
\end{proof}

\begin{lemma}\label{lem:tetrahedronSharedVertices}
	Consider tetrahedra $ABCD$ and $AEFG$ inscribed in the unit sphere
	with $\norm{B-E}\leq \eps$.
	If all edges  of the tetrahedra have lengths in the interval
	$\frac{4\sqrt{6}}{6}\pm \delta$,
	then the points $\{C,D\}$ are each within $\bO{\delta+\eps}$ distance of
	(different) points in $\{F,G\}$.
\end{lemma}

\begin{proof}
  Fix the points $A$, $B$, and $E$. Note that there are two points on the
  sphere, $p_1$ and $p_2$, that are \emph{exactly} $\frac{4\sqrt{6}}{6}$ from
  both $A$ and $B$. By the triangle inequality and the bounds on the edge
  lengths, each of $C$, $D$, $F$, and $G$ are within $\delta + \eps$ of one of
  these points. Let $S_1$, $S_2$ be the $(\delta + \eps)$-balls around $p_1$,
  $p_2$ respectively, so that each of these points is in (at least) one of
  them. 

  The result holds trivially if $2\delta + \eps \ge \frac{4\sqrt{6}}{6}$, so
  assuming this is not the case, the lower bound on the edge lengths guarantees
  that $C$ and $D$ are in different choices of $S_1$, $S_2$, and likewise for
  $F, G$. So then as as $S_1$ and $S_2$ each have a diameter of $2\delta +
  2\eps$ the result follows.
\end{proof}

We will use the name ``tetrahedron with adjoined triangles'' to refer to a
three-dimensional tetrahedron such that for each edge there is an additional
point with which the ends of the edge are joined to form a triangle.  In total,
there are 10 vertices and 12 edges.  The shape will be assumed to be inscribed
in the unit sphere, so a vector describing any vertex of the tetrahedron or a
triangle is a unit vector.  For some instance $R$ of this shape with assigned
vertex locations, let $\operatorname{L}^2(R)$ denote the sum of the
\emph{squared} edge lengths.  The following lemma says that for
$\operatorname{L}^2(R)$ to be maximized, the edges of the tetrahedron (but not
the edges of the adjoined triangles) must approximate a regular tetrahedron.

\begin{lemma}\label{lem:starTetrahedronNearRegular}
  Given a tetrahedron with adjoined triangles $R$, $\operatorname{L}^2(R)
  \leq 40+8\sqrt{3}$, with equality iff the tetrahedron is regular.  Furthermore,
  if any edge of the tetrahedron has a length outside of the interval
  $\brac*{\frac{16}{\sqrt{6}}-\eps,\frac{16}{\sqrt{6}}+\eps}$, then
  \[
  	\operatorname{L}^2(R) < 40+8\sqrt{3} - \bOm{\eps^2} .
  \]
\end{lemma}

\begin{proof}
  Suppose the vertices of the tetrahedron are $V_1,V_2,V_3,V_4$,
  and the additional vertex forming a triangle with edge $V_iV_j$ is $v_{ij}$.
  Let $s_{ij}$ denote the length of edge $ij$.

  Fixing any $V_i$, $V_j$, the sum of the squared lengths of the triangle
  $V_i$, $V_j$, $v_{ij}$ is $s_{ij}^2$ plus the sum of the squared distances from
  $v_{ij}$ to $V_i$, $V_j$. The latter sum is maximized for fixed $V_i$, $V_j$ by
  the point $v_{ij}$ that maximizes the distance from the centroid $(V_i +
  V_j)/2$. This choice of $v_{ij}$ is (treating the sphere as centered at the
  origin) proportional to $-(V_i + V_j)$, i.e.\ it is on the opposite side of the
  sphere from the centroid of $V_i$ and $V_j$. Therefore, by Pythagoras the
  optimal sum of squared lengths is $s_{ij}^2 + 2(h^2 + (s_{ij}/2)^2)$, where $h$
  is the distance from $(V_i + V_j)/2$ to this point $v_{ij}$.  Then, again by
  Pythagoras, the distance between $(V_i + V_j)/2$ and the origin is $\sqrt{1 -
  s_{ij}^2/4}$, and so $h = 1 + \sqrt{1 - s_{ij}^2/4}$. So the optimal value of
  the sum of the squared lengths of the triangle $V_i$, $V_j$, $v_{ij}$ is given
  by \[
    t(s_{ij}) = \frac{3}{2}s_{ij}^2 + 2 + 4\sqrt{1 - s_{ij}^2/4} + 2 -
    \frac{s_{ij}^2}{2} = 4 + s_{ij}^2 + \sqrt{16 - 4s_{ij}^2} .
  \]
  Now, because $R$ is the disjoint union of the six triangles which each have
one of the tetrahedron's edges as a side, if we fix the tetrahedron edge
lengths and then choose the other vertices $v_{ij}$ optimally, we will have
  \[
  	\operatorname{L}^2(R) = \sum_{1 \le i < j \le 4} t(s_{ij}) .
  \]
  A regular tetrahedron inscribed in the unit sphere has edge lengths equal to
  $4/\sqrt{6}$. So, if $R$ is formed by taking a regular tetrahedron and then
  choosing the ideal points for the triangles, then $\operatorname{L}^2(R) = 6
  t (4/\sqrt{6}) = 40+8\sqrt{3}$.

  Now we will consider the case where the tetrahedron is irregular. First we
  note that the first and second derivatives of $t$ are given by
  \begin{align*}
    t'(s_{ij}) &= 2s_{ij}\paren{1 - \frac{1}{\sqrt{4 - s_{ij}^2}}}\\
    t''(s_{ij}) &= 2 - \frac{2}{\sqrt{4 - s_{ij}^2}} - \frac{2s_{ij}^2}{\paren{4 -
    s_{ij}^2}^{3/2}}\\
    &= 2 - \frac{8}{\paren{4 - s_{ij}^2}^{3/2}}
  \end{align*}
  giving us $t'(s_{ij}) > 0$ in the interval $\brac{0, \sqrt{3}}$ (and so $t$
  is increasing in that range) and $t''(s_{ij}) < 0$ in the interval $\brac{1.22,
  2}$ (and so $t$ is concave in that range). Furthermore, as the only zero of
  $t'$ is at $\sqrt{3}$, the maximum of $t$ in the range $\brac{0,2}$ (which is
  all possible lengths in the sphere) is at $\max(t(0), t(\sqrt{3}), t(2)) =
  t(\sqrt{3}) = 9$.

  Now, we can use these facts to bound $\operatorname{L}^2(R)$. First, consider
  the case where some edge length is less than $1.22$. By the above two facts, \[
    \operatorname{L}^2(R) \le t(1.22) + 5t(\sqrt{3}) < 8.7 + 45 <
    53.8 < 40 + 8\sqrt{3}
  \]
  and so the lemma holds in this case. Therefore, we can assume that every edge
  length is in the range where $t$ is concave. 

  Now, suppose that every edge has length at least $1.22$ and there is some
  edge length outside of $\brac{4/\sqrt{6} - \eps, 4/\sqrt{6} + \eps}$.
  By~\cite{maehara2001total}, the maximum sum of the (unsquared) edge-lengths of
  a tetrahedron is $4\sqrt{6}$, and so there is some edge length $s_{ab} <
  4/\sqrt{6} - \eps/5$. We will now show that maximising $\sum_{1 \le i < j \le
  4} t(s_{ij})$ under the constraint $\sum_{1 \le i < j \le 4} s_{ij} \le
  4\sqrt{6}$ gives a value of at most $40 + 8\sqrt{3} - \bOm{\eps^2}$, proving
  the lemma.

  We will start by showing that, without loss of generality, we may assume that
  $\sum_{1 \le i < j \le 4} s_{ij} - s_{ab} = 4\sqrt{6} - s_{ab}$. Because
  $s_{ab} \ge 1.22$, we know there is some other edge with length at most
  $(4\sqrt{6} - 1.22)/5 < 1.72 < \sqrt{3}$, and so it is in the range where $t$
  is increasing. Therefore, increasing that length towards $\sqrt{3}$ would
  increase $\sum_{1 \le i < j \le 4} t(s_{ij})$ (this may not correspond to a
  physically possible tetrahedron, but as we are proving an upper bound, this is
  not a problem).

  So we want to bound $\sum_{1 \le i < j \le 4} t(s_{ij})$ with a fixed $s_{ab}
  < 4/\sqrt{6} - \eps/5$, subject to $\sum_{1 \le i < j \le 4} s_{ij} =
  4\sqrt{6}$. As we already have that every $s_{ij} \ge 1.22$, and so we are in
  the range where $t$ is concave, this is maximized by setting every length other
  than $s_{ab}$ to be equal, giving \[
    \sum_{1 \le i < j \le 4} t(s_{ij}) = t(s_{ab}) + 5 t((4\sqrt{6} - s_{ab})/5).
  \]
  Now, as $s_{ab} \in (1.22, 4/\sqrt{6} - \eps/5)$  we can
  use the fact that  $t''$ is negative in the range $(1.22, 2)$ to conclude that
  $t'(x) > t'((4\sqrt{6} - x)/5)$ for all $x$ in that range, and therefore \[
    t(s_{ab}) + 5 t((4\sqrt{6} - s_{ab})/5) \le t(4/\sqrt{6} - \eps/5) +
    5t(4\sqrt{6} + \eps/25)
  \]
  and so, using the Taylor expansion of $t$ about $4/\sqrt{6}$, we can bound
  \begin{align*}
    \operatorname{L}^2(R) &\le t(4/\sqrt{6} - \eps/5) + 5t(4\sqrt{6} + \eps/25) \\
    &= t(4/\sqrt{6}) - \eps t'(4/\sqrt{6})/5 + \eps^2t''(4/\sqrt{6}) / (2 \times
    5^2) + \bO{\eps^3}\\
    &\phantom{=}+ 5\paren{t(4/\sqrt{6}) + \eps t'(4/\sqrt{6})/25 + \eps^2
    t''(4/\sqrt{6})/(2
    \times 25^2)} + \bO{\eps^3}\\
    &= 6t(4/\sqrt{6}) - \bOm{\eps^2} + \bO{\eps^3}
  \end{align*}
  as $t''(4/\sqrt{6}) < 0$. So there is some constant $C$ such that \[
    \operatorname{L}^2(R) < 40 + 8\sqrt{3} - \bOm{\eps^2}
  \]
  for all $\eps \le C$. Finally, for $\eps$ in the range $\brac{C, 1}$, and
  again using the fact that $t'(x) > t'((4\sqrt{6} - x)/5)$ in the range
  $(1.22,2)$, we have $\operatorname{L}^2(R) \le t(4/\sqrt{6} - C/5) +
  5t(4\sqrt{6} + C/25) < 40 + 8\sqrt{3} - \bOm{1}$, completing the proof of the
  lemma.
\end{proof}

\end{document}